\documentclass[pra,aps,twocolumn,superscriptaddress,longbibliography]{revtex4-1}
\usepackage[colorlinks=true, citecolor=red, urlcolor=blue ]{hyperref}
\usepackage{graphicx}
\usepackage{bm}
\usepackage{bbm}
\usepackage{amsmath,amsfonts}
\usepackage{amsthm}
\usepackage{hyperref}
\usepackage{xcolor}
\hypersetup{
    urlcolor=blue,
    citecolor=blue
           }

\usepackage{makecell}
\usepackage{derivative}
\usepackage{braket}
\theoremstyle{plain}
\usepackage{color}
\usepackage{amssymb}
\usepackage{amsthm}
\usepackage{amsfonts}
\usepackage{float}
\usepackage{tabularx}
\usepackage{graphicx}
\usepackage[utf8]{inputenc}

\usepackage{mathtools}
\usepackage{esvect}
\usepackage{wrapfig}
\usepackage{amsthm}
\usepackage{verbatim}
\usepackage{bbm}
\usepackage[normalem]{ulem}

\usepackage{enumitem}
\usepackage{fmtcount}
\usepackage{booktabs}
\usepackage{csquotes}
\usepackage{epsfig}

\usepackage{tabularx}
\usepackage{graphicx}
\usepackage{amsmath}
\usepackage{braket}
\usepackage{latexsym}
\usepackage{bm}
\usepackage{graphics,epstopdf}
\usepackage{enumitem}
\usepackage{fmtcount}
\usepackage{booktabs}
\usepackage{csquotes}
\usepackage{epsfig}

\theoremstyle{plain}

\def\bea{\begin{eqnarray}}
\def\eea{\end{eqnarray}}
\def\ba{\begin{array}}
\def\ea{\end{array}}

\def\beq{\begin{equation}}
\def\eeq{\end{equation}}

\usepackage[normalem]{ulem}
\usepackage{float}
\usepackage{graphicx}  
\usepackage{dcolumn}          
\usepackage{amssymb}
\usepackage{appendix}
\usepackage{physics}   
\usepackage{mathtools}
\usepackage{esvect}
\usepackage{wrapfig}
\usepackage{amsthm}
\usepackage{verbatim}
\usepackage{bbm}
\usepackage{soul}

\usepackage[mathscr]{euscript}

\def\({\left(}
\def\){\right)}
\def\[{\left[}
\def\]{\right]}

\newtheorem{theorem}{Theorem}

\renewcommand{\arraystretch}{1.3}

\begin{document}

\title{Homomorphic Quantum Error Correction}

\author{Kornikar Sen}
\affiliation{Departamento de Física Teórica, Universidad Complutense de Madrid, 28040 Madrid, Spain}

\author{Miguel A. Martin-Delgado}
\affiliation{Departamento de Física Teórica, Universidad Complutense de Madrid, 28040 Madrid, Spain}
\affiliation{CCS-Center for Computational Simulation, Campus de Montegancedo Universidad Politecnica de Madrid (UPM), 28660 Boadilla del Monte, Madrid, Spain}

\begin{abstract}
Homomorphic quantum error correction aims to protect quantum data against both unauthorized access and environmental noise during server-based processing. We investigate the algebraic compatibility between quantum homomorphic encryption and quantum error correction, determining precise conditions under which encrypted encoded states remain inside the relevant code space during storage and computation. Our work establishes a necessary and sufficient criterion for an $[[n,1,d]]$ stabilizer code to remain compatible with the restricted transversal block-Pauli masking $U_{\rm enc}(a,b)=(X^aZ^b)^{\otimes n}$, stated explicitly for $[[n,1,d]]$ codes and extending directly to code-space preservation for $[[n,k,d]]$ codes. We verify this condition for standard examples (bit-flip and Shor codes, with the phase-flip repetition code following analogously), derive a practical criterion for Calderbank-Shor-Steane codes, and extend the analysis to three-dimensional color codes. A critical challenge emerges for non-Clifford gate implementation: the Shor code lacks a naive transversal $T$-gate implementation of the desired logical operation on encrypted encoded data. We present two routes around this obstruction. First, suitable triorthogonal codes admit transversal $T$-type logical implementations, up to Clifford corrections. Second, logical-gate masking gives code-space compatibility for arbitrary stabilizer codes, provided that suitable unitary representatives of the required logical gates are available. These results separate code-space compatibility from a full cryptographic security proof and provide explicit criteria for combining error correction with homomorphic processing in cloud quantum computing.
\end{abstract}

\maketitle

\section{Introduction}
\label{sec1}
Quantum computation \cite{NielsenChuang2000,GalindoMartinDelgado2002RMP} offers clear advantages over classical approaches for a broad range of tasks. Experimental platforms based on superconducting qubits \cite{DiVincenzo_2000, superqubits,PhysRevB77024522,  qc, Wendin_2017,Gambetta_2017,GoogleColorCode2025}, trapped ions \cite{ PhysRevLett.74.4091, trapped_ion,PhysRevX.13.041052,Nigg2014TopologicalQubit}, and cold atoms \cite{cold_atom1,Bluvstein_2023,Evered2025NeutralAtomArchitecture} are actively being developed into increasingly powerful quantum processors. As these devices move from laboratory prototypes to practical deployment, cloud-based quantum computing has emerged as a compelling paradigm: users can access expensive quantum hardware over the internet, universalizing quantum computation much as classical cloud services standardized access to high-performance processors. Major providers, such as IBM, already offer remote access to their quantum devices through cloud interfaces. This distributed architecture, however, gives rise to two critical challenges. First, \textit{reliability}: quantum systems are extremely sensitive to environmental noise, and interactions with their surroundings lead to decoherence, bit flips, phase shifts, and other errors that rapidly corrupt computational outcomes. Second, \textit{privacy}: processing quantum data on remote servers exposes proprietary algorithms and inputs to potential eavesdropping, since a malicious server could probe intermediate quantum states during computation and extract sensitive information.

As a result, the distributed nature of cloud quantum computing raises a critical challenge: how can users entrust quantum computations to untrusted cloud servers while maintaining both reliability  and privacy? Traditional approaches address these challenges separately. Quantum Error Correction (QEC) protects quantum information against noise through encoding and syndrome measurements. On the other hand, quantum cryptography and encryption schemes protect against information leakage. Several recent works have explored related ways of combining Quantum Homomorphic Encryption (QHE) with QEC, including permutational-key schemes with homomorphic error correction and error-correctable CSS-code constructions \cite{OuyangRohde2022,Sohn2024ErrorCorrectable}. However, these approaches do not provide the explicit code-level compatibility criterion developed here: a necessary and sufficient algebraic condition determining when a given encryption mask preserves the error-correcting code space and therefore allows syndrome extraction and recovery to be performed directly on encrypted encoded data. Establishing this criterion, and using it to distinguish transversal physical masking from logical-gate masking, is the starting point of the present work. This tension between two fundamental requirements---cryptographic confidentiality and fault tolerance---forms the central motivation of this work.

The theory of QEC \cite{QEC1, Roffe_2019}, pioneered by Shor, Steane, and others, is well established. There are various QEC codes, among which the Shor code \cite{Shor} is one of the earliest codes, that can correct any single qubit error using nine physical qubits. The smallest error correction code that can correct any single qubit errors is the five-qubit code \cite{fivequbit}. The codes defined based on operators having a unit eigenvalue are called stabilizer codes \cite{stabilizer}. The topological codes, such as the toric
code \cite{toriccode1,toriccode2,toriccode3}, and the color code \cite{colorcode}, are examples of stabilizer codes. Stabilizer codes that are constructed based on classical codes are known as Calderbank-Shor-Steane (CSS) \cite{CSS1,CSS2}. The Steane \cite{CSS1} code is an example of a CSS code. 

Although there exist numerous QEC codes, the standard codes assume the quantum processor is under the user's control and trust. To address privacy concerns, homomorphic encryptions were introduced for quantum computing \cite{PhysRevLett109150501,Liang_2013,Tan_2016}. Such schemes allow a client to encrypt quantum data before sending it to a server, such that the server can perform computations on the encrypted state without decryption. In the full quantum one-time-pad (QOTP) setting, privacy is information-theoretic because independent Pauli keys randomize the encrypted quantum register. In the code-based constructions studied below, however, we often use a restricted block-Pauli or logical-Pauli masking. Our results should therefore be understood primarily as code-space compatibility and protocol-correctness statements. Here, protocol correctness means that, if the server follows the prescribed operations, then after error correction, key updates, and decryption, the client obtains the intended encoded logical state. It is not a substitute for a separate logical-level security definition.

Owing to the high efficiency of the QHE scheme introduced in Ref.~\cite{Liang30}, the scheme has become widely used for the homomorphic encryption of several well-known quantum algorithms. For instance, implementations of Grover's algorithm~\cite{Grover}, the recursive Bernstein--Vazirani algorithm~\cite{BV}, and semiclassical Szegedy walks~\cite{SQW} in cloud-based settings have been analyzed using this framework. A quantum ciphertext dimension--reduction method for homomorphically encrypted data, built upon this scheme, has also been proposed~\cite{Grover2}. Several QHE schemes have been experimentally demonstrated~\cite{exp1,exp2}.  

However, the scheme~\cite{Liang30} applies encryption at the physical qubit level using random Pauli operations. Crucially, it does not address whether encrypted data remains protected against noise through standard error-correction codes. 

The fundamental question we ask is: When encrypted quantum data encoded in an error-correcting code are processed by a remote server, can the server perform error detection and correction without learning the encryption key or disturbing the encoded computation? Furthermore, can the server execute arbitrary quantum algorithms (including non-Clifford operations) on such encrypted, error-protected data? Throughout the paper we focus on two related but distinct requirements in an honest-but-curious model: code-space compatibility, meaning that encryption and evaluation do not take the state outside the relevant error-correcting code space, and protocol correctness, meaning that decryption yields the intended logical output when the prescribed operations are followed. Security against active deviations requires an additional composable security analysis and is left outside the scope of this work. These questions arise from practical scenarios since cloud quantum systems will require both
(a) Secure transmission and storage (preventing information leakage) and (b) Reliable computation (detecting and correcting errors despite noise).

To address this question, first we focus on encrypted data storage in the cloud. We begin by providing examples of QEC codes that can be implemented on encrypted encoded data to protect it from noise or decoherence present in the server's environment. We present a necessary and sufficient condition on stabilizer codes and a corresponding necessary and sufficient CSS-code criterion for compatibility with the restricted transversal block-Pauli masking, $U_{\rm enc}(a,b)=(X^aZ^b)^{\otimes n}$, for data storage in the cloud. Shor and Steane codes are shown to satisfy these conditions. We also discuss how replacing transversal physical masking by logical Pauli masking preserves the code space for arbitrary stabilizer codes, at the cost of implementing code-dependent logical operations.

In the second part, we investigate quantum computation on encoded and encrypted data. Since genuine quantum advantage arises from non-Clifford operations, and implementing such gates within QHE schemes is non-trivial, we concentrate on applying the $T$ and $T^{\dagger}$ gates to encoded--encrypted quantum states. When encryption is implemented via transversal physical gates, we find that certain codes, such as the Shor code, do not implement the desired logical non-Clifford action by a naive transversal $T^{\otimes n}$. In contrast, suitable triorthogonal codes support transversal $T$-type logical implementations, possibly up to Clifford corrections, while preserving the code space. Finally, we show that, for the computational setting as well, arbitrary stabilizer codes can be treated at the logical level provided that suitable physical implementations of the corresponding logical gates are available.

Our contributions are threefold. First, for encrypted data storage, 
we derive a necessary and sufficient compatibility condition for stabilizer codes 
under transversal physical Pauli encryption, and we formulate a corresponding CSS-code criterion 
that is easy to verify in concrete examples. Second, for encrypted computation, we show that non-Clifford gates 
are not supported uniformly by arbitrary code families under transversal physical encryption, while triorthogonal codes 
provide an explicit compatible setting. Third, we show that if encryption is implemented through physical realizations 
of logical Pauli operators, then the protocol extends in principle to arbitrary stabilizer codes, at the cost of additional encoded-gate overhead.

The paper is organized as follows. Section \ref{sec2} reviews a standard quantum homomorphic encryption scheme. Section \ref{sec3} analyzes how to store encrypted data securely in the cloud using quantum error-correcting codes. Section \ref{sec4} investigates the successful implementation of non-Clifford gates on encoded--encrypted data in the cloud. Finally, Section \ref{sec5} presents our concluding remarks. The appendices collect the technical material supporting the main text: Appendix~\ref{A1} gives an explicit two-qubit circuit example; Appendix~\ref{A2} tabulates the encrypted bit-flip code amplitudes; Appendix~\ref{A3} checks the Steane-code CSS conditions; Appendix~\ref{X} analyzes the $[[15,1,3]]$ triorthogonal code and the tetrahedral 3D color-code extension; and Appendix~\ref{A4} gives the explicit unitary construction of logical representatives for the Shor-code logical gates.

\section{Homomorphic Encryption}\label{sec2}
We begin with a brief overview of the key concept that will be used in the rest of the paper. In particular, here we provide a review of the conventional homomorphic encryption scheme \cite{Liang30}. 
\subsection{Preliminaries}
Quantum algorithms are implemented via quantum circuits constructed from a universal gate set. While universal sets are not unique, any such set must contain at least one entangling gate (for creating multipartite correlations) and one non-Clifford gate 
(for achieving universal computation). The specific choice of universal set affects both 
the implementation overhead and the required error-correction strategies, making this 
choice consequential for practical quantum computing.

The Clifford gates that are usually included in traditional universal gate sets comprise the Pauli gates 
\begin{eqnarray}  X=\left(\begin{matrix}
0&1\\
1&0
\end{matrix}\right)\text{, } Z=\left(\begin{matrix}
1&0\\
0&-1
\end{matrix}\right), 
\end{eqnarray}
the shift and Hadamard gates \begin{eqnarray}
  S=\left(\begin{matrix}
1&0\\
0&i
\end{matrix}\right)\text{, }  
H=\frac{1}{\sqrt{2}}\left(\begin{matrix}
1&1\\
1&-1
\end{matrix}\right),
\end{eqnarray} and the entangling gate 
\begin{eqnarray}
CNOT=\left(\begin{matrix}
1&0&0&0\\
0&1&0&0\\
0&0&0&1\\
0&0&1&0
\end{matrix}\right).  \end{eqnarray}
The non-Clifford gate, which is commonly added to the set to make the universal set of gates complete, is 
\begin{eqnarray}
T=\left(\begin{matrix}
1&0\\
0&e^{i\pi/4}
\end{matrix}\right).
\end{eqnarray} 
Non-Clifford gates present a fundamental challenge for homomorphic encryption. By 
definition, non-Clifford gates fail to commute with all Pauli operators---that is, a 
non-Clifford gate $\mathcal{N}$ does not satisfy $\sigma \mathcal{N} = \mathcal{N}\sigma'$ (up to phase) for arbitrary Pauli 
operators $\sigma$, $\sigma'$. This non-commutativity means that applying non-Clifford gates on 
encrypted data cannot be handled via simple key-update rules, unlike Clifford gates. 
Instead, a more sophisticated mechanism is required. We now describe gate teleportation, 
the key technique that Liang's scheme \cite{Liang30} employs to overcome this challenge.

We employ the following notation conventions throughout this work: operators acting on physical qubits are denoted 
without decorations (e.g., $X$, $Z$, $H$, $S$, $T$, $CNOT$), while operators acting on logical qubits are 
denoted with a bar (e.g., $\bar{X}$, $\bar{Z}$). The set of single-qubit Clifford gates includes 
the Pauli gates $X$ and $Z$, the Hadamard gate $H$, and the phase gate $S$. CNOT is the only two qubit gate included in the universal gate set. The non-Clifford 
$T$ gate and its adjoint $T^\dagger$ complete the universal gate set. A logical qubit state is denoted $\ket{\bar{\psi}}$. For example, the logical Pauli-$X$ gate acting on logical basis states satisfies 
$\bar{X}\ket{\bar{0}}=\ket{\bar{1}}$ and $\bar{X}\ket{\bar{1}}=\ket{\bar{0}}$, which physically corresponds to a multi-qubit operation 
on the encoding of these states.

\subsection{Gate teleportation}
\label{GT}
Using a maximally entangled state shared between a sender and receiver, an unknown quantum state can be teleported from the former to the latter without physically transporting the quantum system \cite{PhysRevLett.70.1895}. In particular, to perfectly teleport an unknown qubit state, say $\ket{\psi}$, the sender and receiver must share a Bell state, say $\ket{\Phi}=\frac{1}{\sqrt{2}}(\ket{00}+\ket{11})$. In that case, the joint initial state of the three qubits can be expressed as
\begin{equation} \ket{\psi}\otimes\ket{\Phi}=\frac{1}{2}\sum_{a,b}\ket{\Phi_{ab}}\otimes X^aZ^b\ket{\psi},
\end{equation}
where $\ket{\Phi_{ab}}=Z^bX^a\otimes I\ket{\Phi}$, $a\in \{0,1\}$, and $b\in \{0,1\}$. Here $I$ denotes the two-dimensional identity operator. The set of states $\{\ket{\Phi_{ab}}\}_{a,b}$ forms a basis of the two-qubit Hilbert space. The sender performs a projective measurement in the Bell basis $\{\ket{\Phi_{ab}}\}_{a,b}$ on the 
qubit to be teleported and their half of the shared Bell state. This measurement has 
four possible outcomes, indexed by the classical bits $a, b\in \{0, 1\}$. Crucially, the 
measurement outcome determines the state of the receiver's qubit: it will be projected 
to $X^aZ^b\ket{\psi}$ (a deterministic function of the measurement result). The sender then 
communicates the classical outcome $(a, b)$ to the receiver via a classical channel. Upon 
receiving this information, the receiver applies the correction unitary $Z^bX^a$ to their 
qubit, recovering the original state $\ket{\psi}$. In this way, the unknown quantum state 
$\ket{\psi}$ has been transferred from sender to receiver using only classical communication 
(the bits $a$ and $b$) and a pre-shared entangled resource.

The same measurement-based protocol can be adapted to teleport not just quantum states, 
but quantum gates themselves---a variant called gate teleportation. This extension is 
crucial for implementing non-Clifford operations on encrypted data. To understand gate 
teleportation, consider the following setup: sender and receiver share a Bell state 
$\ket{\Phi}$, and the sender additionally holds a qubit prepared in state $\ket{\psi}$ and wishes to 
apply a unitary gate $U$ to that state. In the gate teleportation protocol, the measurement
is performed in a ``$U$-rotated Bell basis" $\{\ket{\Phi(U)_{ab}}\}_{a,b}$ rather than the standard Bell basis, $\{\ket{\Phi_{ab}}\}_{a,b}$. 
The joint three-qubit state can be decomposed as
\begin{equation}   \ket{\psi}\otimes\ket{\Phi}=\frac{1}{2}\sum_{a,b}\ket{\Phi(U)_{ab}}\otimes X^aZ^bU\ket{\psi},
\end{equation}
where $\ket{\Phi(U)_{ab}}=U^\dag\otimes I\ket{\Phi_{ab}}$. Clearly $\{\ket{\Phi(U)_{ab}}\}_{a,b}$ also forms a basis. In this case, the performance of the projective measurement with the $U$-rotated Bell basis, $\{\ket{\Phi(U)_{ab}}\}$, by the sender on the two qubits that belong to the sender will project the receiver's state to $X^aZ^bU\ket{\psi}$. After getting the classical information about the exact measurement outcomes, $a$ and $b$, the receiver can accordingly apply $Z^bX^a$ on the available qubit and get the state, $U\ket{\psi}$. A schematic diagram describing the gate teleportation process is presented in Figure \ref{fig1}.

The relevance of gate teleportation will be realized soon in the next part, where we describe the homomorphic encryption scheme in detail.
\begin{figure}[h]
    \centering
\includegraphics[width=0.45\textwidth]{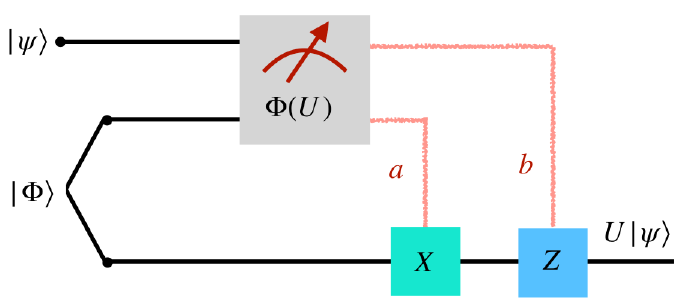}
    \caption{\textbf{Illustration of the process of gate teleportation.} The gray box represents the measurement performed on the two qubits that belong to the sender. Depending on the measurement outcome, $a$ and $b$, the receiver operates the gates $X^a$ (cyan box) and $Z^b$ (blue box) on the qubit that belongs to the receiver and gets the state, $U\ket{\psi}$.}
    \label{fig1}
\end{figure}
\subsection{Encryption, evaluation, and decryption}
Let us consider a situation where a client, $C$, wants to apply a quantum circuit on a state, $\ket{\psi}$, of $n$ qubits, $\{w_i\}$, in the cloud controlled by a server, $S$. The quantum circuit can be decomposed in terms of elements of the universal gate set, $\mathcal{U}=\{X,Z,S,H,CNOT,T,T^\dagger\}$. 
Consider a quantum circuit decomposed into $n_u$ universal gates that must be applied in 
a specific predefined order. Among these $n_u$ gates, suppose $n_T$ gates are non-Clifford 
operations (such as $T$ gates). Liang's homomorphic encryption scheme addresses the 
challenge of applying such circuits on encrypted data while maintaining both privacy and 
the ability to execute non-Clifford operations. The scheme operates via three sequential 
steps: (1) encryption, where the client encrypts the quantum data; (2) evaluation, where 
the server applies the circuit to the encrypted data; and (3) decryption, where the client 
recovers the result. In the following part, we describe each step in detail, beginning 
with the initial setup required for the protocol.

\begin{itemize}
    \item 
\textit{Setup.---} To apply $n_T$ non-Clifford gates, the server and client must generate $n_T$ Bell states, $\ket{\Phi}_{c_is_i}$. Here, $c_i$ and $s_i$ denote the qubits of the Bell state. We have used such notation because in the next step the client sends the qubits, $\{s_i\}$, to the server and keeps $\{c_i\}$.
\item
\textit{Encryption.---} 
In the first step, the client encrypts the data and sends the encrypted quantum state to the server along with the qubits, $\{s_i\}$. To encrypt, since $\ket{\psi}$ is an $n$-qubit state, the client generates $n$ pairs of numbers, say $\{(a_j,b_j)\}_{j=1}^n$, where both $a_j$ and $b_j$ are randomly and independently chosen from the set $\{0,1\}$. Depending on this set of pairs of binary numbers, the client applies the Pauli gates, $X$ and $Z$, to the state of the $n$ qubits $\ket{\psi}$ and transforms it into $\ket{\psi^{\text{enc}}}=\bigotimes_j X^{a_j}Z^{b_j}\ket{\psi}$. The pairs $\{(a_j,b_j)\}$ are called secret keys, as this set is the key to the actual state $\ket{\psi}$. In particular, the original state $\ket{\psi}$ can always be recovered from the encrypted state $\ket{\psi^{\text{enc}}}$, if all secret keys are known by applying $X$ and $Z$ accordingly on $\ket{\psi^{\text{enc}}}$. 
The encrypted state $\ket{\psi^{\text{enc}}}$ is finally sent to the server for cloud computing with the set of qubits $\{s_i\}$. 
\item
\textit{Evaluation.---}
\begin{figure}[t]
    \centering
\includegraphics[width=0.43\textwidth]{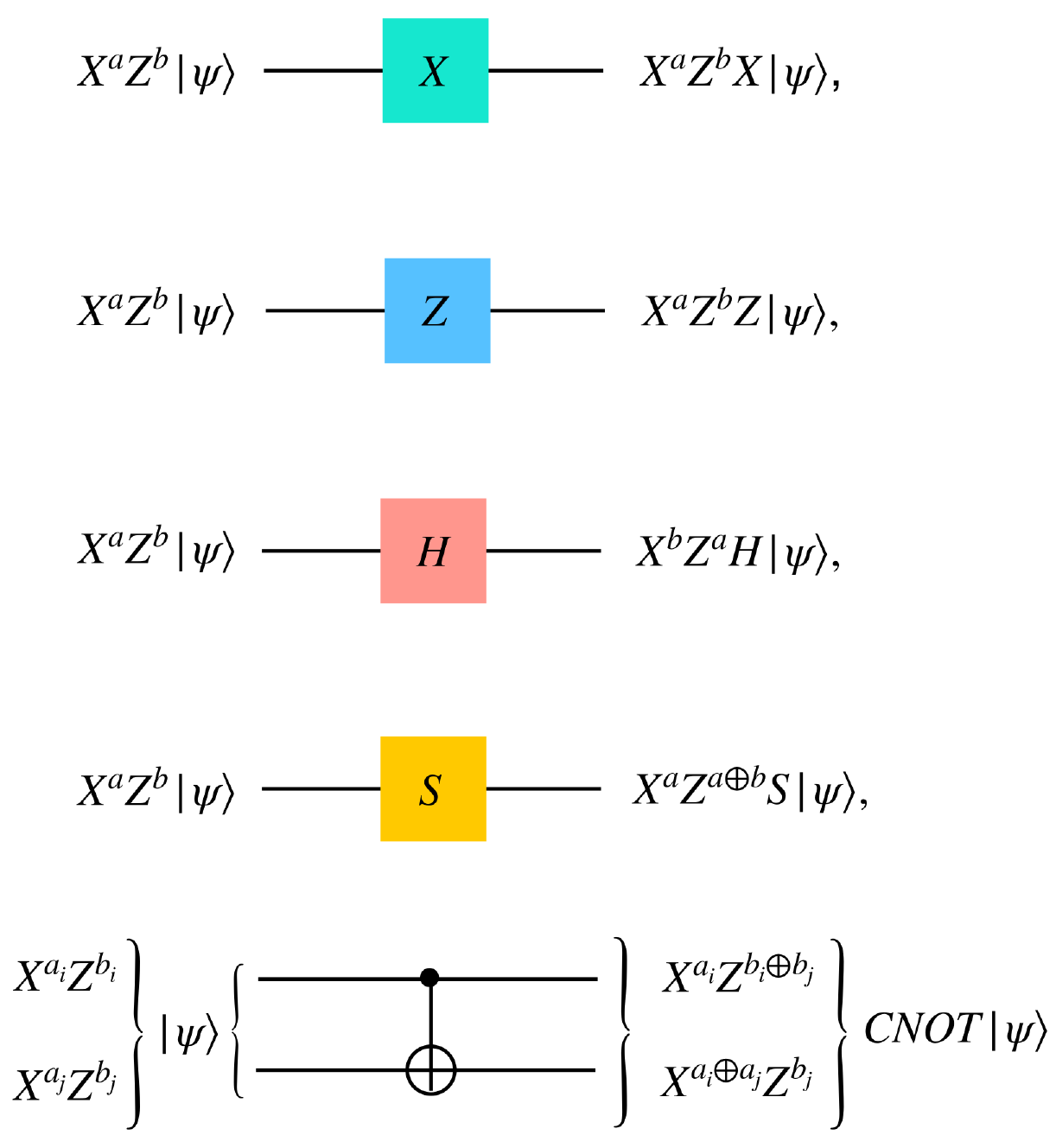}
    \caption{\textbf{Key updating rules for various Clifford gates.} The way in which the keys get updated on the action of (top to bottom in the figure) $X$, $Z$, $H$, $S$, and $CNOT$ gates is depicted in the above figure.}
    \label{fig2}
\end{figure}
\begin{table}[t]
    \centering \includegraphics[width=0.45\textwidth]{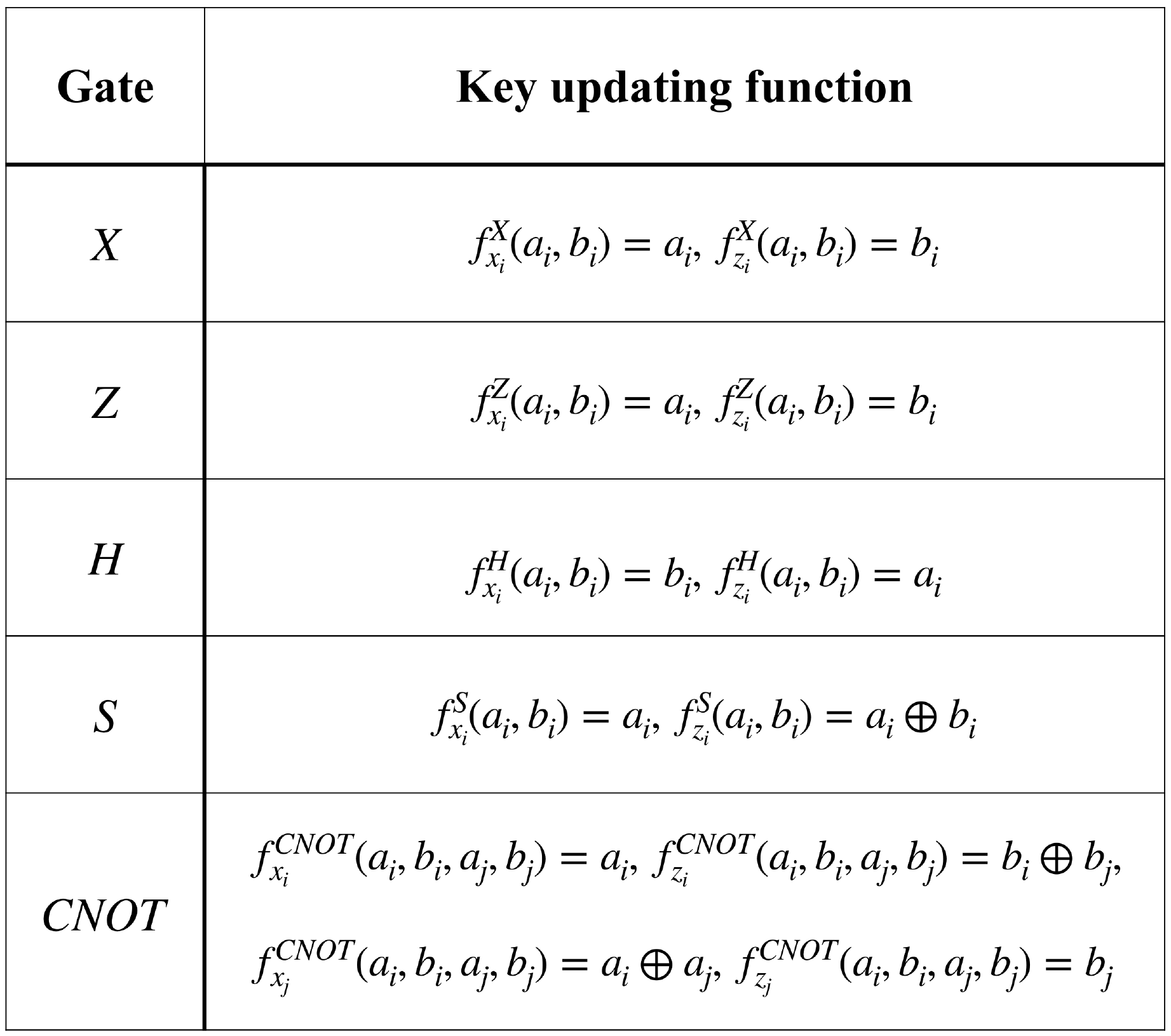}
    \caption{Key updating functions for different Clifford gates.}
    \label{table1}
\end{table}
After receiving the state $\ket{\psi^{\text{enc}}}$ from the client, the server needs to apply the gates in the required order for the application of the quantum circuit on the $n$-qubit system. However, since the state is encrypted, along with the application of the gates, the secret keys also need to be upgraded accordingly. Since for all single-qubit Clifford gates, $G$, we have $X^{a}Z^bG=GX^{f^{G}_x(a,b)}Z^{f^{G}_z(a,b)}$, for such gates, the situation is straightforward.  
Specifically, when a single qubit Clifford gate $G$ is applied to the $i$th 
qubit, the key pair $(a_i, b_i)$ is updated to a new pair, $\left(f^G_{x_i}(a_i, b_i),f^G_{z_i}(a_i, b_i)\right)$, through classical functions $f^G_{x_i}$ and 
$f^G_{z_i}$, where each function takes the current bits $(a_i, b_i)$ as input and outputs single binary 
numbers. In particular, when the server applies the gate $G$ on the $i$th qubit, it records the functions  $\{f_{x}^G,f_{z}^G\}_i$. These functions depend only on the gate $G$ and the current key values, making the 
protocol practical and efficient. Due to similar reasons, when the server acts $CNOT$ gate to the $i$ and $j$th qubits, it notes the functions $\{f_x^{CNOT},f_z^{CNOT}\}_{i,j}$, where $f_x^{CNOT}$ and $f_z^{CNOT}$ act on $\{(a_i,b_i),(a_j,b_j)\}$ and provide two new pairs of keys, which are $(a_i, b_i\oplus b_j)$ and $(a_i\oplus a_j, b_j)$, for the $i$th and $j$th qubits, respectively. Figure \ref{fig2} and Table \ref{table1} present the complete key updating rules for all Clifford gates 
in the universal gate set, $\mathcal{U}$. For each gate $G$, the table specifies the classical functions 
$f^G_{x_i}$ and $f^G_{z_i}$ that determine the updated key pair for $i$th qubit when $G$ is applied. 
These rules enable the server to evaluate Clifford circuits on encrypted data while 
maintaining the homomorphic property.

In contrast, non-Clifford gates cannot be handled by such 
simple key updates. 
When a $T$ gate is applied on the $i$th qubit, it follows the relation $TX^{a_i}Z^{b_i}\ket{\psi}\simeq(S^\dagger)^{a_i} X^{a_i}Z^{a_i\oplus b_i}T\ket{\psi}$. For the adjoint gate one similarly has $T^\dagger X^{a_i}Z^{b_i}\ket{\psi}\simeq S^{a_i}X^{a_i}Z^{a_i\oplus b_i}T^\dagger\ket{\psi}$, where $\simeq$ denotes equality up to a global phase. These relations produce an $S$-type byproduct. To correct the error, the gate teleportation 
method is utilized.

\textit{Algorithm for evaluating mixed Clifford--non-Clifford circuits.---}\\
To illustrate how Liang's scheme handles circuits with both Clifford and $T$ gates, consider 
the following protocol:

\textit{Step 1: Clifford gates.---}\\
The server begins by applying all Clifford gates sequentially to the encrypted data qubits, until it finds a non-Clifford gate in the sequence of gates of the circuit. 
For each Clifford gate, the server updates the encryption keys $(a_i,b_i)$ according to the 
classical functions $f^G_{x_i}$ and $f^G_{z_i}$. The quantum state of the data qubits remains encrypted 
throughout this phase.

\textit{Step 2: $T$ ($T^\dagger$) gate application.---}\\
When the server encounters a $T$ gate (or its adjoint $T^\dagger$) that must be applied to a data qubit 
$w_i$, it employs the gate teleportation mechanism.  Specifically, the server applies the $T$ gate to $w_i$, then swaps the
states of $w_i$ and qubit $s_i$ from the pre-shared Bell
state $\ket{\Phi}$.

\textit{Step 3: Interleaving Clifford and non-Clifford gates.---}\\
If additional gates follow the $T$ gate (whether Clifford or non-Clifford), the server 
continues applying them in sequence:
\begin{itemize}
    \item For Clifford gates: standard key updates on the encrypted qubits.
    \item For additional $T$ or $T^\dagger$ gates: apply the gate on the required qubit and swap that qubit with the qubit of a fresh pre-shared Bell state.
    \end{itemize}

\textit{Step 4: Final output.---}\\
Once all gates have been applied, the server sends all qubits to the client, including 
the qubits from the Bell states used for gate teleportation. The client applies the 
final decryption operation using the accumulated encryption key information.

\item
\textit{Decryption process.---}
Upon receiving the qubits from the server, the client systematically applies the inverse 
operations corresponding to each gate applied during the evaluation phase.

\textit{Step 1: Clifford gate key updates.---}\\
For each Clifford gate applied during server evaluation, the client uses the gate-dependent 
key update functions to evolve the local keys. Since Clifford gates commute with Pauli 
operators, these key updates are computed classically without requiring quantum measurements.

\textit{Step 2: Non-Clifford ($T$ gate) measurement and error correction.---}\\
When the client encounters the $i$th application of a non-Clifford $T$ gate to qubit $w_j$, it 
performs a specialized Bell measurement:
\begin{enumerate}
    \item 
The client takes qubit $s_i$ (received from the server) and qubit $c_i$ (from the $i$th pre-shared Bell pair). 
\item The client measures the pair of qubits $c_is_i$ in the \emph{$S^{a_j}$-rotated} Bell basis, where $a_j$ is the current value of 
the key for qubit $j$. This basis is a rotation of the standard Bell basis by the unitary $S^{a_j}$.

\item The measurement yields two classical bits $(r_a, r_b)$ representing the measurement outcome.
\end{enumerate}

\textit{Why this measurement works.---} During server evaluation, the $T$ gate introduced a phase error 
(the ``$S$-error"). The $S^{a_j}$-rotated Bell measurement is specifically designed to detect and 
eliminate this error. The measurement outcome $(r_a, r_b)$ directly encodes the correction needed.

\textit{Step 3: Key update from measurement outcome.---}
Based on the measurement outcome $(r_a, r_b)$, the client updates the encryption keys for qubit $j$:
\begin{eqnarray}
&&a_j \rightarrow a_j \oplus r_a\text{, }b_j \rightarrow b_j \oplus (a_j \oplus r_b).    
\end{eqnarray}
These updates ensure that the accumulated encryption keys reflect both the gates applied and 
the measurement outcomes from non-Clifford operations.

\textit{Step 4: Final state recovery.---}\\
After processing all gates (updating keys for Clifford gates and performing measurements for 
$T$ gates), let $(a^f_i, b^f_i)$ be the final encryption key pair for qubit $i$. The client 
then applies the multi-qubit correction unitary
$\otimes_{i=1}^n X^{a^f_i} Z^{b^f_i}$
to all $n$ qubits $\{w_1, w_2, ..., w_n\}$, recovering the correct output state $\ket{\psi_{out}}$ of the original 
quantum circuit.

\item
\textit{Summary.---} This four-step process ensures that all quantum information processed by the server 
remains encrypted throughout evaluation, while the client retains the ability to recover the 
correct output using only classical key information and Bell measurements. 
\end{itemize}
An example of the application of a circuit, consisting of $H$, $T$, $T^\dagger$ and $S$ gates, on a two-qubit system in the cloud using homomorphic encryption is discussed in Appendix \ref{A1} for a better understanding of the scheme.

In the remainder of the paper, the Pauli masking considered at the encoded level is a restricted block-Pauli or logical-Pauli masking, not an independent physical quantum one-time pad on every qubit. Accordingly, the results below address code-space compatibility and protocol correctness rather than information-theoretic security in the full physical QOTP sense.

\section{QUANTUM ERROR CORRECTION FOR DATA STORAGE}\label{sec3}
The main challenge in building a quantum computer is its extreme sensitivity to noise. For cloud quantum computing to be practical, quantum data must be protected against noise and corrected whenever errors occur. Consequently, error correction is a fundamental requirement for any efficient cloud-based quantum computing architecture. In this part of the work, we examine whether the encrypted quantum data sent by the client to the server remains compatible with quantum error correction so that the cloud can reliably preserve uploaded quantum information. To this end, we first analyze two concrete error-correction schemes and show that  an encryption is compatible with both. We then describe and study general stabilizer codes, deriving a necessary and sufficient condition for a stabilizer code to be compatible with homomorphic encryption. 

Quantum error correction protects quantum information through three steps: (1) encoding 
logical qubit states into physical qubits for redundancy, (2) syndrome extraction to detect 
errors without revealing the quantum state, and (3) error recovery using the syndrome information.

In this section, we consider a restricted setting in which encrypted quantum data is stored securely but is not subjected to arbitrary quantum circuit evaluation. The evaluation phase of the homomorphic encryption scheme therefore consists only of QEC operations: encoding, error detection, and recovery. This simplification allows us to focus on protecting quantum information against decoherence during cloud storage, without the additional complexity of performing general computations on encrypted data. 

\subsection{Bit flip code}
\label{secIIIA}
Bit-flip code is the simplest quantum error-correcting code and protects a single logical qubit against bit-flip errors using three physical qubits. A logical state $\ket{\bar{\psi}}=c_0 \ket{\bar{0}} + c_1 \ket{\bar{1}}$ is encoded as
\begin{equation}
   \ket{\bar{0}} = \ket{000}, \quad \ket{\bar{1}} = \ket{111}, \quad \ket{\bar{\psi}} = c_0 \ket{000} + c_1 \ket{111}. 
\end{equation}

Bit-flip errors are assumed to affect each qubit independently with probability $p \ll 1$. The recovery procedure begins with a syndrome measurement, which can be described ideally as a projective measurement onto the four syndrome subspaces $\{P_0, P_1, P_2, P_3\}$, where $P_0$ projects onto the code space spanned by $\ket{000}$ and $\ket{111}$, while $P_i$ projects onto the subspace obtained from the code space by a bit-flip error on qubit $i$ ($i=1,2,3$). Operationally, this syndrome measurement is not a direct measurement of the data qubits in the computational basis. Rather, one measures the stabilizer generators, for instance $Z_1Z_2$ and $Z_2Z_3$, using auxiliary syndrome qubits. The resulting two-bit syndrome identifies the error subspace without revealing the logical amplitudes $c_0$ and $c_1$. Applying an $X$ gate to the faulty qubit then returns the state to $c_0\ket{000}+c_1\ket{111}$.

Let us now move our attention to the compatibility of homomorphic encryption with the bit flip code. Let us consider a situation in which the client needs to save the data $\ket{\bar{\psi}}=c_0\ket{\bar{0}}+c_1\ket{\bar{1}}$ in the cloud and wants to protect it from bit flip errors. The homomorphic encryptions scheme with error correction will involve the following steps:\\
\textit{Encryption.---} For encrypting the single-qubit logical state, the client generates the pair of binary numbers $(a,b)$. Ideally, the client should apply the logical Pauli gates, $\bar{X}^a$ and $\bar{Z}^b$, on the logical qubit. However, to protect the logical qubit from bit flip errors, it first prepares three physical qubits in the state $\ket{\bar{\psi}}=c_0\ket{000}+c_1\ket{111}$. Then, to encrypt the data, the client applies the physical gates, $X^a\otimes X^a\otimes X^a$ and $Z^b\otimes Z^b\otimes Z^b$, on the three physical qubits. The final encrypted state of the physical qubits is $\ket{\bar{\psi}^{\text{enc}}}=X^{a}Z^{b}\otimes X^{a}Z^{b}\otimes X^{a}Z^{b}\left(c_0\ket{000}+c_1\ket{111}\right)$. The encrypted state, obtained by applying encryption operations with keys $(a, b)$, 
is expressed as:
\begin{equation}
\ket{\bar{\psi}^{\text{enc}}} = c'_0\ket{000} + c'_1\ket{111}.    
\end{equation}
The amplitudes $c'_0$ and $c'_1$ depend on $(a, b)$---see Appendix \ref{A2} for the explicit forms. The client 
then transmits this encrypted state to the server. Since the encryption keys are retained by 
the client, the server cannot extract the original logical information from the encrypted 
representation. \\
\textit{Server-side error correction.---} Upon receiving the encrypted state, the server performs single-qubit bit-flip error correction. Using the stabilizer measurement $\{P_0,P_1,P_2,P_3 \}$, the server detects errors and obtains a syndrome $s\in\{0,1,2,3\}$. The correction depends only on this classical syndrome: if $s\neq0$, an $X$ gate is applied to qubit $i=s$; if $s=0$, no correction is applied. Crucially, this procedure does not require any knowledge of the encrypted amplitudes, $c_0'$ and $c_1'$. In this way, the server removes bit-flip errors while preserving the encryption, and the resulting error-corrected ciphertext can be passed on for further homomorphic evaluation.
\\
\textit{Decryption.---}
The server returns the error-corrected encrypted state $\ket{\bar{\psi}^{\text{enc}}}$ to the client. The client applies the 
decryption operation $(Z^b)^{\otimes 3}(X^a)^{\otimes 3}$ to recover the original encoded logical state:
\begin{equation}
(Z^b)^{\otimes 3} (X^a)^{\otimes 3} \ket{\bar{\psi}^{\text{enc}}} =\ket{\bar{\psi}},  
\end{equation}
\\
\textit{State-agnostic error correction.---} In the bit-flip code, error correction depends only on the classical syndrome and is completely independent of the encrypted amplitudes
$(c_0',c_1')$. As a result, no key updates are needed during the correction step, and the decryption procedure remains unchanged. The same reasoning applies to the phase-flip code. In the next subsection, we extend this analysis to the Shor code, which simultaneously protects against both bit-flip and phase-flip errors.
\subsection{Shor code}
\label{secIIIB}
Building on the three-qubit repetition code, we now ask whether the Shor code---which simultaneously corrects bit-flip and phase-flip errors---is compatible with homomorphic encryption. The Shor code encodes one logical qubit into nine physical qubits and corrects any single-qubit error (see Ref. \cite{Nielsen_Chuang_2010}), but here we restrict attention to bit-flip and phase-flip errors under the single-error assumption $p\ll 1$. In this section, we analyze (1) the Shor code error-correction procedure and (2) the compatibility of homomorphic encryption with data encoded in the Shor code.\\
\textit{Encoding.---}
The state of the logical qubit is $\ket{\bar{\psi}}=c_0\ket{\bar{0}}+c_1\ket{\bar{1}}$. To encode the logical qubit, the following scheme is used: 
\begin{eqnarray}
|{{0}}\rangle\rightarrow \ket{\bar{0}}=(\ket{000}+\ket{111})^{\otimes 3}/\left(2\sqrt{2}\right),\label{myeq8}\\
    |{{1}}\rangle\rightarrow \ket{\bar{1}}=(\ket{000}-\ket{111})^{\otimes 3}/\left(2\sqrt{2}\right).\label{myeq9}
\end{eqnarray}
\\
\textit{Detection.---} The Shor code uses two sets of stabilizer measurements to detect bit-flip and phase-flip errors independently, extracting only classical syndrome information without revealing the underlying quantum state. Bit-flip errors are identified by measuring the parity operators
$Z_1Z_2$, $Z_2Z_3$, $Z_4Z_5$, $Z_5Z_6$, $Z_7Z_8,$ and $Z_8Z_9$
on consecutive pairs of qubits within each three-qubit block. For each measurement, the eigenvalue $+1$ indicates that the two qubits have the same parity, whereas the eigenvalue $-1$ signals a mismatch and hence a bit-flip error in that block. For example, an error on qubit $1$ produces the syndrome $(-1,+1,+1,+1,+1,+1)$, which uniquely indicates a flip on the first qubit. 

We can imagine the physical qubits of the Shor code to form three blocks, each having three qubits. The phase-flip error is detected by comparing the sign of, first, the first two blocks of three qubits and, then, the second two blocks of the three qubits. In particular, in these two pairs of blocks the following projective measurements are performed:
\begin{eqnarray}
   \mathcal{P}_\text{Shor} = \{
P_1 = P[(\ket{000} + \ket{111})(\ket{000} + \ket{111})]\nonumber \\+ P[(\ket{000} - \ket{111})(\ket{000} - \ket{111})],\nonumber\\
P_2 = P[(\ket{000} + \ket{111})(\ket{000} - \ket{111})]\nonumber \\+ P[(\ket{000} - \ket{111})(\ket{000} + \ket{111})]
\}. 
\end{eqnarray}
These measurements project the state of the physical qubits onto even phase parity and odd phase parity. The exact error location is then inferred by comparing the measurement results from the different qubit blocks.

The combination of bit flip and phase flip syndrome information uniquely identifies which single 
qubit (if any) has experienced which type of error. The server uses this combined syndrome to 
apply the appropriate recovery operation, preserving the encoded state. \\
\textit{Recovery.---} Once the noise-affected qubit is detected, applying $X$ and $Z$ gates on that physical qubit, the original state can be recovered.

Next, consider the situation where a client wants to save a quantum state, $\ket{\bar{\psi}}=c_0\ket{\bar{0}}+c_1\ket{\bar{1}}$, of a logical qubit in the cloud. It follows the homomorphic encryption scheme. Below we show that the scheme is compatible with the Shor code. \\
\textit{Encryption.---} To encrypt the logical qubit, the client chooses a pair of binary keys $(a,b)$. Since the Shor code encodes the logical qubit into nine physical qubits, as in Eqs. \eqref{myeq8} and \eqref{myeq9}, we use the restricted block-Pauli masking obtained by applying the same physical Pauli operator to every qubit in the block:
\begin{itemize}
    \item[-] apply $X^a$ on each of the nine qubits, controlled by key $a$;
    \item[-] apply $Z^b$ on each of the nine qubits, controlled by key $b$.
\end{itemize}
With the convention for the Shor basis used here, these transversal operators should not be identified by their physical labels with the same logical labels: on the code space, $X^{\otimes 9}$ acts as a logical phase flip, whereas $Z^{\otimes 9}$ acts as a logical bit flip.
The encryption operation is written as:
\begin{equation}
  \ket{\bar{\psi}^{\text{enc}}}=(X^a Z^b)^{\otimes 9} \ket{\bar{\psi}}.  
\end{equation}
The point of the following lemma is the code-space preservation property.\\
\textbf{Lemma 1. \textit{Encryption preserves the Shor code space}}\\
\textit{Let a logical qubit be encoded in the Shor code as
\begin{eqnarray}
   \ket{0}\rightarrow|\bar{0}\rangle = \frac{1}{2\sqrt{2}} (|000\rangle + |111\rangle)^{\otimes 3}, \\
\ket{1}\rightarrow|\bar{1}\rangle = \frac{1}{2\sqrt{2}} (|000\rangle - |111\rangle)^{\otimes 3}, 
\end{eqnarray}
and let $|\bar{\psi}\rangle = c_0 |\bar{0}\rangle + c_1 |\bar{1}\rangle$
be its encoded state. For any encryption keys $a,b \in \{0,1\}$, the homomorphic encryption operation
$U_{\text{enc}}(a,b) = (X^a Z^b)^{\otimes 9}$
preserves the Shor code space, i.e.,
$U_{\text{enc}}(a,b)\,|\bar{\psi}\rangle \in \mathrm{span}\{|\bar{0}\rangle, |\bar{1}\rangle\}.$
Equivalently,
\begin{equation}
   U_{\text{enc}}(a,b)\,|\bar{\psi}\rangle
=
c_0'(a,b)\,|\bar{0}\rangle + c_1'(a,b)\,|\bar{1}\rangle, 
\end{equation}
where $c_0'$ and $c_1'$ are obtained from $c_0$ and $c_1$ by a key-dependent logical Pauli transformation.}
\begin{proof}
We can write the encoded state as
\begin{equation}
    |\bar{\psi}\rangle
=
\frac{1}{2\sqrt{2}}
\left[
c_0 (|000\rangle + |111\rangle)^{\otimes 3}
+
c_1 (|000\rangle - |111\rangle)^{\otimes 3}
\right].
\end{equation}
We first analyze the action of $X^{\otimes 9}$ and $Z^{\otimes 9}$ on the Shor code basis states. Acting on a single three-qubit block,
\begin{equation}
  X^{\otimes 3} |000\rangle = |111\rangle, 
\quad
X^{\otimes 3} |111\rangle = |000\rangle,  
\end{equation}
so
\begin{equation}
  X^{\otimes 3} (|000\rangle \pm |111\rangle)
=
|111\rangle \pm |000\rangle
=
\pm (|000\rangle \pm |111\rangle).  
\end{equation}
Applying this to all three blocks gives
\begin{equation}
    X^{\otimes 9} (|000\rangle \pm |111\rangle)^{\otimes 3}
=
\pm (|000\rangle \pm |111\rangle)^{\otimes 3}.
\end{equation}
Similarly, for $Z^{\otimes 3}$ one has
\begin{equation}
    Z^{\otimes 3} |000\rangle = |000\rangle, 
\quad
Z^{\otimes 3} |111\rangle = -|111\rangle,
\end{equation}
and hence
\begin{equation}
  Z^{\otimes 3} (|000\rangle \pm |111\rangle)
=
|000\rangle \mp |111\rangle,  
\end{equation}
which implies
\begin{equation}
    Z^{\otimes 9} (|000\rangle \pm |111\rangle)^{\otimes 3}
=
(|000\rangle \mp |111\rangle)^{\otimes 3}.
\end{equation}
Now consider the four possible choices of $(a,b)$:

\begin{enumerate}[label=\roman*.]
    \item \textit{$(a,b)=(0,0)$.---} Then $U_{\text{enc}}=I^{\otimes9}$; the statement is trivial.
    \item \textit{$(a,b)=(1,0)$.---} Then $U_{\text{enc}}=X^{\otimes9}$. Using the identities above,
    \begin{equation}    X^{\otimes9}|\bar{0}\rangle=|\bar{0}\rangle, \quad X^{\otimes9}|\bar{1}\rangle=-|\bar{1}\rangle,
    \end{equation}
    and thus $X^{\otimes9}|\bar{\psi}\rangle=c_{0}|\bar{0}\rangle-c_{1}|\bar{1}\rangle$, which is still a superposition of $|\bar{0}\rangle$ and $|\bar{1}\rangle$.
    \item \textit{$(a,b)=(0,1)$.---} Then $U_{\text{enc}}=Z^{\otimes9}$. From the identities,
    \begin{equation}   Z^{\otimes9}|\bar{0}\rangle=|\bar{1}\rangle, \quad Z^{\otimes9}|\bar{1}\rangle=|\bar{0}\rangle,
    \end{equation}
    thus 
    \begin{equation}
        Z^{\otimes9}|\bar{\psi}\rangle=c_{0}|\bar{1}\rangle+c_{1}|\bar{0}\rangle,
    \end{equation}
    again in the span of $\{|\bar{0}\rangle, |\bar{1}\rangle\}$.
    \item \textit{$(a,b)=(1,1)$.---} Then $U_{\text{enc}}=(XZ)^{\otimes9}=X^{\otimes 9}Z^{\otimes 9}$. Combining the previous cases:
    \begin{equation}
(XZ)^{\otimes9}|\bar{0}\rangle=X^{\otimes9}Z^{\otimes9}|\bar{0}\rangle=X^{\otimes9}|\bar{1}\rangle=-|\bar{1}\rangle
    \end{equation}
    \begin{equation}
(XZ)^{\otimes9}|\bar{1}\rangle=X^{\otimes9}Z^{\otimes9}|\bar{1}\rangle=X^{\otimes9}|\bar{0}\rangle=|\bar{0}\rangle
    \end{equation}
    and therefore $(XZ)^{\otimes9}|\bar{\psi}\rangle=-c_{0}|\bar{1}\rangle+c_{1}|\bar{0}\rangle$.
\end{enumerate}

In all four cases, $U_{\text{enc}}(a,b)|\bar{\psi}\rangle$ is a linear combination of $|\bar{0}\rangle$ and $|\bar{1}\rangle$; no component orthogonal to the Shor code space is generated. Hence there exist amplitudes $c_{0}^{\prime}(a,b)$, $c_{1}^{\prime}(a,b)$ such that
\begin{equation}
U_{\text{enc}}(a,b)|\bar{\psi}\rangle=c_{0}^{\prime}(a,b)|\bar{0}\rangle+c_{1}^{\prime}(a,b)|\bar{1}\rangle,
\end{equation}
and the Shor code space is preserved by the encryption map. This proves the lemma.
\end{proof}

Because of these properties, after encryption the state can be expressed as:
\begin{eqnarray}
\ket{\bar{\psi}^{\text{enc}}} = c'_0\left(\frac{1}{2\sqrt{2}}\right)(\ket{000} + \ket{111})^{\otimes 3}\nonumber \\+ c'_1 \left(\frac{1}{2\sqrt{2}}\right)(\ket{000} - \ket{111})^{\otimes 3},   \end{eqnarray}
where $c'_0$ and $c'_1$ are the transformed amplitudes whose exact forms depend on the encryption keys $(a, b)$.
\\
\textit{Evaluation.---} The server receives the encrypted state, which remains a valid Shor-code codeword: it has exactly the same code structure as the unencrypted state, differing only by the induced logical Pauli action. In the ideal stabilizer-code model, the server then performs Shor-code error correction by measuring the stabilizer syndromes, identifying the error location, and applying the corresponding recovery operation. This procedure depends only on the classical syndrome outcomes and is independent of the logical amplitudes $c_0'$, $c_1'$. A full fault-tolerant implementation would additionally require specifying syndrome-extraction circuits and tracking how encrypted Pauli frames propagate through ancilla-assisted measurements. Thus, at this stage, the result establishes compatibility between the restricted encryption and ideal QEC, rather than a complete security proof against arbitrary server deviations.

\textit{Decryption.---} When desired, the client retrieves the error-free encrypted state from the server and applies the inverse encryption operation $(X^aZ^b)^{\otimes 9}$, recovering the original nine-qubit encoded state.

\subsection{Stabilizer codes and homomorphic compatibility}
The analysis in Sections \ref{secIIIA} and \ref{secIIIB} shows that homomorphic compatibility requires the encryption operation to map any encoded state back into the code subspace. We now derive necessary and sufficient conditions for this property to hold for general stabilizer codes.

A stabilizer operator $\mathcal{S}$ is a Pauli operator satisfying $\mathcal{S}\ket{\bar{\psi}} = \ket{\bar{\psi}}$ for any codeword $\ket{\bar{\psi}}$. The code subspace is the intersection of all eigenspaces (eigenvalue +1) of the stabilizer 
generators. Encryption $U_\text{enc}$ must satisfy: for all stabilizers $\mathcal{S}$, defining the stabilizer code under consideration,
\begin{equation}
  \mathcal{S} U_\text{enc} \ket{\bar{\psi}} = U_\text{enc} \ket{\bar{\psi}}.
\end{equation}
In other words, $U_{enc}$ needs to preserve the stabilizer eigenvalues of $\ket{\bar{\psi}}$, keeping the encrypted state in the code subspace.

Let $\mathcal{S}\subset G_n$ be an abelian subgroup of the $n$-qubit Pauli group that does not contain $-I$, and define
\begin{equation}
V_{\mathcal{S}}:=\{|\psi\rangle:\; g|\psi\rangle=|\psi\rangle\ \text{for all}\ g\in\mathcal{S}\}.
\end{equation}
If $\mathcal{S}$ has $n-k$ independent generators $\{g_1,\dots,g_{n-k}\}$, then $\dim V_{\mathcal{S}}=2^k$. This $2^k$-dimensional vector space is the code space $C(\mathcal{S})$ of an $[[n,k,d]]$ stabilizer code. For a Pauli error set $\{E_j\}$, the standard stabilizer error-correction condition can be written as
\begin{equation}
E_j^\dagger E_\ell \notin N(\mathcal{S})\setminus \mathcal{S}\qquad \forall\, j,\ell .
\end{equation}
Equivalently, each product $E_j^\dagger E_\ell$ either belongs to $\mathcal{S}$, corresponding to the degenerate case, or anticommutes with at least one stabilizer generator and is therefore detectable. Here $N(\mathcal{S})$ denotes the normalizer of $\mathcal{S}$.  Construction of a stabilizer code can be organized as follows:
\begin{itemize}
    \item Choose $n-k$ independent commuting Pauli generators $\{g_i\}_{i=1}^{n-k}\subset G_n$ such that $\mathcal{S}=\langle g_1,\dots,g_{n-k}\rangle$ does not contain $-I$.
    \item Choose $k$ mutually commuting logical-$Z$ representatives $\{\bar{Z}_j\}_{j=1}^k\subset N(\mathcal{S})\setminus\mathcal{S}$, independent modulo $\mathcal{S}$.
    \item Define the logical basis states $\ket{\bar{x}_1\cdots \bar{x}_k}$ as simultaneous eigenstates satisfying
    \begin{eqnarray}
    g_i\ket{\bar{x}_1\cdots \bar{x}_k}&=&\ket{\bar{x}_1\cdots \bar{x}_k},\qquad\\
    \bar{Z}_j\ket{\bar{x}_1\cdots \bar{x}_k}&=&(-1)^{x_j}\ket{\bar{x}_1\cdots \bar{x}_k},
    \end{eqnarray}
    where $x_j\in\{0,1\}$.
    \item Choose logical-$X$ representatives $\{\bar{X}_j\}_{j=1}^k\subset N(\mathcal{S})\setminus\mathcal{S}$, independent modulo $\mathcal{S}$, such that
    \begin{equation}
    \bar{X}_j\bar{Z}_\ell=(-1)^{\delta_{j\ell}}\bar{Z}_\ell\bar{X}_j,\qquad
    [\bar{X}_j,g_i]=0
    \end{equation}
    for all admissible $i,j,\ell$.
\end{itemize}
The resulting code $C(\mathcal{S})$ encodes $k$ logical qubits into $n$ physical qubits, and its code space is spanned by the logical basis $\{\ket{\bar{x}_1\cdots \bar{x}_k}\}$. 

For simplicity of presentation, we will restrict ourselves to the stabilizer codes, which can correct single-qubit errors. However, the results can easily be generalized for a higher number of logical qubits.

Consider the states, $\ket{\bar{0}}$ and $\ket{\bar{1}}$, to span the stabilizer code space, defined through the generators $\{g_i\}_{i=1}^{n-1}$, that encodes a single logical qubit state. Additionally, let $\bar{X}$ and $\bar{Z}$ be the gates designed for the construction of the stabilizer code. Hence, we have the following relations:
\begin{eqnarray}
  g_i\ket{\bar{0}}&=&\ket{\bar{0}}\text{, } g_i\ket{\bar{1}}=\ket{\bar{1}}\text{, }\\
  \bar{Z}\ket{\bar{0}}&=&\ket{\bar{0}}
  \text{, }
\bar{Z}\ket{\bar{1}}=-\ket{\bar{1}}. \label{myeq14}
\end{eqnarray}
Moreover, since $\bar{X}g_i=g_i\bar{X}$ and $\bar{X}\bar{Z}=-\bar{Z}\bar{X}$, we have 
\begin{eqnarray}
&&g_i(\bar{X}\ket{\bar{0}})=\bar{X}g_i\ket{\bar{0}}=\bar{X}\ket{\bar{0}}\text{, }\\
&&\bar{Z}(\bar{X}\ket{\bar{0}})=-\bar{X}\bar{Z}\ket{\bar{0}}=-\bar{X}\ket{\bar{0}}.\label{myeq12}
\end{eqnarray}
It can be noticed from the above pair of equations that $\bar{X}\ket{\bar{0}}$ is stabilized by $g_1$, $g_2$, $\cdot\cdot\cdot$, $g_{n-1}$, $-\bar{Z}$. 
Since apart from $\ket{\bar{1}}$ there does not exist any other state that is stabilized by $\{g_i\}_i$ and $-\bar{Z}$, we can write 
\begin{equation}
\bar{X}\ket{\bar{0}}=\ket{\bar{1}}\text{ and }\bar{X}\ket{\bar{1}}=\ket{\bar{0}}.  \label{myeq13}
\end{equation}
Hence from Eqs. \eqref{myeq14} and \eqref{myeq13} we see $\bar{X}$ and $\bar{Z}$ are the logical Pauli-$X$ and Pauli-$Z$ gates.

Let us now state the necessary and sufficient condition for homomorphic encryption to be compatible with stabilizer codes.
\begin{theorem} \textbf{Homomorphic compatibility for stabilizer codes}
\label{thm:stabilizer-compatibility}
\\
Let $C(\mathcal{S})$ be an $[[n,1,d]]$ stabilizer code with stabilizer group $\mathcal{S}=\langle g_1,\dots,g_{n-1}\rangle$ generated by independent commuting Pauli operators $\{g_1,\dots,g_{n-1}\}\subseteq {G}_n$. Consider the transversal Pauli encryption
\begin{equation}
    U_{\mathrm{enc}}(a,b) = (X^a Z^b)^{\otimes n}, \quad (a,b)\in\{0,1\}^2,
\end{equation}
acting on the $n$ physical qubits encoding a single logical qubit. Then $U_{\mathrm{enc}}(a,b)$ is homomorphically compatible with $C(\mathcal{S})$ (i.e., maps the code space into itself for all keys $(a,b)$) if and only if
\begin{equation}\label{eq:stab-comm}
[X^{\otimes n}, g_i] = 0 \text{ and } [Z^{\otimes n}, g_i] = 0 \text{ for all } i=1,\dots,n-1.
\end{equation}
\end{theorem}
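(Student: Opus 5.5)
The plan is to reduce the question to the standard dichotomy for Pauli operators acting on a stabilizer code. A Pauli operator $P\in G_n$ either commutes or anticommutes with each generator $g_i$. We apply this to $P=X^{\otimes n}$, $P=Z^{\otimes n}$, and $P=(XZ)^{\otimes n}$, which coincides with $X^{\otimes n}Z^{\otimes n}$. Since $U_{\rm enc}(0,0)=I$ is trivially compatible, compatibility for all keys is equivalent to the three operators $X^{\otimes n}$, $Z^{\otimes n}$, $X^{\otimes n}Z^{\otimes n}$ each mapping $C(\mathcal{S})$ into itself.

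\emph{Sufficiency.} Assume Eq.~\eqref{eq:stab-comm}. Take any $\ket{\bar\psi}\in C(\mathcal{S})$ and any generator $g_i$. Then $g_i X^{\otimes n}\ket{\bar\psi}=X^{\otimes n}g_i\ket{\bar\psi}=X^{\otimes n}\ket{\bar\psi}$, so $X^{\otimes n}\ket{\bar\psi}$ lies in the joint $+1$ eigenspace, which is $C(\mathcal{S})$. The same argument applies to $Z^{\otimes n}$. The composite $X^{\otimes n}Z^{\otimes n}$ also commutes with every $g_i$, so the case $(a,b)=(1,1)$ follows as well. Equivalently, $X^{\otimes n},Z^{\otimes n}\in N(\mathcal{S})$, so they act as logical operators, or as stabilizers, on the code space. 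This mirrors the case analysis of Lemma~1 for the Shor code.

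\emph{Necessity.} Suppose some $g_i$ fails to commute with $X^{\otimes n}$. Because both are Pauli operators, they must anticommute: $g_iX^{\otimes n}=-X^{\otimes n}g_i$. Then for any nonzero codeword $\ket{\bar\psi}$ we get $g_iX^{\otimes n}\ket{\bar\psi}=-X^{\otimes n}\ket{\bar\psi}$. So $X^{\otimes n}\ket{\bar\psi}$ lies in the $-1$ eigenspace of $g_i$, which is orthogonal to $C(\mathcal{S})$, and it is nonzero because $X^{\otimes n}$ is unitary. Hence the key $(1,0)$ maps the code space outside itself, in fact onto an orthogonal syndrome subspace. The same argument with the key $(0,1)$ handles a generator that anticommutes with $Z^{\otimes n}$. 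Thus compatibility for all keys forces Eq.~\eqref{eq:stab-comm}.

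The only delicate point is the commute/anticommute dichotomy. I will justify it by writing each generator as $g_i=\pm i^{\epsilon}\bigotimes_j \sigma_j$, so that the overall sign relative to $X^{\otimes n}$ is $(-1)^{m}$, where $m$ counts the sites at which $\sigma_j\in\{Y,Z\}$. This also yields a concrete checkable form of the condition: every generator must have an even number of $Y/Z$ entries and an even number of $X/Y$ entries. Nothing else is an obstacle, and the extension to $[[n,k,d]]$ codes goes through verbatim, since the argument never used $k=1$.
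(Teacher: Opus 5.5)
Your proof is correct and follows essentially the same route as the paper. Sufficiency comes from commuting $X^{\otimes n}$ and $Z^{\otimes n}$ through the generators, and necessity comes from the Pauli commute/anticommute dichotomy applied to the keys $(1,0)$ and $(0,1)$, which puts the image in the $-1$ eigenspace of some $g_i$; your site-counting justification of that dichotomy (even numbers of $Y/Z$ and of $X/Y$ entries) and your unitarity remark guaranteeing a nonzero image are small additions that the paper leaves implicit.
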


\begin{proof}
For notational simplicity, we write $\boldsymbol{X} := X^{\otimes n}$ and $\boldsymbol{Z} := Z^{\otimes n}$. Let $\mathcal{C}$ denote the codespace of $C(\mathcal{S})$, i.e., the simultaneous $+1$ eigenspace of all $g_i$.

\emph{Sufficiency.} We first assume Eq. \eqref{eq:stab-comm}. Let $|\phi\rangle\in\mathcal{C}$, so that $g_i|\phi\rangle = |\phi\rangle$ for all $i$. Then, for each stabilizer generator,
\begin{equation}
   g_i \boldsymbol{X}|\phi\rangle = \boldsymbol{X} g_i |\phi\rangle = \boldsymbol{X}|\phi\rangle, 
\end{equation}
and similarly 
\begin{equation}
   g_i \boldsymbol{Z}|\phi\rangle = \boldsymbol{Z} g_i |\phi\rangle = \boldsymbol{Z}|\phi\rangle. 
\end{equation}
Thus both $\boldsymbol{X}|\phi\rangle$ and $\boldsymbol{Z}|\phi\rangle$ are stabilized by all $g_i$ and therefore lie in $\mathcal{C}$. Since $U_{\mathrm{enc}}(a,b)$ is a product of $\boldsymbol{X}$ and $\boldsymbol{Z}$, it follows that $U_{\mathrm{enc}}(a,b)\mathcal{C}\subseteq\mathcal{C}$ for all $(a,b)$, i.e., the encryption preserves the code space.

\emph{Necessity.} Conversely, suppose that $U_{\mathrm{enc}}(a,b)$ preserves the code space for all keys $(a,b)$. Since the key choices $(a,b)=(1,0)$ and $(a,b)=(0,1)$ are included, both $\boldsymbol{X}$ and $\boldsymbol{Z}$ must individually map every codeword back into $\mathcal{C}$. If $[\boldsymbol{X},g_i]\neq 0$ for some $i$, then, since both $\boldsymbol{X}$ and $g_i$ are elements of the $n$-qubit Pauli group, they must anticommute: $g_i \boldsymbol{X} = - \boldsymbol{X} g_i.$
Let $| \phi\rangle\in\mathcal{C}$ be a stabilized codeword, so $g_i|\phi\rangle=|\phi\rangle$. Then
\begin{equation}
g_i \boldsymbol{X}|\phi\rangle = - \boldsymbol{X} g_i |\phi\rangle = - \boldsymbol{X}|\phi\rangle,    
\end{equation}
showing that $\boldsymbol{X}|\phi\rangle$ is a $-1$ eigenstate of $g_i$ and hence lies outside $\mathcal{C}$, which is the simultaneous $+1$ eigenspace of all stabilizer generators. Therefore $\boldsymbol{X}$ does not preserve the code space, contradicting homomorphic compatibility. Thus $[\boldsymbol{X},g_i]=0$ for all $i$. The same argument applies to $\boldsymbol{Z}$, proving $[\boldsymbol{Z},g_i]=0$ for all $i$. This establishes Eq. \eqref{eq:stab-comm} as a necessary condition.
\end{proof}
Theorem \ref{thm:stabilizer-compatibility} yields a simple, explicit criterion: a transversal Pauli encryption scheme is homomorphically compatible with a stabilizer code if and only if the corresponding encryption operators commute with every stabilizer generator. This condition uniformly explains all our examples:
\begin{itemize}
    \item[-] Three-qubit repetition code: $X^{\otimes 3}$ and $Z^{\otimes 3}$ commute with all stabilizers.
    \item[-] Shor code: $X^{\otimes 9}$ and $Z^{\otimes 9}$ commute with all eight stabilizer generators.
    \item[-] General stabilizer codes: the same commuting condition with the code's stabilizer group characterizes compatibility.
\end{itemize}
\textit{Examples.---} The generators of the three-qubit bit-flip code are $Z\otimes Z\otimes I$ and $I\otimes Z\otimes Z$. Each contains an even number of
$Z$ operators, and hence both commute with
$X^{\otimes 3}$ and with
$Z^{\otimes 3}$. Therefore, as already verified explicitly, this code is compatible with the homomorphic encryption described above.

Similarly, the Shor code has stabilizer generators
\begin{eqnarray}
g_{1} & = & Z \otimes Z \otimes I \otimes I \otimes I \otimes I \otimes I \otimes I \otimes I, \nonumber \\
g_{2} & = & I \otimes Z \otimes Z \otimes I \otimes I \otimes I \otimes I \otimes I \otimes I \nonumber, \\
g_{3} & = & I \otimes I \otimes I \otimes Z \otimes Z \otimes I \otimes I \otimes I \otimes I \nonumber, \\
g_{4} & = & I \otimes I \otimes I \otimes I \otimes Z \otimes Z \otimes I \otimes I \otimes I \nonumber, \\
g_{5} & = & I \otimes I \otimes I \otimes I \otimes I \otimes I \otimes Z \otimes Z \otimes I \nonumber, \\
g_{6} & = & I \otimes I \otimes I \otimes I \otimes I \otimes I \otimes I \otimes Z \otimes Z \nonumber, \\
g_{7} & = & X \otimes X \otimes X \otimes X \otimes X \otimes X \otimes I \otimes I \otimes I \nonumber, \\
g_{8} & = & I \otimes I \otimes I \otimes X \otimes X \otimes X \otimes X \otimes X \otimes X.\nonumber 
\end{eqnarray}
Again, each of these generators contains an even number of non-identity $X$ or $Z$ operators, which implies that all $g_i$ commute with both $X^{\otimes 9}$ and $Z^{\otimes 9}$. By Theorem \ref{thm:stabilizer-compatibility}, the Shor code is therefore compatible with the homomorphic encryption scheme.

\subsection{Homomorphic encryption compatibility for CSS codes}
Theorem \ref{thm:stabilizer-compatibility} provides a necessary and sufficient condition for general stabilizer codes: a transversal Pauli encryption is homomorphically compatible if and only if the corresponding encryption operators commute with all stabilizer generators. CSS codes form a distinguished subclass of stabilizer codes with additional algebraic structure, which allows this general commutation condition to be specialized to a simpler and more practical criterion. 

In a CSS code, the stabilizer group is generated by two disjoint sets of operators, one consisting solely of
$Z$-type Paulis and the other solely of
$X$-type Paulis, each derived from a classical linear code and related by specific orthogonality conditions. This
$X/Z$ separation splits the commutation conditions into two independent classical-coding conditions and can be exploited directly in the homomorphic setting.
Prominent examples of CSS codes include the 9-qubit Shor code~\cite{Shor}, repetition codes, the Steane code~\cite{CSS1}, and many topological codes such as toric~\cite{toriccode2} and color codes~\cite{colorcode}. 

Because of their special algebraic structure, the general compatibility condition from Theorem \ref{thm:stabilizer-compatibility} simplifies, for CSS codes, to a direct necessary and sufficient requirement that is easier to verify and apply in concrete constructions. Specifically:
\begin{itemize}
    \item[-] Theorem \ref{thm:stabilizer-compatibility} requires checking $[X^{\otimes n},g_i]=0$ and $[Z^{\otimes n},g_i]=0$ for a generating set $\{g_i\}$ of the stabilizer group. 
    \item[-] For CSS codes, the stabilizer generators split into purely $Z$-type and purely $X$-type checks, so the commutation conditions for $X^{\otimes n}$ and $Z^{\otimes n}$ can be verified separately on these two families, often directly at the level of the underlying classical parity-check matrices. 
\end{itemize}
This leads to a simple criterion that is easier to verify for CSS codes. Let us state and prove the condition.
\begin{theorem} \textbf{Compatibility condition for CSS codes}
\label{thm:css-compatibility}\\
Let $C_1$ and $C_2$ be classical binary linear codes of length $n$ and dimensions $k_1$ and $k_2$, respectively, with $C_2 \subset C_1$ and $k_1>k_2$. The associated CSS code $\mathrm{CSS}(C_1,C_2)$ encodes $k_1-k_2$ logical qubits into $n$ physical qubits with computational basis states
\begin{equation}\label{eq:css-basis}
|x + C_2\rangle = \frac{1}{\sqrt{|C_2|}} \sum_{y \in C_2} |x \oplus y\rangle, 
\end{equation}
where $x \in C_1$ labels a coset in $C_1/C_2$, and $\oplus$ denotes bitwise addition modulo $2$. Consider the transversal Pauli encryption
\begin{equation}
    U_{\mathrm{enc}}(a,b) = (X^a Z^b)^{\otimes n}, \quad (a,b)\in\{0,1\}^2.
\end{equation}
Let $e=(1,\dots,1)\in\mathbb{F}_2^n$. Then $\mathrm{CSS}(C_1,C_2)$ is compatible with this homomorphic encryption, in the sense that $U_{\mathrm{enc}}(a,b)$ preserves the CSS code space for all keys and maps encoded basis states to encoded basis states up to phases, if and only if
\begin{enumerate}
    \item
    \begin{equation}\label{eq:css-cond-X}
    e\in C_1,
    \end{equation}
    equivalently, $x\oplus e\in C_1$ for every $x\in C_1$.
    \item
    \begin{equation}\label{eq:css-cond-Z-even}
    e\in C_2^\perp ,
    \end{equation}
    equivalently, all words in $C_2$ have even Hamming weight. In terms of the shorthand
    \begin{equation}\label{eq:star-def}
    x*y := w(x\oplus y)\pmod 2
    \end{equation}
    where $w(\cdot)$ denotes Hamming weight;
    this even-weight condition is precisely what makes the phase parity $x*y$ independent of the representative $y\in C_2$ for each fixed $x\in C_1$.
\end{enumerate}
Under these equivalent conditions, $X^{\otimes n}$ and $Z^{\otimes n}$ preserve the CSS code space, and encrypted states remain correctable by the corresponding CSS code.
\end{theorem}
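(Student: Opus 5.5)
The plan is to compute the action of $X^{\otimes n}$ and $Z^{\otimes n}$ directly on the coset basis states of Eq.~\eqref{eq:css-basis}, and to show that each condition is exactly what is needed for one of the two operators. As a cross-check, one can also read the statement through Theorem~\ref{thm:stabilizer-compatibility}. The stabilizer of $\mathrm{CSS}(C_1,C_2)$ is generated by $Z$-type checks $Z^{h}$ with $h$ ranging over rows of a parity-check matrix of $C_1$, i.e.\ $h\in C_1^\perp$, and $X$-type checks $X^{y}$ with $y$ ranging over a basis of $C_2$. Then $X^{\otimes n}=X^{e}$ commutes with $Z^{h}$ iff $e\cdot h=0$ for all $h\in C_1^\perp$, i.e.\ iff $e\in (C_1^\perp)^\perp=C_1$. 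Likewise $Z^{\otimes n}=Z^{e}$ commutes with $X^{y}$ iff $e\cdot y=0$ for all $y\in C_2$, i.e.\ iff $e\in C_2^\perp$. The other two commutation relations, $X^{e}$ with $X^{y}$ and $Z^{e}$ with $Z^{h}$, hold trivially. By Theorem~\ref{thm:stabilizer-compatibility}, whose argument uses only commutation with generators and so extends verbatim to $k_1-k_2$ logical qubits, this already gives the code-space preservation equivalence.

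To obtain the finer statement about mapping basis states to basis states up to phases, I would carry out the explicit computation. First,
\begin{equation}
X^{\otimes n}|x+C_2\rangle=\frac{1}{\sqrt{|C_2|}}\sum_{y\in C_2}|x\oplus e\oplus y\rangle=|x\oplus e+C_2\rangle .
\end{equation}
This is a valid codeword precisely when $x\oplus e\in C_1$, which for $x\in C_1$ is equivalent to $e\in C_1$. Second,
\begin{equation}
Z^{\otimes n}|x+C_2\rangle=\frac{1}{\sqrt{|C_2|}}\sum_{y\in C_2}(-1)^{w(x\oplus y)}|x\oplus y\rangle .
\end{equation}
Since $w(x\oplus y)\equiv w(x)+w(y)\pmod 2$, the phase is $(-1)^{w(x)}(-1)^{e\cdot y}$. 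If $e\in C_2^\perp$, this equals $(-1)^{w(x)}|x+C_2\rangle$, a basis state times a phase, and this is where the representative-independence of $x*y$ enters. The case $U_{\mathrm{enc}}(1,1)$ then follows by composition.

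For necessity, I would again use the keys $(1,0)$ and $(0,1)$ separately. If $e\notin C_1$, then $|x\oplus e+C_2\rangle$ is supported on strings outside $C_1$. Every codeword is supported on $C_1$, so $X^{\otimes n}$ leaves the code space. If $e\notin C_2^\perp$, the map $y\mapsto e\cdot y$ is a nontrivial character on $C_2$. The state $Z^{\otimes n}|x+C_2\rangle$ is then orthogonal to $|x+C_2\rangle$, because the signs sum to zero. It is also orthogonal to every other coset state, because its support lies in $x+C_2$. Hence it lies outside the code space.

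I expect the main obstacle to be this last orthogonality argument, together with stating cleanly which of the two equivalent compatibility notions in the theorem is being proved. The stabilizer-commutation route through Theorem~\ref{thm:stabilizer-compatibility} is the fallback for necessity if the character argument is awkward to phrase. Finally, correctability of encrypted states follows because the encrypted state is again a codeword of the same code.
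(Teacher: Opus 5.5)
Your proposal is correct and follows essentially the paper's proof. Both compute $X^{\otimes n}|x+C_2\rangle=|x\oplus e+C_2\rangle$ directly, use the parity identity $w(x\oplus y)\equiv w(x)+w(y)\pmod 2$ for $Z^{\otimes n}$, and prove necessity of $e\in C_2^\perp$ by the same nontrivial-character orthogonality argument. Your support-based argument for the necessity of $e\in C_1$, and your cross-check through Theorem~\ref{thm:stabilizer-compatibility} (with $Z$-checks from $C_1^\perp$ and $X$-checks from $C_2$), are slightly more explicit than the paper's text, but the route is the same.
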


\begin{proof}
Again, for notational simplicity, we write $\boldsymbol{X} := X^{\otimes n}$ and $\boldsymbol{Z} := Z^{\otimes n}$.
We adopt the CSS encoding \eqref{eq:css-basis}. For each encoded basis state $|x + C_2\rangle$ with $x\in C_1$, code-space preservation by $\boldsymbol{X}$ and $\boldsymbol{Z}$ requires their images to remain in the span of the CSS basis states. In general, an operator could map one encoded basis state to a superposition of encoded basis states. For the particular transversal operators considered here, however, the computational-basis support is either shifted from one coset to another, in the case of $\boldsymbol{X}$, or kept inside the same coset, in the case of $\boldsymbol{Z}$. Since distinct CSS cosets have disjoint computational-basis support, preservation is equivalent to the existence of $x',x''\in C_1$ and phases $\lambda_x,\mu_x$ such that
\begin{equation}\label{eq:css-X-map}
\boldsymbol{X} |x + C_2\rangle = \lambda_x |x' + C_2\rangle,
\end{equation}
\begin{equation}\label{eq:css-Z-map}
\boldsymbol{Z} |x + C_2\rangle = \mu_x |x'' + C_2\rangle,
\end{equation}
with $|\lambda_x|=|\mu_x|=1$.

Let $e$ be the all-ones bit string of length $n$. Since $\boldsymbol{X}$ flips all the qubits of a computational basis state, we have
\begin{equation}
    \boldsymbol{X} |x \oplus y\rangle = |(x \oplus y) \oplus e\rangle.
\end{equation}
Therefore,
\begin{equation}
    \boldsymbol{X} |x + C_2\rangle
= \frac{1}{\sqrt{|C_2|}} \sum_{y \in C_2} |(x \oplus e) \oplus y\rangle
= |x \oplus e + C_2\rangle.
\end{equation}
Hence Eq. \eqref{eq:css-X-map} holds with $x' = x \oplus e$ if and only if $x \oplus e \in C_1$ for every $x \in C_1$. Since $C_1$ is linear and contains the zero word, this is equivalent to $e\in C_1$, which is condition \eqref{eq:css-cond-X}.

Next consider $\boldsymbol{Z}$. Acting $\boldsymbol{Z}$ on a computational basis state, we have
\begin{equation}
   \boldsymbol{Z} |x \oplus y\rangle = (-1)^{w(x \oplus y)} |x \oplus y\rangle, 
\end{equation}
Using the shorthand \eqref{eq:star-def}, we may write
\begin{equation}
   \boldsymbol{Z} |x + C_2\rangle
= \frac{1}{\sqrt{|C_2|}} \sum_{y \in C_2} (-1)^{x * y} |x \oplus y\rangle. 
\end{equation}
For this state to be proportional to an encoded basis state as in Eq. \eqref{eq:css-Z-map}, the relative phases across the sum over $y\in C_2$ must be constant. Since
\begin{equation}
w(x\oplus y)\equiv w(x)+w(y)\pmod 2,
\end{equation}
this is equivalent to requiring that the parity of $w(y)$ be the same for all $y\in C_2$. Because $C_2$ is linear and contains the zero word, this is equivalent to every word in $C_2$ having even weight. Equivalently, if $e=(1,\ldots,1)$, then $e\cdot y=w(y)\pmod 2$, so every word in $C_2$ has even weight if and only if $e\in C_2^\perp$. Under this condition,
\begin{equation}
\boldsymbol{Z}|x+C_2\rangle=(-1)^{w(x)}|x+C_2\rangle,
\end{equation}
so $\boldsymbol{Z}$ preserves the code space.

Conversely, suppose $e\notin C_2^\perp$. Then there exists $y_0\in C_2$ with $e\cdot y_0=1$. The character $y\mapsto (-1)^{e\cdot y}$ is nontrivial on $C_2$, and therefore
\begin{equation}
\sum_{y\in C_2}(-1)^{e\cdot y}=0.
\end{equation}
Consequently, $\boldsymbol{Z}|x+C_2\rangle$ is orthogonal to $|x+C_2\rangle$. Moreover, it has support only on the same coset $x+C_2$, whereas all other CSS basis states have disjoint support. Hence it is orthogonal to every encoded basis state, and therefore to the entire CSS code space. Thus $\boldsymbol{Z}$ preserves the code space if and only if $e\in C_2^\perp$. 

Conditions \eqref{eq:css-cond-X} and \eqref{eq:css-cond-Z-even} are therefore necessary and sufficient for $\boldsymbol{X}$ and $\boldsymbol{Z}$ to preserve the CSS code space. Since every encryption operator $U_{\mathrm{enc}}(a,b)$ is, up to phase, a product of powers of $\boldsymbol{X}$ and $\boldsymbol{Z}$, the same conditions are necessary and sufficient for compatibility with all keys. This proves the theorem.
\end{proof}

The Steane code is an example of a CSS code. One can easily check from the classical code words, $C_1$ and $C_2$, of the Steane code that both conditions in the above theorem are satisfied in this case  (see Appendix \ref{A3} for more detail). In the CSS/Hamming-code presentation used here, the all-$X$ operator exchanges the two logical codewords, while the all-$Z$ operator distinguishes their parity; hence $X^{\otimes 7}$ and $Z^{\otimes 7}$ provide valid logical Pauli representatives \cite{colorcode}. Thus, on the code space, the transversal block-Pauli masking considered here coincides with a logical Pauli masking of the encoded qubit.
\subsection{Discussion on general stabilizer codes}
The analysis in Theorems \ref{thm:stabilizer-compatibility} and \ref{thm:css-compatibility} applies to the encryption gates $X^{\otimes n}$ and $Z^{\otimes n}$.
These are physical transversal operators, and therefore they preserve the code space only for stabilizer codes satisfying the commutation conditions stated above. A different, more general statement holds if one uses logical Pauli operators instead of fixed transversal physical Paulis.

For the one-logical-qubit case considered here, let $\ket{\bar{0}}$ and $\ket{\bar{1}}$ be the two orthogonal logical basis states spanning the code space. We choose logical Pauli representatives $\bar{X},\bar{Z}\in N(\mathcal S)$ so that, on the code space,
\begin{eqnarray}
   &&\bar{Z} \ket{\bar{0}} = \ket{\bar{0}}\text{, }   \bar{Z} \ket{\bar{1}} = -\ket{\bar{1}},\\
&&\bar{X} \ket{\bar{0}} = \ket{\bar{1}}\text{, }   \bar{X}  \ket{\bar{1}} = \ket{\bar{0}}. 
\end{eqnarray}
Here $N(\mathcal S)$ denotes the normalizer of the stabilizer group, so every such representative commutes with the stabilizer and maps the code space to itself. Therefore, masking an encoded state with a logical Pauli,
\begin{equation}
 \bar{X}^a \bar{Z}^b(c_0\ket{\bar{0}} + c_1\ket{\bar{1}}) = c_0'\ket{\bar{0}} + c_1'\ket{\bar{1}},  
\end{equation}
where $c'_0$ and $c'_1$ are obtained from $c_0$ and $c_1$ by the key-dependent logical Pauli action. The encrypted state remains in the code space for every key $(a,b)$. Thus, at the level of code-space preservation, logical-Pauli masking is compatible with any stabilizer code, independently of whether the fixed physical transversal operators $X^{\otimes n}$ and $Z^{\otimes n}$ satisfy Theorem \ref{thm:stabilizer-compatibility}.

This statement should be interpreted carefully. It is a compatibility statement about encrypted encoded states and logical operators, not a claim that the corresponding physical circuits are automatically transversal, fault-tolerant, or resource efficient. For stabilizer codes encoding several logical qubits, the analogous masking uses products of logical Paulis $\prod_j \bar{X}_j^{a_j}\bar{Z}_j^{b_j}$ and the same normalizer argument applies to code-space preservation. Despite this conceptual generality, the present work focuses primarily on physical encryption using the transversal gates $X^{\otimes n}$ and $Z^{\otimes n}$ for practical reasons:
\begin{enumerate}
    \item \textit{Direct implementation.} Single-qubit Pauli operations are native, low-level primitives on essentially all quantum hardware platforms.
    \item \textit{No synthesis overhead.} Applying $X^{\otimes n}$ and $Z^{\otimes n}$ at the client side requires no compilation into more elementary gates, unlike general logical operators, which typically demand nontrivial circuit synthesis.
    \item \textit{Resource efficiency.} Logical-Pauli masking generally introduces additional depth and ancilla overhead, whereas transversal Paulis incur minimal resource costs.
    \item \textit{Explicit verifiability.} Theorems \ref{thm:stabilizer-compatibility} and \ref{thm:css-compatibility} provide concrete, code-dependent commutation conditions for $X^{\otimes n}$ and $Z^{\otimes n}$, making homomorphic compatibility straightforward to check for specific stabilizer and CSS codes.
\end{enumerate}
The focus on physical gates
$X^{\otimes n}$ and $Z^{\otimes n}$ reflects a deliberate compromise between theoretical generality and practical implementation constraints. Logical-operator masking is universally compatible with any stabilizer code at the code-space level, but typically incurs significant circuit overhead and requires code-dependent physical implementations. For near-term, cloud-based quantum computing, transversal physical Pauli encryption thus offers a more favorable trade-off whenever the chosen code satisfies the compatibility criterion.

\section{Cloud Quantum Computation}\label{sec4}
Up to this point, the analysis has shown that homomorphic encryption enables the server to
\begin{enumerate}
    \item keep quantum data encrypted at all times,
    \item perform error correction directly on encrypted states, and
    \item preserve the code space under both encryption and error-correction operations.
\end{enumerate}
Cloud quantum computing, however, demands more than robust data protection: it also requires the ability to carry out nontrivial quantum computations on encrypted data. This leads to a central question: can quantum gates be applied to encrypted quantum states in such a way that
\begin{itemize}
    \item[-] the encrypted state undergoes the intended quantum transformation, and
    \item[-] the client, upon decryption, recovers the correct output state of the underlying computation?
\end{itemize}
In the context of quantum computing, it is useful to distinguish two broad classes of gates according to their algebraic properties and their role in computation.
First, Clifford gates ---such as CNOT, Hadamard ($H$), and the phase ($S$) gate---have a highly structured action: they map Pauli operators to Pauli operators under conjugation and therefore preserve the stabilizer structure of quantum error-correcting codes. This makes them efficiently simulable on a classical computer and, crucially for our setting, easy to handle in encrypted states, since their effect can be tracked by simple key-update rules without introducing additional complications in the encryption scheme.
Second, non-Clifford gates
play a qualitatively different role. When combined with Clifford gates, they promote the gate set to universal quantum computation, but in doing so they break the stabilizer structure: Pauli operators are no longer mapped to Paulis under conjugation.

Among non-Clifford gates, the $T$ {(or $T^\dagger$)} gate plays a particularly central role for several reasons. First, together with the Clifford gates, it forms a universal gate set, so the ability to implement $T$ at the logical level is essential for realizing arbitrary quantum algorithms. Second, $T$ is typically the most resource-intensive primitive from a fault-tolerance perspective, often dominating the overhead in error-corrected quantum architectures. Third, in the homomorphic setting it demands the most sophisticated techniques, since its action cannot be captured by simple key updates in the way Clifford gates can.
In contrast to Clifford operations, a $T$ gate modifies the quantum state in a genuinely nontrivial way that is hard to reproduce directly on encrypted data. To make this precise, consider a logical qubit in an arbitrary state
$$\ket{\bar{\psi}}=c_0\ket{\bar{0}}+c_1\ket{\bar{1}},$$
where $c_0$ and $c_1$ are arbitrary complex amplitudes. The ideal action of the logical $\bar{T}$ gate on the state $\ket{\bar{\psi}}$ is
$$\bar{T}\ket{\bar{\psi}}=c_0\ket{\bar{0}}+e^{i\pi/4}c_1\ket{\bar{1}},$$
that is, the logical component $\ket{\bar{0}}$ is left invariant while the logical $\ket{\bar{1}}$ component acquires a phase factor $e^{i\pi/4}$.

In a practical encoded homomorphic computation, the situation is not that physical gates automatically replace logical gates. Rather:
\begin{itemize}
    \item[-] a logical qubit is encoded into an $n$-qubit physical code block,
    \item[-] the server can act only through physical operations on this block, for instance a transversal $T^{\otimes n}$ or another physical representative intended to implement the logical gate $\bar{T}$.
\end{itemize}
The central question is therefore:
Can a prescribed physical operation on the encrypted encoded block implement the desired logical $\bar{T}$ on the underlying logical qubit, so that after the required homomorphic corrections, key updates, and decryption, the client obtains the same logical state as if $\bar{T}$ had been applied to the original unencrypted logical state?

In the following, we 
\begin{enumerate}[label=\roman*.]
    \item analyze how physical $T$ gates interact with Pauli-encrypted encoded data,
    \item identify when a physical implementation, such as a transversal $T^{\otimes n}$ or a code-dependent representative of $\bar T$, realizes the desired logical action on the code space, and
    \item verify that, after the required correction and key-update steps, decryption yields the correct logical output state, highlighting qualitatively the additional resources required by non-Clifford gates in this homomorphic setting.
\end{enumerate}
The key difficulty is that
$T$ gates generally do not commute with the Pauli operators used for encryption, so their effect cannot be absorbed into simple key updates as in the Clifford case. Instead, conjugating a Pauli-encrypted state through a $T$ gate produces additional $S$-type byproducts that must be corrected without revealing the encryption keys. We therefore use a gate-teleportation--based correction technique. In the encoded setting, this technique is valid only after one specifies a physical operation whose restriction to the code space implements the intended logical $\bar T$ action, either transversally for suitable code families or through a code-dependent logical representative. Under these assumptions, the server acts only on encrypted encoded states, while the client can still decrypt to obtain the correct logical outcome.

\subsection{Transversal encryption}
Direct application of a transversal $T$ gate fails for the Shor code. If we start from a logical qubit $|\bar{\psi}\rangle = c_0 |\bar{0}\rangle + c_1 |\bar{1}\rangle$, encoded in the Shor basis
\begin{equation}
|\bar{0}\rangle = \frac{1}{2\sqrt{2}}(|000\rangle + |111\rangle)^{\otimes 3}, \quad |\bar{1}\rangle = \frac{1}{2\sqrt{2}}(|000\rangle - |111\rangle)^{\otimes 3},
\end{equation}
then $T|0\rangle=|0\rangle$ and $T|1\rangle=\omega |1\rangle$, with $\omega=e^{i\pi/4}$. Hence, on each three-qubit block,
\begin{equation}
T^{\otimes 3}\left(|000\rangle\pm |111\rangle\right)=|000\rangle\pm \omega^3 |111\rangle,
\qquad \omega^3=e^{i3\pi/4}.
\end{equation}
Applying $T^{\otimes 9}$ to all physical qubits therefore gives
\begin{eqnarray}
|\bar{\psi}'\rangle =
\frac{1}{2\sqrt{2}}\Big[
c_0\left(|000\rangle+\omega^3|111\rangle\right)^{\otimes 3}
\nonumber\\+c_1\left(|000\rangle-\omega^3|111\rangle\right)^{\otimes 3}
\Big].
\end{eqnarray}
By contrast, the desired outcome of an ideal logical $\bar{T}$ acting on the encoded state would be
\begin{eqnarray}
|\bar{\psi}_{\text{target}}\rangle =
\frac{1}{2\sqrt{2}}\Big[
c_0\left(|000\rangle+|111\rangle\right)^{\otimes 3}
\nonumber\\+\omega c_1\left(|000\rangle-|111\rangle\right)^{\otimes 3}
\Big].
\end{eqnarray}
The discrepancy lies in how the phase is distributed. The transversal operator $T^{\otimes 9}$ introduces a factor $\omega^3=e^{i3\pi/4}$ on the $|111\rangle$ component inside each three-qubit block, whereas the logical $\bar{T}$ should apply a single global phase factor $\omega=e^{i\pi/4}$ to the entire logical $|\bar{1}\rangle$ amplitude while leaving $|\bar{0}\rangle$ unchanged.

Moreover, since $\omega^3\neq \pm 1$, each transformed block $|000\rangle\pm \omega^3|111\rangle$ is not one of the two Shor block states $|000\rangle\pm |111\rangle$. Equivalently, when rewritten in the $\{|000\rangle+|111\rangle,\ |000\rangle-|111\rangle\}$ basis, each transformed block is a nontrivial superposition of the two block states. Consequently, the tensor product of three such transformed blocks contains components outside the two-dimensional Shor code space spanned by $\{|\bar{0}\rangle,|\bar{1}\rangle\}$. This structural mismatch shows that naive application of $T^{\otimes 9}$ to the physical qubits neither preserves the Shor code space nor implements the correct logical $\bar{T}$ on the encoded, and hence encrypted, state.

In the next subsection, we turn to triorthogonal \cite{PhysRevA.86.052329, 10735332, shi2024triorthogonalcodesselfdualcodes} CSS codes. For suitable members of this family, transversal Pauli masking can preserve the code space while the transversal action of physical $T$ gates realizes the desired logical non-Clifford operation, up to the known Clifford correction.

\subsubsection{Triorthogonal codes and the CSS code construction}
In the previous sections, the discussion focused on general stabilizer and CSS codes as the basic framework for encoding and protecting quantum information. However, when aiming to perform quantum computation with non-Clifford
$T$ gates on encrypted data, not every CSS code is equally suitable. For this task, it is useful to consider CSS codes that
\begin{enumerate}[label=\roman*.]
    \item underlie magic-state-distillation or related non-Clifford-gate protocols,
    \item admit a fault-tolerant implementation of logical
$\bar{T}$ gates (typically via transversal or quasi-transversal constructions), and
\item keep the overall resource overhead of quantum error correction under control.
\end{enumerate}
Triorthogonal codes \cite{PhysRevA.86.052329,10735332,shi2024triorthogonalcodesselfdualcodes} form a distinguished subclass of CSS codes tailored to this setting. They are defined from binary triorthogonal matrices and give rise to stabilizer codes in which a physical transversal
$T$ gate induces a well-structured logical non-Clifford operation, possibly up to a Clifford correction. This makes them particularly well suited both for magic-state distillation and for fault-tolerant quantum computation with
$T$ gates.

\noindent \textit{Definition.} Let $G$ be an $m \times n$ binary matrix. One says that $G$ is \textit{triorthogonal} if it satisfies two nested orthogonality conditions on its rows.

\begin{itemize}
    \item \textit{Condition 1 (pairwise orthogonality).} For any two distinct rows $i \neq k$,
    \begin{equation}
        \sum_{j=1}^{n} G_{ij} G_{kj} \equiv 0 \pmod 2. 
    \end{equation}
    In other words, the binary inner product between any pair of different rows vanishes, so every pair of rows is orthogonal in the usual CSS sense.
    \item \textit{Condition 2 (triple orthogonality).} For any three distinct rows $i\neq k\neq\ell$,
    \begin{equation}
        \sum_{j=1}^{n} G_{ij} G_{kj} G_{\ell j} \equiv 0 \pmod 2. 
    \end{equation}
    This means that the componentwise triple product of any three different rows also has even Hamming weight, a higher-order constraint that will later translate into additional protection and structure for non-Clifford gates.
\end{itemize}

\noindent A key structural property of any triorthogonal matrix $G$ is that its rows can always be reordered so that
\begin{equation}
    G = \begin{pmatrix} G_1 \\ G_0 \end{pmatrix}
\end{equation}
where $G_1$ collects the odd-weight rows and $G_0$ collects the even-weight rows. Under the usual rank assumptions of the triorthogonal-code construction, the number $k$ of odd-weight rows determines the number of encoded logical qubits. The rows in $G_0$ generate the $X$-type stabilizer sector, while the rows in $G_1$ provide convenient representatives for logical $X$ operators. This decomposition is essential because these two blocks play distinct roles in how stabilizers, logical operators, and transversal $T$ gates act on the code space.

A CSS code can be constructed from a triorthogonal matrix as follows. Let $\mathcal G_0$ denote the binary row space generated by the even-weight rows in $G_0$, and let $\mathcal G$ denote the row space generated by all rows of $G$.
\begin{enumerate}
    \item \textit{Define the reference code state:} From the even-weight row space $\mathcal G_0$, define the normalized state
\begin{equation*}
|G_0\rangle = \frac{1}{\sqrt{|\mathcal G_0|}}\sum_{g \in \mathcal G_0} |g\rangle .
\end{equation*}
Here $g$ denotes a binary vector in the row space $\mathcal G_0$. The state $\ket{G_0}$ is therefore the uniform superposition of all codewords generated by the even rows, and it represents the all-zeros logical state $\ket{\bar{0}}^{\otimes k}$.

\item \textit{Logical $\bar{X}$ and $\bar{Z}$ operators:} The $k$ odd-weight rows of $G_1$ determine a convenient set of logical $\bar X$ representatives. Let $f_j$ denote the $j$-th row of $G_1$, and write $(f_j)_i \in \{0,1\}$ for its $i$-th component.

\begin{itemize}
    \item \textit{Logical $\bar{X}_j$ operator (bit flip on logical qubit $j$):}
    The logical $\bar{X}_j$ is implemented as a tensor product of physical $X$ operators on those qubits where the row $f_j$ has a 1:
    \begin{equation*}
        \bar{X}_j = X(f_j) = \bigotimes_{i=1}^n X_i^{(f_j)_i}.
    \end{equation*}
    Thus, $\bar{X}_j$ acts as $X$ on precisely the subset of physical qubits selected by the support of $f_j$, and as the identity on all others.

    \item \textit{Logical $\bar{Z}_j$ operator (phase flip on logical qubit $j$)}:
Choose a binary vector $h_j$ satisfying 
\begin{eqnarray*}
&&h_j\cdot f_\ell=\delta_{j\ell}\pmod 2,
\qquad\\
&&h_j\cdot g=0\pmod 2\quad \forall\,g\in\mathcal G_0 ,
\end{eqnarray*}
where $\cdot$ denotes the binary inner product. The first condition gives the required anticommutation with $\bar X_j$, while the second ensures commutation with the $X$-type stabilizers generated by $\mathcal G_0$. The corresponding logical $\bar Z_j$ representative is
\begin{equation*}
    \bar{Z}_j = Z(h_j) = \bigotimes_{i=1}^n Z_i^{(h_j)_i}.
\end{equation*}
\end{itemize}
In this way, each pair $(\bar{X}_j, \bar{Z}_j)$ is specified by an odd row $f_j$ and a dual representative $h_j$. The support of $h_j$ ensures that $\bar Z_j$ commutes with the stabilizer group while anticommutes with $\bar{X}_j$ and acts correctly on the $j$-th encoded qubit.

\item \textit{Encoded logical basis states:} Once the code state $\ket{G_0}$ has been fixed, any logical computational basis state $\ket{\bar{x}}$, with $x = (x_1, \dots, x_k)$ a $k$-bit string, is obtained by applying the appropriate pattern of logical $\bar{X}$ operators. Concretely,
\begin{equation*}
    \ket{\bar{x}}  = \prod_{j=1}^k \bar{X}_j^{x_j} \ket{G_0} = \prod_{j=1}^k [X(f_j)]^{x_j} \ket{G_0},
\end{equation*}
where $\bar{X}_j$ is the logical $\bar{X}$ of the $j$-th logical qubit and $f_j$ denotes the $j$-th odd-weight row of $G_1$. In this expression, each bit $x_j$ decides whether the corresponding logical bit-flip is applied or not, so the binary string $x$ directly labels the encoded basis state.
For example, if $x = 0 \dots 0$, no logical $\bar{X}_j$ is applied and one simply recovers the reference code state $\ket{G_0}$, which encodes $\ket{\bar{0} \dots \bar{0}}$. If instead $x$ has a single 1 in position $j$, the operator $\bar{X}_j$ is applied once on $\ket{G_0}$, producing a new superposition that differs from $\ket{G_0}$ by a pattern of physical $X$ operators on the qubits indicated by the support of $f_j$; this state encodes $\ket{\bar{1}}$ on logical qubit $j$ and $\ket{\bar{0}}$ on all others. More general bit strings $x$ correspond to products of these logical $\bar{X}_j$ operators and thus generate the full logical computational basis.
\end{enumerate}
The special usefulness of triorthogonal codes follows from how the pairwise and triple-orthogonality conditions constrain the code structure. Pairwise orthogonality ensures the required commutation relations for the CSS stabilizers and logical representatives, while the odd-weight rows in $G_1$ determine logical operators that are naturally suited for non-Clifford processing. The triple-orthogonality condition provides the additional combinatorial structure needed for transversal or quasi-transversal implementations of logical $T$ gates, possibly accompanied by Clifford corrections.

Their connection with homomorphic encryption is equally natural. Since triorthogonal codes are CSS codes, whenever their stabilizer generators commute with the transversal Pauli operators $X^{\otimes n}$ and $Z^{\otimes n}$, as required by Theorem~\ref{thm:stabilizer-compatibility}, the restricted block-Pauli masking considered here preserves the code space. As a consequence, syndrome extraction, recovery, and encrypted storage can be described directly at the encoded level without leaving the code subspace in the ideal stabilizer-code model.

Moreover, the triorthogonal family considered here admits a code-dependent Clifford unitary $U$ such that, for every logical computational basis state,
\begin{equation}
U T^{\otimes n} \prod_{j=1}^{k} \bar{X}_{j}^{x_j} |G_0\rangle
= e^{i\pi/4 \sum_{j=1}^{k} x_j}
\prod_{j=1}^{k} \bar{X}_{j}^{x_j}  |G_0\rangle .\label{new00}
\end{equation}
This shows that, on the code space, the transversal action of $T^{\otimes n}$ reproduces the desired product of logical $\bar{T}$ gates, up to the Clifford correction $U$. Therefore, suitable triorthogonal codes provide a concrete setting in which transversal Pauli encryption, fault-tolerant error correction, and logical $T$-gate implementation can coexist consistently.
Explicit examples, such as $[[15,1,3]]$ triorthogonal tetrahedral 3D minimal color (Reed--Muller) code \cite{steane1996quantumreedmullercodes}, exist where these properties hold, i.e., they satisfy Theorem \ref{thm:stabilizer-compatibility}. In Appendix~\ref{X}, we present the $[[15,1,3]]$ example and show how the same compatibility
argument extends to the broader family of tetrahedral 3D color codes built on punctured $3$-
colexes \cite{BombinMartinDelgado2007PRL,BombinMartinDelgado2007PRB}.

In what follows, we assume such a triorthogonal CSS code that satisfies the condition provided in Theorem \ref{thm:stabilizer-compatibility} and analyze the homomorphic implementation of a logical $T$ gate on an encrypted encoded qubit.

\textit{Setup.---}
Let the client's aim be to apply the logical $\bar{T}$ gate on the logical state, $\ket{\bar{\psi}}=c_0\ket{\bar{0}}+c_1\ket{\bar{1}}$. To realize this operation physically, the server must apply the $T$ gate transversally to the encoded block, that is, one physical $T$ gate to each of the $n$ qubits of the code. As in the standard homomorphic-encryption protocol, each such application introduces an $S$-error, so the correction of these byproducts must again be handled through gate teleportation. For this reason, before the protocol begins, the client prepares $n$ Bell states, one for each physical qubit involved in the encoded block.

\textit{Encryption.---}
The client then chooses two random bits $(a,b)$ and encrypts the encoded state by applying the transversal Pauli operator $(X^a Z^b)^{\otimes n}$ to the physical qubits. The encrypted state is therefore
\begin{equation}
|\bar{\psi}^{\mathrm{enc}}\rangle
=
(X^a Z^b)^{\otimes n} |\bar{\psi}\rangle .
\end{equation}
Since, by assumption, the triorthogonal code under consideration satisfies the compatibility condition of Theorem~\ref{thm:stabilizer-compatibility}, this encrypted state remains within the code space. This is a restricted block-Pauli masking; when $X^{\otimes n}$ and $Z^{\otimes n}$ act as logical Pauli representatives, it should be interpreted as logical Pauli masking rather than as a full physical $2n$-bit quantum one-time pad. The client finally sends $|\bar{\psi}^{\mathrm{enc}}\rangle$ to the server together with one qubit from each of the $n$ Bell pairs required for the teleportation-based correction stage.

\textit{Evaluation.---}
Once the server receives the encrypted encoded state $|\bar{\psi}^{\mathrm{enc}}\rangle$, it applies the transversal operator $T^{\otimes n}$ to the $n$ physical qubits. Using the standard commutation relation between the $T$ gate and the Pauli encryption operators, the resulting state can be written as
\begin{equation}
|\bar{\psi}'^{\mathrm{enc}}\rangle
=
(TX^aZ^b)^{\otimes n}|\bar{\psi}\rangle
=
\bigl((S^\dagger)^aX^aZ^{a\oplus b}T\bigr)^{\otimes n}|\bar{\psi}\rangle .
\end{equation}
Thus, as in the usual homomorphic treatment of non-Clifford gates, the transversal application of $T^{\otimes n}$ introduces an $S$-type byproduct on each physical qubit. In order to remove these byproducts without revealing the encrypted information, the server performs the teleportation step by swapping the state of the encoded block with the corresponding qubits from the Bell pairs supplied by the client. After this correction stage, the server applies the Clifford unitary $U$ introduced in Eq.~\eqref{new00}. This is precisely the operation that converts the transversal action of $T^{\otimes n}$ on the code space into the action of the logical $\bar{T}$ gate on the encoded qubit. The server then returns all qubits to the client together with the classical information specifying the key-updating rules associated with the application of the Clifford gate $U$.

\textit{Decryption.---}
After receiving the qubits back from the server, the client first performs the appropriate measurements required by the teleportation protocol in order to correct the $S$-errors. Since the server has also applied the Clifford gate $U$, the encryption keys must be updated accordingly. Combining the outcomes of these measurements with the key-updating functions associated with $U$, the client obtains a final pair of keys $(a_i^{f},b_i^{f})$ for each physical qubit $i$. The final state of the encoded block can therefore be written as
\begin{equation}
|\bar{\psi}_{\,f}^{\mathrm{enc}}\rangle
=
\left(\bigotimes_{i=1}^{n} X^{a_i^{f}} Z^{b_i^{f}}\right)
U T^{\otimes n} |\bar{\psi}\rangle .
\end{equation}
By applying the corresponding decryption operator, the client removes the final Pauli encryption and obtains
\begin{equation}
\left(\bigotimes_{i=1}^{n}  Z^{b_i^{f}}X^{a_i^{f}}\right)
|\bar{\psi}_{\,f}^{\mathrm{enc}}\rangle
=
U T^{\otimes n} |\bar{\psi}\rangle .
\end{equation}
Using Eq.~\eqref{new00}, this becomes
\begin{equation}
U T^{\otimes n} |\bar{\psi}\rangle
=
c_0 |\bar{0}\rangle + e^{i\pi/4} c_1 |\bar{1}\rangle=\bar{T}\ket{\bar{\psi}},
\end{equation}
which is the desired output state. The same reasoning applies, mutatis mutandis, to the homomorphic implementation of $\bar{T}^\dagger$ on logical qubits encoded in a triorthogonal code.

\subsection{Universal compatibility via logical gate encryption}
The analysis above shows that encryption based on the transversal physical operators $X^{\otimes n}$ and $Z^{\otimes n}$ is sufficient only for certain code families. In particular, it succeeds for triorthogonal codes, whose structure supports the transversal implementation of the logical $\bar{T}$ gate, but it does not extend in general to other stabilizer codes, such as the Shor code, for the homomorphic evaluation of $T$ or $T^\dagger$. A more general strategy is therefore needed.

Such a strategy is obtained by replacing encryption through transversal physical Pauli operators with encryption through  logical Pauli representatives,
 $\bar{X}$ and $\bar{Z}$, physically realized on the encoded block. The advantage of this point of view is immediate: since these operators are chosen so as to implement the corresponding logical operations on encoded states, they preserve the code subspace by construction. As a result, the compatibility between homomorphic encryption and the implementation of non-Clifford gates no longer depends on special structural properties of the code.

This leads to the following general conclusion: if the masking is performed through logical Pauli operators, then any stabilizer code is code-space compatible with the homomorphic implementation of logical gates. In this sense, logical-gate encryption provides a universal alternative to the transversal physical-gate approach discussed above. The price is that the logical gates themselves, especially non-Clifford gates such as $\bar T$, are code-dependent operations and may require magic-state injection, gauge fixing, code switching, or non-transversal circuit synthesis.

To formulate the protocol, we begin by preparing the entangled resource required for gate teleportation, following the discussion in Sec.~\ref{GT} and then move to the process of homomorphic application of logical $\bar{T}$ gate on a state, say $|\bar{\psi}\rangle$. 

\textit{Setup.---} The client prepares a pair of logical qubits in the encoded Bell state
\begin{equation}
|\bar{\Phi}\rangle
=
\frac{1}{\sqrt{2}}
\left(
|\bar{0}\rangle |\bar{0}\rangle
+
|\bar{1}\rangle |\bar{1}\rangle
\right).
\end{equation}
Although this state is physically realized on encoded blocks of qubits, it is most naturally described at the logical level. Its role is to provide the entanglement resource that makes possible the teleportation-based implementation of the logical $\bar{T}$ gate on encrypted data.

\textit{Encryption.---}
We now describe the encryption step when the client uses logical Pauli representatives physically realized on the encoded block. Let the state of the logical qubit be
\begin{equation}
|\bar{\psi}\rangle = c_0 |\bar{0}\rangle + c_1 |\bar{1}\rangle,
\end{equation}
where $c_0$ and $c_1$ are arbitrary amplitudes. Although this state is physically encoded in a block of qubits, we continue to denote it at the logical level.

For an arbitrary stabilizer code, let $\bar{X}$ and $\bar{Z}$ denote representatives of the logical Pauli operators acting on the encoded qubit. These representatives are physical operators on the underlying code block, but their defining property is their action on the code space. From the general properties established earlier [cf. Eqs.~\eqref{myeq14} and ~\eqref{myeq13}], they reproduce the logical Pauli action within the code space. In particular,
\begin{equation}
\bar{Z}
\left(
c_0 |\bar{0}\rangle + c_1 |\bar{1}\rangle
\right)
=
c_0 |\bar{0}\rangle - c_1 |\bar{1}\rangle,
\end{equation}
so $\bar{Z}$ acts on the encoded subspace exactly as the logical phase-flip operator. Likewise,
\begin{equation}
\bar{X}
\left(
c_0 |\bar{0}\rangle + c_1 |\bar{1}\rangle
\right)
=
c_0 |\bar{1}\rangle + c_1 |\bar{0}\rangle,
\end{equation}
which shows that $\bar{X}$ acts exactly as the logical bit-flip operator.

The encryption step is defined by choosing two random bits $a,b \in \{0,1\}$ and applying the operator
\begin{equation}
|\bar{\psi}_{\mathrm{enc}}\rangle
=
\bar{X}^{\,a} \bar{Z}^{\,b}
|\bar{\psi}\rangle .
\end{equation}
Possible outcomes (depending on $a$ and $b$) of the encryption are presented in the following table:
\begin{center}
\begin{tabular}{|c|c|c|l|} \hline $a$ & $b$ & Result & Interpretation \\ \hline 0 & 0 & $c_0\lvert \bar{0}\rangle + c_1\lvert \bar{1}\rangle$ & No encryption \\ 0 & 1 & $c_0\lvert \bar{0}\rangle - c_1\lvert \bar{1}\rangle$ & Phase flip \\ 1 & 0 & $c_0\lvert \bar{1}\rangle + c_1\lvert \bar{0}\rangle$ & Bit flip \\ 1 & 1 & $c_0\lvert \bar{1}\rangle - c_1\lvert \bar{0}\rangle$ & Both \\ \hline \end{tabular}
\end{center}
\textit{Critical observation.---} Every encrypted state remains a superposition of the logical basis states $\ket{\bar{0}}$ and $\ket{\bar{1}}$. The encrypted state is therefore always in the code subspace by construction. This two-bit logical Pauli masking randomizes a single encoded qubit at the logical level; it should not be confused with applying an independent physical quantum one-time pad to all $n$ qubits of the block.
This is the essential difference from encryption with fixed transversal physical operators such as $X^{\otimes n}$ and $Z^{\otimes n}$: their action is defined directly on the physical qubits and therefore is compatible only with stabilizer codes whose code spaces are preserved by those operators.

The reason logical-Pauli masking works for any stabilizer code is built into the definition of logical operators. Representatives of $\bar{X}$ and $\bar{Z}$ are chosen so that:
\begin{itemize}
    \item[-] they commute with all stabilizer generators;
    \item[-] they transform the code subspace into itself;
    \item[-] they satisfy the Pauli algebra at the logical level.
\end{itemize}
Consequently, applying $\bar{X}$ and $\bar{Z}$ cannot take an encoded state outside the code space. The masked state remains an encoded, correctable state, independently of the particular stabilizer code used.\vspace{1mm}\\
\textit{Comparison: Physical vs Logical Encryption.---}\\
In the following table we concisely compare encryption by transversal physical Paulis with encryption by logical Pauli representatives.  
\begin{widetext}
\centering
\begingroup
\small
\setlength{\tabcolsep}{4pt}
\renewcommand{\arraystretch}{1.05}
\begin{tabular}{|l|l|l|}
\hline
Property & Physical $(X^{\otimes n}, Z^{\otimes n})$ & Logical $(\bar{X}, \bar{Z})$ \\ \hline
\textbf{Scope} & Works only for codes satisfying & Applies to any stabilizer code once\\
    &the compatibility conditions &  logical representatives are fixed \\
\textbf{Code-space} & Requires commutation with & Preserves the code space by definition \\
\textbf{mechanism} & stabilizers & \\
\textbf{Implementation} & Elementary physical Paulis & Requires logical representatives \\
\textbf{Non-Clifford gates} & Requires suitable transversal & Requires code-dependent logical\\
& structure & implementation \\
\textbf{Security} & Restricted block-Pauli masking & Logical Pauli masking \\ 
\textbf{interpretation} & &\\
\hline
\end{tabular}
\endgroup    
\end{widetext}

Finally, the client sends the encrypted encoded logical state
$|\bar{\psi}_{\mathrm{enc}}\rangle$ to the server together with one encoded logical qubit from the Bell state, $\ket{\bar{\Phi}}$, prepared for the teleportation step.

\textit{Evaluation.---}
At this stage, the server receives two logical qubits, both physically realized as encoded blocks of qubits and therefore protected by the underlying error-correcting code. The goal of the server is to implement a logical $\bar{T}$ gate on the encrypted logical qubit.

As discussed above, the operators $\bar{X}$ and $\bar{Z}$ reproduce on the code space the same action that bit-flip and phase-flip operators have on the logical basis states $|\bar{0}\rangle$ and $|\bar{1}\rangle$.
In the same spirit, we may introduce physical representatives of the logical gates $\bar{S}$ and $\bar{T}$ by requiring that, within the code space,
\begin{equation}
\bar{S} |\bar{0}\rangle = |\bar{0}\rangle,
\qquad
\bar{S} |\bar{1}\rangle = i |\bar{1}\rangle,
\end{equation}
and
\begin{equation}
\bar{T} |\bar{0}\rangle = |\bar{0}\rangle,
\qquad
\bar{T} |\bar{1}\rangle = e^{i\pi/4} |\bar{1}\rangle.
\end{equation}
Accordingly, when restricted to the subspace spanned by $\{|\bar{0}\rangle, |\bar{1}\rangle\}$, the set of encoded operators
\begin{equation}
\{\bar{S}, \bar{T}, \bar{X}, \bar{Z}\}
\end{equation}
satisfies the same algebraic and commutation relations as the physical gate set
\begin{equation}
\{S, T, X, Z\}.
\end{equation}
This statement specifies the desired logical action on the code subspace. To obtain a physical gate on the full Hilbert space, $\bar S$ and $\bar T$ must be extended to unitary operators on the orthogonal complement of the code space. Such extensions always exist, but they are not unique and their implementation cost is code-dependent.

Thus, the logical-gate construction should be understood as an algebraic compatibility template for encoded homomorphic evaluation, not as a complete fault-tolerant fully homomorphic encryption scheme with a composable security proof.

This observation allows us to follow the standard homomorphic protocol for the implementation of a $T$ gate, now at the encoded level. The server therefore applies the physical realization of the logical gate $\bar{T}$ to the encoded block carrying the encrypted data. In order to implement the teleportation-based correction step, the server then swaps this encoded block with the encoded block corresponding to one logical qubit of the Bell pair received from the client. After completing these operations, the server returns all encoded qubits to the client for the final decryption stage.

\textit{Decryption.---}
After receiving the encoded logical qubits back from the server, the client completes the decryption procedure by performing a measurement in the $\bar{S}^{\,a}$-rotated logical Bell basis. Although this measurement is naturally formulated at the logical level, it can be implemented physically through the corresponding encoded operations acting on the underlying blocks of qubits.

The measurement outcomes determine the corrective operation that must be applied to the encrypted state. More precisely, depending on the classical result of the rotated Bell measurement, the client applies the appropriate encoded Pauli representatives $\bar{X}$ and $\bar{Z}$ and recovers the state
\begin{equation}
\bar{T}|\bar{\psi}\rangle =c_0 |\bar{0}\rangle + e^{i\pi/4} c_1 |\bar{1}\rangle ,
\end{equation}
which is precisely the target state.

A detailed implementation of this logical-gate-based homomorphic protocol for the case in which the encoded states are realized with the Shor code is presented in Appendix \ref{A4}.

\subsubsection*{Resource-cost comparison with physical-gate encryption}

We now compare the resource requirements of the logical-gate-encryption scheme with those of the transversal physical-gate approach discussed in the previous subsection. The purpose of this comparison is to identify the practical overhead that must be paid in exchange for the universality of logical-gate encryption.

Consider the implementation of a single logical $\bar{T}$ gate on one encoded logical qubit using an $[[n,1,d]]$ stabilizer code. In both approaches, one encoded data block containing $n$ physical qubits is required. Moreover, since the implementation of a non-Clifford gate relies on gate teleportation, entangled auxiliary resources must also be supplied. It is useful to make this bookkeeping explicit. Let $Q_{\rm data}$ denote the number of physical data qubits and let one physical Bell pair contain $Q_{\rm Bell}=2$ qubits. Then
\begin{equation}
Q_{\rm data}=n, \qquad Q_{\rm Bell}=2 .
\end{equation}

In the transversal physical-gate approach, applicable for suitable code families such as triorthogonal codes, the server applies the gate $T$ transversally to the $n$ physical qubits of the encoded block. Since each physical application produces an $S$-type byproduct that must be corrected through teleportation, the protocol requires $n$ physical Bell pairs as auxiliary resources. Therefore,
\begin{equation}
Q_{\rm aux}^{\rm phys}=n Q_{\rm Bell}=2n,
\qquad
Q_{\rm tot}^{\rm phys}=Q_{\rm data}+Q_{\rm aux}^{\rm phys}=3n .
\end{equation}
If elementary one-qubit operations are counted with unit cost, the corresponding operational scaling is
\begin{equation}
C_{\rm enc}^{\rm phys}=O(n),\qquad
C_T^{\rm phys}=n,\qquad
C_{\rm corr}^{\rm phys}=O(n),
\end{equation}
where $C_{\rm enc}^{\rm phys}$ counts the transversal physical Pauli mask, $C_T^{\rm phys}$ counts the $n$ physical $T$ gates, and $C_{\rm corr}^{\rm phys}$ denotes the physical teleportation and Pauli-correction overhead.

In the logical-gate-encryption scheme, the auxiliary resource is one logical Bell pair rather than $n$ physical Bell pairs. However, this logical Bell pair is itself encoded into two logical blocks and therefore occupies $2n$ physical qubits. Thus,
\begin{equation}
Q_{\rm aux}^{\rm log}=2n,
\qquad
Q_{\rm tot}^{\rm log}=Q_{\rm data}+Q_{\rm aux}^{\rm log}=3n .
\end{equation}
The difference is instead in the cost of realizing the logical operators. If $C(\bar U)$ denotes the physical circuit cost of implementing an encoded logical operator $\bar U$, then the logical-gate protocol has the schematic cost
\begin{equation}
C_{\rm enc}^{\rm log}=C(\bar X^a\bar Z^b),
\qquad
C_T^{\rm log}=C(\bar T),
\end{equation}
and the decryption/correction stage contributes
\begin{equation}
C_{\rm dec}^{\rm log}
=C(\bar S^{\,a}\text{-rotated Bell measurement})
+C(\bar X)+C(\bar Z).
\end{equation}
These terms are code-dependent: they reduce to $O(n)$ only when the relevant logical operators admit transversal or otherwise low-depth implementations, but may be substantially larger when non-transversal synthesis, magic-state injection, gauge fixing, or code switching is required.

Hence, the register-size comparison is
\begin{equation}
Q_{\rm tot}^{\rm phys}=Q_{\rm tot}^{\rm log}=3n,
\end{equation}
whereas the operational comparison is more accurately summarized as
\begin{equation}
C_{\rm op}^{\rm phys}=O(n),
\qquad
C_{\rm op}^{\rm log}
=C(\bar X^a\bar Z^b)+C(\bar T)+C_{\rm dec}^{\rm log}.
\end{equation}
Thus, the main difference between the two schemes does not lie in the register-qubit scaling, which is linear in both cases, but in the operational overhead needed to implement the encryption and evaluation steps. In the transversal physical-gate approach, the client encrypts the data using elementary physical Pauli operations applied directly to the physical qubits. This makes the scheme simple to implement whenever the chosen code is compatible with transversal physical encryption.

By contrast, in the logical-gate-encryption scheme the client and server must work with physical realizations of the logical operators $\bar{X}$, $\bar{Z}$, $\bar{S}$, and $\bar{T}$. Although these operators preserve the code space by construction and therefore provide universal compatibility for arbitrary stabilizer codes, their implementation depends on the underlying code and generally requires additional circuit synthesis. As a consequence, the logical-gate-encryption protocol carries a larger implementation overhead than the transversal physical-gate scheme whenever the latter is available.

In conclusion, encryption with physical operators is operationally cheaper and more attractive for near-term implementations, but it only works for restricted code families satisfying the required compatibility conditions. A concrete family with this property is given by tetrahedral 3D color codes \cite{colorcode,BombinMartinDelgado2007PRL,BombinMartinDelgado2007PRB}, including the minimal $[[15,1,3]]$ Reed--Muller/color-code instance \cite{steane1996quantumreedmullercodes}; in Appendix~\ref{X}, we verify the compatibility condition explicitly for this minimal example and show how the same argument extends to the broader tetrahedral family. On the other hand, logical-gate encryption is more resource-demanding at the implementation level, yet it has the decisive advantage of universality since it can be applied to any stabilizer code.

For example, in the Shor-code realization discussed in Appendix \ref{A4}, one encoded logical qubit occupies $9$ physical qubits and one logical Bell pair occupies $18$ physical qubits. Therefore, the data and Bell-resource registers contain $27$ physical qubits in total. This count does not include the additional resources required to synthesize or teleport the non-Clifford logical gate $\bar T$, which are implementation-dependent. The example illustrates that the logical-gate method preserves linear register scaling while paying an additional overhead in encoded-gate construction rather than in asymptotic register count alone.

\section{Conclusion}
\label{sec5}  
This work addresses a fundamental question in quantum information science: under what algebraic conditions can quantum homomorphic encryption and quantum error correction coexist? 
We demonstrate that encrypted, noise-resilient cloud quantum computing is compatible with stabilizer-code error correction at the code-space level, provided specific compatibility conditions are satisfied.

Some of our main results include: First, we establish necessary and sufficient conditions for an $[[n,1,d]]$ stabilizer code to preserve its code space under the restricted transversal block-Pauli masking $U_{\rm enc}(a,b)=(X^aZ^b)^{\otimes n}$. This algebraic criterion, $[X^{\otimes n}, g_i] = [Z^{\otimes n}, g_i] = 0$, is satisfied by the bit-flip and Shor-code examples considered here, with the phase-flip repetition code following analogously. For CSS codes, this reduces to the simple necessary and sufficient classical-code criterion $e\in C_1$ and $e\in C_2^\perp$. The Steane code satisfies this condition; in the CSS/Hamming-code presentation used here, $X^{\otimes 7}$ and $Z^{\otimes 7}$ provide valid logical Pauli representatives \cite{colorcode}. More generally, topological color-code patches in two and three dimensions with the appropriate colored-boundary construction provide natural families where global transversal Pauli operators represent encoded Pauli operators, including triangular 2D color codes and tetrahedral 3D color codes \cite{colorcode,BombinMartinDelgado2007PRL,BombinMartinDelgado2007PRB,KubicaBeverland2015}; the tetrahedral 3D family is verified explicitly in Appendix~\ref{X}.

Second, we analyze quantum computation on encrypted data and identify a fundamental limitation: standard codes (including Shor) lack transversal $T$-gate implementation on encrypted data. This constraint is overcome by two approaches: suitable triorthogonal codes admitting transversal $T$-type logical implementations, possibly up to Clifford corrections, and, more importantly, logical-gate encryption---a general framework applicable to any stabilizer code.

Third, we show that logical-gate masking provides a general route to combining encoded computation with quantum error correction without requiring specialized code families at the level of code-space preservation. This framework is theoretically appealing, but its cryptographic interpretation depends on the logical security definition and its physical efficiency depends on how the required logical gates are implemented.

Taken together, these results distinguish two complementary notions of compatibility. Transversal physical masking and transversal physical gates are operationally simple, but code-selective: the fixed operators $X^{\otimes n}$, $Z^{\otimes n}$, and $T^{\otimes n}$ must preserve the code space and, for computation, implement the desired logical action, possibly only after known Clifford corrections. Logical-gate masking is algebraically universal at the code-space level, because $\bar X$, $\bar Z$, $\bar S$, and $\bar T$ preserve the encoded subspace by construction; however, their physical realization is code-dependent and may be non-transversal or resource intensive. Thus the two approaches trade operational simplicity against universality.

Our results provide a theoretical basis for the design of secure cloud-based quantum-computing architectures. The explicit compatibility criteria developed here offer concrete guidance for identifying which code families support encrypted storage and which additional structures are needed for non-Clifford computation. Key open questions for future research include the following: (1) formal security analysis of encrypted quantum computation against quantum adversaries, (2) resource overhead quantification for realistic quantum architectures, (3) optimal code selection strategies balancing security and fault tolerance, and (4) experimental demonstration of the proposed schemes on near-term quantum processors.

\acknowledgements
We thank P. Fern\'andez Ortiz for useful discussions during the early stages of this work. The authors acknowledge support from Spanish MICIN grant PID2021-122547NB-I00 and the ``MADQuantum-CM'' project funded by Comunidad de Madrid (Programa de acciones complementarias) and by the Ministry for Digital Transformation and of Civil Service of the Spanish Government through the QUANTUM ENIA project call --Quantum Spain project, and by the European Union through the Recovery, Transformation and Resilience Plan Next Generation EU within the framework of the Digital Spain 2026 Agenda, the CAM Programa TEC-2024/COM-84 QUITEMAD-CM. This work has been financially supported by  the project MADQuantum-CM, funded by Comunidad de Madrid (Programa de Acciones Complementarias), and by the Recovery, Transformation and Resilience Plan---funded by the European Union---(Next Generation EU, PRTR-C17.I1). KS acknowledges support from the project MADQuantum-CM, funded by Comunidad de Madrid (Programa de Acciones Complementarias) and by the Recovery, Transformation, and Resilience Plan---Funded by the EuropeanUnion---(NextGeneration EU, PRTR-C17.I1). M.A. M.-D. has been partially supported by the U.S. Army Research Office through Grant No. W911NF-14-1-0103, and Madrid ELLIS Unit CAM.

\appendix
\section{Example of homomorphic encryption scheme for running a two-qubit quantum circuit in cloud}\label{A1}
\begin{figure}[h]
    \centering
\includegraphics[width=0.27\textwidth]{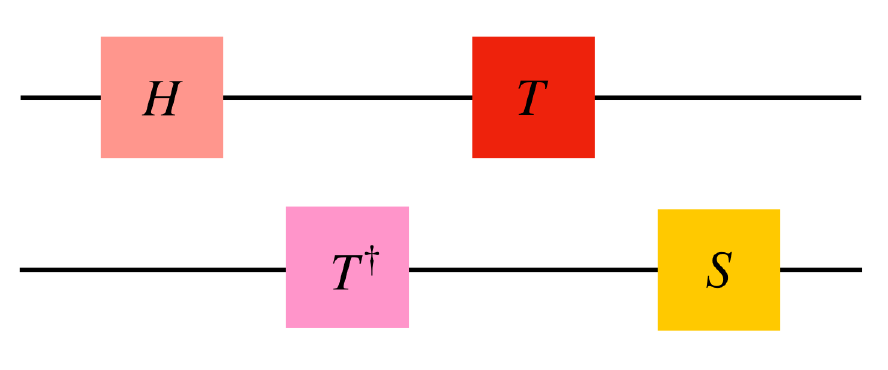}
    \caption{An example of a quantum circuit. The figure presents the structure of a two-qubit quantum circuit, which involves the action of an $H$ and a $T$ gate on the first qubit and $T^\dagger$ and $S$ gates on the second qubit. We discuss the process of implementing this circuit using the homomorphic encryption scheme in detail in Appendix \ref{A1}. }
    \label{fig3}
\end{figure}
Here we discuss an example of cloud computation with homomorphic encryption, where the encryption is done using Pauli gates, $X$ and $Z$. We consider the circuit shown in Figure \ref{fig3} which needs to be applied on two qubits, $w_1$ and $w_2$, initially acquiring the state $\ket{\psi}_{w_1w_2}$. The circuit involves the following gate operations:
\begin{enumerate}
    \item On the first qubit, $w_1$:
    \begin{enumerate}
        \item Application of a Clifford gate, $H$. 
        \item Application of a non-Clifford gate, $T$.
    \end{enumerate}
     \item On the second qubit, $w_2$:
     \begin{enumerate}
        \item Application of a non-Clifford gate, $T^\dagger$. 
        \item Application of a Clifford gate, $S$.
    \end{enumerate}
\end{enumerate}
The client wants to act the circuit on the initial two-qubit state, $\ket{\psi}_{w_1w_2}$, and prepare the final state $(TH\otimes ST^\dagger)\ket{\psi}_{w_1w_2}$ through cloud computation. 
To evaluate the circuit on $\ket{\psi}_{w_1w_2}$ in the cloud, the client and server follow the steps described below.\\
\textit{Setup.---} The circuit includes the application of two non-Clifford gates, $T$ and $T^\dagger$, that implies $n_T=2$. Hence, the client generates two Bell states, $\ket{\Phi}$, shared between the qubits $s_1-c_1$ and $s_2-c_2$. \\
\textit{Encryption.---} 
To encrypt the two-qubit state, $\ket{\psi}_{w_1w_2}$, the client randomly picks two pairs of binary numbers, $(a^0_1,b^0_1)$ and $(a^0_2,b^0_2)$, applies the gates $X^{a^0_1}Z^{b^0_1}\otimes X^{a^0_2}Z^{b^0_2}$ on the state, $\ket{\psi}_{w_1w_2}$, and forms the encrypted state
\begin{equation}
\ket{\psi^{\text{enc}}}_{w_1w_2}=X^{a^0_1}Z^{b^0_1}\otimes X^{a^0_2}Z^{b^0_2}\ket{\psi}_{w_1w_2}.
\end{equation} 
In this scheme, $\{(a^0_1,b^0_1),(a^0_2,b^0_2)\}$ is the set of secret keys, knowing which one can recover the original state, $\ket{\psi}_{w_1w_2}$, by applying $X$ and $Z$ gates appropriately on the encrypted state, $\ket{\psi^{\text{enc}}}_{w_1w_2}$.

Finally, after encryption, the client sends the encrypted state $\ket{\psi^{\text{enc}}}_{w_1w_2}$ to the server for the application of the circuit on the qubits, $w_1$ and $w_2$. The client also sends the qubits, $s_1$ and $s_2$, that are the parts of the two generated Bell states, to the server, which are needed to operate the two non-Clifford gates, $T$ and $T^\dagger$ (see Figure \ref{fig3}), involved in the circuit.\\
\textit{Evaluation.---}
After receiving $\ket{\psi^{\text{enc}}}_{w_1w_2}$, the server applies the gates in the given order and correspondingly keeps a note on the upgradation of the keys. In particular, the server takes the following actions sequentially:
\begin{enumerate}
    \item Application of $H$ on $w_1$: \begin{enumerate}
        \item It acts $H$ gate on the first qubit of $\ket{\psi^{\text{enc}}}_{w_1w_2}$ and prepares the state
        \begin{eqnarray*}
           &&(H\otimes I)(X^{a^0_1}Z^{b^0_1}\otimes X^{a^0_2}Z^{b^0_2})\ket{\psi}_{w_1w_2}\\
           =&&(X^{b^0_1}Z^{a^0_1}\otimes X^{a^0_2}Z^{b^0_2})(H\otimes I)\ket{\psi}_{w_1w_2}.
        \end{eqnarray*}
        \item The qubits, $c_1$, $s_1$, $c_2$, $s_2$, prepared in the Bell states, remain untouched. The complete state of the six qubits is
        \begin{equation*}          (X^{b^0_1}Z^{a^0_1}\otimes X^{a^0_2}Z^{b^0_2})(H\otimes I)\ket{\psi}_{w_1w_2}\ket{\Phi}_{s_1c_1}\ket{\Phi}_{s_2c_2}.
        \end{equation*}
        
        \item Server updates the first pair of keys as $a_1^1=b_1^0,b_1^1=a_1^0$, following the key updating functions, $f_{x_1}^H$ and $f_{z_1}^H$, and keeps the other pair of keys as it is, i.e., $a_2^1=a_2^0,b_2^1=b_2^0.$
    \end{enumerate}
    \item Application of $T$ on $w_1$: 
    \begin{enumerate}
    \item It acts $T$ gate on the first qubit, $w_1$, and gets
    \begin{eqnarray*}
          && (T\otimes I)(X^{a^1_1}Z^{b^1_1}\otimes X^{a^1_2}Z^{b^1_2})(H\otimes I)\ket{\psi}_{w_1w_2}\\
           =&&((S^\dagger)^{a^1_1}X^{a^1_1}Z^{a^1_1\oplus b^1_1}\otimes X^{a^1_2}Z^{b^1_2})(TH\otimes I)\ket{\psi}_{w_1w_2}.
        \end{eqnarray*}
        \item Swaps the qubit $w_1$ with $s_1$ and keeps the qubits $w_2$ and $s_2$ as it is. The complete state of the six qubits at the end of this action is 
        \begin{eqnarray*}
            &&((S^\dagger)^{a^1_1}X^{a^1_1}Z^{a^1_1\oplus b^1_1}\otimes X^{a^1_2}Z^{b^1_2})\\&&(TH\otimes I)\ket{\psi}_{s_1w_2} \ket{\Phi}_{w_1c_1}\ket{\Phi}_{s_2c_2}\\
            &&=\frac{1}{2}\sum_{r_a,r_b} \ket{\Phi(S^{a_1^1})_{r_ar_b}}_{s_1c_1}\\&&[X^{a_1^1\oplus r_a}Z^{a_1^1\oplus b_1^1\oplus r_b}
            \otimes X^{a_2^1}Z^{b_2^1}(TH\otimes I)]\ket{\psi}_{w_1w_2}\ket{\Phi}_{s_2c_2}.
        \end{eqnarray*}       
        \item The corresponding key update, to be applied once the measurement outcomes are known, is $a_1^2=a_1^1\oplus r_a$, $b_1^2=a_1^1\oplus b_1^1\oplus r_b$, while the other pair of keys is unchanged, i.e., $a_2^2=a_2^1,b_2^2=b_2^1.$ For simplicity of notation we denote the first pair of keys as $a_1^2$ and $b_1^2$, but one must not forget about its dependency on $r_a$ and $r_b$. 
        \end{enumerate}
    \item Application of $T^{\dagger}$ on $w_2$: 
    \begin{enumerate}
        \item It acts $T^\dagger$ gate on qubit, $w_2$, and gets\begin{eqnarray*}
          && (I\otimes T^\dagger)(X^{a_1^2}Z^{b_1^2}
            \otimes X^{a_2^2}Z^{b_2^2})(TH\otimes I)\ket{\psi}_{w_1w_2}\\
           =&&(X^{a_1^2}Z^{b_1^2}
            \otimes S^{a^2_2}X^{a_2^2}Z^{a_2^2\oplus b_2^2})(TH\otimes T^{\dagger})\ket{\psi}_{w_1w_2}.
        \end{eqnarray*}
        \item Swaps the qubit $w_2$ with $s_2$. Keeps the qubits $w_1$ and $s_1$ as they are. Hence the complete state of the six qubits is 
        \begin{eqnarray*}
            &&\frac{1}{2}\sum_{r_a,r_b}\ket{\Phi(S^{a_1^1})_{r_ar_b}}_{s_1c_1}(X^{a_1^2}Z^{b_1^2}
            \otimes S^{a^2_2}X^{a_2^2}Z^{a_2^2\oplus b_2^2})\\&&\hspace{3cm}(TH\otimes T^{\dagger})\ket{\psi}_{w_1s_2} \ket{\Phi}_{w_2c_2}\\&&
            =\frac{1}{4}\sum_{r_a,r_b,r'_a,r'_b} \ket{\Phi(S^{a_1^1})_{r_ar_b}}_{s_1c_1}\ket{\Phi((S^\dagger)^{a_2^2})_{r'_ar'_b}}_{s_2c_2} \\&&(X^{a_1^2}Z^{b_1^2}
            \otimes X^{a_2^2\oplus r'_a}Z^{a_2^2\oplus b_2^2\oplus r'_b})(TH\otimes T^{\dagger})\ket{\psi}_{w_1w_2}.
        \end{eqnarray*}       
        \item The corresponding key update leaves the first pair unchanged, i.e., $a^3_1=a_1^2$, $b_1^3=b_1^2$, and maps the second pair to $a_2^3=a_2^2\oplus r'_a$, $b_2^3=a_2^2\oplus b_2^2 \oplus r'_b$. Again, readers must keep in mind that $a_2^3$ and $b_2^3$ depend on $r'_a$ and $r'_b$, respectively. 
        \end{enumerate}
        \item Application of $S$ gate on $w_2$: 
    \begin{enumerate}
        \item It acts $S$ gate on qubit, $w_2$, and gets\begin{eqnarray*}
          &&(I\otimes S) (X^{a_1^3}Z^{b_1^3}
            \otimes X^{a_2^3}Z^{b_2^3})(TH\otimes T^{\dagger})\ket{\psi}_{w_1w_2}\\
            &&=(X^{a_1^3}Z^{b_1^3}
            \otimes X^{a_2^3}Z^{a_2^3\oplus b_2^3})(TH\otimes ST^{\dagger})\ket{\psi}_{w_1w_2}.
        \end{eqnarray*}
        \item Does not change the state of the other qubits. The entire state of the six qubits is
        \begin{eqnarray*}        &&\frac{1}{4}\sum_{r_a,r_b,r'_a,r'_b} \ket{\Phi(S^{a_1^1})_{r_ar_b}}_{s_1c_1}\ket{\Phi((S^\dagger)^{a_2^2})_{r'_ar'_b}}_{s_2c_2}\\
        &&(X^{a_1^3}Z^{b_1^3}
            \otimes X^{a_2^3}Z^{a_2^3\oplus b_2^3})(TH\otimes ST^{\dagger})\ket{\psi}_{w_1w_2}.
        \end{eqnarray*}
        \item The server keeps the first pair of keys unchanged, i.e., $a_1^4=a_1^3$, $b_1^4=b_1^3$ and updates the second pair of keys following the rules of $f_{x_2}^S$ and $f_{z_2}^S$ as  $a_2^4=a_2^3,b_2^4=a_2^3\oplus b_2^3$. 
        \end{enumerate}
\end{enumerate}
After the end of the application of all gates, the final joint state of all the qubits is
\begin{eqnarray*}
    &&\frac{1}{4}\sum_{r_a,r_b,r'_a,r'_b} \ket{\Phi(S^{a_1^1})_{r_ar_b}}_{s_1c_1}\ket{\Phi((S^\dagger)^{a_2^2})_{r'_ar'_b}}_{s_2c_2}\\
        &&(X^{a_1^4}Z^{b_1^4}
            \otimes X^{a_2^4}Z^{b_2^4})(TH\otimes ST^{\dagger})\ket{\psi}_{w_1w_2}.
\end{eqnarray*}
At this point, the server sends all the qubits to the client along with the key updating rules.
\\
\textit{Decryption.---}
 The client can now decrypt the data by using the key updating rules, performing rotated Bell measurements, and applying $X$ and $Z$ gates appropriately. The steps followed by the client for each application of the gates are described below.
 \begin{enumerate}
     \item Knowing the key updating functions, $f^H_{x_1}$ and $f^H_{z_1}$, the client finds $(a_1^1,b_1^1)$. The keys $(a_2^1,b_2^1)$ are known to be the same as $(a_2^0,b_2^0)$.
     \item The client performs a measurement on the $\left\{\ket{\Phi(S^{a_1^1})_{r_ar_b}}\right\}$ basis on $s_1c_1$ qubits. From the measurement outcomes, say $r_a$ and $r_b$, the client gets the keys, $a_1^2=a_1^1\oplus r_a$ and $b_1^2=a_1^1\oplus b_1^1\oplus r_b$. The keys of the second qubits remain the same, i.e., $a_2^2=a_2^1$ and $b_2^2=b_2^1$.
     \item Again the client performs a measurement in another rotated Bell basis, which is $\{\ket{\Phi((S^\dagger)^{a_2^2})_{r'_ar'_b}}\}$, on $s_2c_2$. Depending on the measurement outcomes, $r'_a$ and $r'_b,$ it updates the keys as $a_2^3=a_2^2\oplus r'_a$ and $b_2^3=a_2^2\oplus b_2^2 \oplus r'_b$. The keys corresponding to the first qubit are kept unchanged, i.e., $a_1^3=a_1^2$ and $b_1^3=b_1^2$.
     \item From the key updating functions, $f^S_{x_2}$ and $f^S_{z_2}$, the client evaluates $(a_2^4,b_2^4)$. The keys of the first pair of qubits are kept fixed at $a_1^4=a_1^3$, $b_1^4=b_1^3$.
 \end{enumerate}
At the end of the measurements, the state of the qubits, $w_1$ and $w_2$, collapses to $$\ket{\psi^{\text{enc}}_{final}}=(X^{a_1^4}Z^{b_1^4}
            \otimes X^{a_2^4}Z^{b_2^4})(TH\otimes ST^{\dagger})\ket{\psi}_{w_1w_2}.$$
The client finally gets the targeted state by applying $(Z^{b_1^4}X^{a_1^4}
            \otimes Z^{b_2^4}X^{a_2^4})$ on the state $\ket{\psi^{\text{enc}}_{final}}$, which is $$(Z^{b_1^4}X^{a_1^4}
            \otimes Z^{b_2^4}X^{a_2^4})\ket{\psi^{\text{enc}}_{final}}= (TH\otimes ST^{\dagger})\ket{\psi}_{w_1w_2},$$ that is the final desired state. 
 
\section{Table related to bit-flip error correction of encrypted data}
\label{A2}
To protect a logical qubit in the state $\ket{\bar{\psi}}=c_0\ket{\bar{0}}+c_1\ket{\bar{1}}$ from bit-flip errors, the three-qubit bit-flip code encodes $\ket{\bar{0}}=\ket{000}$ and $\ket{\bar{1}}=\ket{111}$, so that $\ket{\bar{\psi}}=c_0\ket{{000}}+c_1\ket{{111}}$. To store this state in the cloud without revealing the logical amplitudes, the client chooses a random pair of binary keys, $(a,b)$, and applies the transversal Pauli mask $(X^aZ^b)^{\otimes 3}$. The encrypted state is then
\begin{equation}
\ket{\bar{\psi}^{\text{enc}}}
=(X^aZ^b)^{\otimes 3}\ket{\bar{\psi}}
=c'_0\ket{{000}}+c'_1\ket{{111}}.
\end{equation}
The exact form of the amplitudes, $c'_0$ and $c'_1$, depends on the keys $(a,b)$ as shown in Table \ref{T1}. In all four cases, the encrypted state remains in the bit-flip code space, which is spanned by $\{\ket{{000}},\ket{{111}}\}$.
\vspace{5mm}
\begin{table}[h]
    \centering
    \begin{tabular}{|c|c|c|}
    \hline
        $\boldsymbol{a}$, $\boldsymbol{b}$ &$\boldsymbol{c_0'}$&$\boldsymbol{c_1'}$\\
                  \hline
          $a=0$, $b=0$&$c_0$& $c_1$\\
          \hline
        $a=0$, $b=1$& $c_0$& $-c_1$\\
        \hline
         $a=1$, $b=0$ &$c_1$& $c_0$\\
         \hline
         $a=1$, $b=1$&$-c_1$&$c_0$\\
         \hline
         
    \end{tabular} \caption{Values of $c_0'$ and $c_1'$ for the four possible choices of $a$ and $b$.}
    \label{T1}
\end{table}
\section{Encoded states and CSS compatibility of the Steane code}
\label{A3}
The Steane code can be described using two linear classical codes, denoted by $C_1$ and $C_2$, with dimensions $k_1=4$ and $k_2=3$, respectively. We use generator matrices written as maps from binary column vectors to length-seven codewords:
\begin{eqnarray}
    \mathbb{G}[C_1]=\left[\begin{matrix}
        1&0&0&0\\
        0&1&0&0\\
        0&0&1&0\\
        0&0&0&1\\
        0&1&1&1\\
        1&0&1&1\\
        1&1&0&1
    \end{matrix}\right] \text{ and }
    \mathbb{G}[C_2]=\left[\begin{matrix}
        0&0&1\\
        0&1&0\\
        0&1&1\\
        1&0&0\\
        1&0&1\\
        1&1&0\\
        1&1&1
    \end{matrix}\right].
\end{eqnarray}
Thus, a word $x\in\mathbb{F}_2^4$ is encoded as $\mathbb{G}[C_1]x$, while a word $y\in\mathbb{F}_2^3$ is encoded as $\mathbb{G}[C_2]y$, with all operations performed over $\mathbb{F}_2$. For example, for $x=(1,0,1,0)^T$ and $y=(0,1,1)^T$, one obtains
\begin{equation}
\mathbb{G}[C_1]x=1010101,\qquad
\mathbb{G}[C_2]y=1100110.
\end{equation}
Multiplying all possible binary input words by the corresponding generator matrices gives
\begin{widetext}
\begin{eqnarray}
C_1 &=& \{0000000, 0001111, 0010110, 0011001, 0100101, 0101010, 0110011, 0111100,\nonumber\\
&&\quad 1000011, 1001100, 1010101, 1011010, 1100110, 1101001, 1110000,1111111\},\label{myeq15}\\
C_2&=&\{0000000, 1010101, 0110011, 1100110, 0001111, 1011010, 0111100, 1101001\}.\label{myeq16}
\end{eqnarray}
\end{widetext}
Since $C_2\subset C_1$, the CSS construction encodes $k_1-k_2=1$ logical qubit. The two logical computational basis states are
\begin{widetext}
\begin{eqnarray}
\ket{\bar{0}} &=& \frac{1}{\sqrt{8}} \Big[
\ket{0000000} + \ket{1010101} + \ket{0110011} + \ket{1100110}\nonumber\\
&&\quad + \ket{0001111} + \ket{1011010} + \ket{0111100} + \ket{1101001}
\Big], \label{eq:0L}\\
\ket{\bar{1}} &=& \frac{1}{\sqrt{8}} \Big[
\ket{1111111} + \ket{0101010} + \ket{1001100} + \ket{0011001}\nonumber\\
&&\quad + \ket{1110000} + \ket{0100101} + \ket{1000011} + \ket{0010110}
\Big]. \label{eq:1L}  
\end{eqnarray}
\end{widetext}
These expressions follow from Eq. \eqref{eq:css-basis}: $\ket{\bar{0}}$ is the uniform superposition over $C_2$, while $\ket{\bar{1}}$ is the uniform superposition over the coset $C_2\oplus e$, where $e=1111111$.

We now verify explicitly the CSS compatibility criterion of Theorem \ref{thm:css-compatibility}. Let $e=1111111$ be the all-ones word of length seven. From Eq. \eqref{myeq15}, $e\in C_1$. Hence, for every $x\in C_1$, linearity of $C_1$ implies $x\oplus e\in C_1$, so the condition associated with $X^{\otimes 7}$ is satisfied. From Eq. \eqref{myeq16}, every word in $C_2$ has even Hamming weight:
\begin{equation}
|y|\equiv e\cdot y=0 \pmod 2,\qquad \forall\,y\in C_2 .
\end{equation}
Equivalently, $e\in C_2^\perp$, so the condition associated with $Z^{\otimes 7}$ is also satisfied. Therefore the Steane CSS code satisfies both conditions of Theorem \ref{thm:css-compatibility}. Explicitly,
\begin{equation}
X^{\otimes 7}\ket{\bar{0}}=\ket{\bar{1}},\qquad
X^{\otimes 7}\ket{\bar{1}}=\ket{\bar{0}},
\end{equation}
and
\begin{equation}
Z^{\otimes 7}\ket{\bar{0}}=\ket{\bar{0}},\qquad
Z^{\otimes 7}\ket{\bar{1}}=-\ket{\bar{1}}.
\end{equation}
Thus, in the CSS/Hamming-code presentation used here, $X^{\otimes 7}$ and $Z^{\otimes 7}$ preserve the code space and act as logical Pauli representatives on the encoded qubit \cite{colorcode}.
\section{Compatibility of the $[[15,1,3]]$ triorthogonal code with Theorem 1}
\label{X}
In this Appendix, we present an explicit triorthogonal CSS code and verify that it satisfies the homomorphic-compatibility criterion stated in Theorem \ref{thm:stabilizer-compatibility} of the main text. According to Theorem \ref{thm:stabilizer-compatibility}, a stabilizer code, encoding 1 logical qubit in $n$ physical qubits, is compatible with the transversal Pauli encryption scheme generated by $X^{\otimes n}$ and $Z^{\otimes n}$ if and only if these operators commute with all stabilizer generators. 

We consider the standard $[[15,1,3]]$ triorthogonal Reed--Muller code \cite{steane1996quantumreedmullercodes}. In the triorthogonal construction reviewed in Sec.~\ref{sec4}, a binary matrix $G$ is decomposed into a set $G_1$ of odd-weight rows and a set $G_0$ of even-weight rows, where the rows in $G_0$ define the CSS stabilizer structure and the rows in $G_1$ determine the logical operators. 

Let the columns be ordered by the nonzero elements of $\mathbb{F}_2^4$, i.e., the nonzero binary 4-vectors and consider the binary matrix
\setcounter{MaxMatrixCols}{15}
\begin{equation}
G=
\begin{pmatrix}
1&1&1&1&1&1&1&1&1&1&1&1&1&1&1\\
0&0&0&0&0&0&0&1&1&1&1&1&1&1&1\\
0&0&0&1&1&1&1&0&0&0&0&1&1&1&1\\
0&1&1&0&0&1&1&0&0&1&1&0&0&1&1\\
1&0&1&0&1&0&1&0&1&0&1&0&1&0&1
\end{pmatrix}.
\tag{A1}
\end{equation}
Denoting its rows by $r_0,r_1,r_2,r_3,r_4$, we observe that $r_0$ has odd Hamming weight $15$, whereas each of $r_1,r_2,r_3,r_4$ has even Hamming weight $8$. Therefore, in the notation of Sec.~\ref{sec4}, we identify
\begin{equation}
G_1=\{r_0\},\qquad G_0=\{r_1,r_2,r_3,r_4\},
\tag{A2}
\end{equation}
so that the resulting code encodes a single logical qubit, as expected for the $[[15,1,3]]$ code.

\subsection{Triorthogonality}

We first verify that the matrix $G$ is triorthogonal. By definition, this requires that every pair of distinct rows have even overlap and that every triple of distinct rows also have even overlap.

For the pairwise overlaps, one finds
\begin{equation}
|r_0\wedge r_i|=8,\qquad i=1,2,3,4,
\tag{A3}
\end{equation}
and, for $i\neq j$ with $i,j\in\{1,2,3,4\}$,
\begin{equation}
|r_i\wedge r_j|=4.
\tag{A4}
\end{equation}
Hence, every pairwise overlap has even parity.

For the triple overlaps, one similarly obtains
\begin{equation}
|r_0\wedge r_i\wedge r_j|=4,\qquad i\neq j,
\tag{A5}
\end{equation}
and
\begin{equation}
|r_i\wedge r_j\wedge r_k|=2,\qquad i,j,k\ \text{distinct in}\ \{1,2,3,4\}.
\tag{A6}
\end{equation}
Therefore, every triple overlap is also even, and the matrix $G$ is triorthogonal.

\subsection{CSS code associated with $G$}

Following the construction described in Sec.~\ref{sec4}, the even-weight rows in $G_0$ generate the $X$-type stabilizer sector of the CSS code, while the full matrix $G$ determines the logical structure of the encoded subspace.  Since $G_1$ contains a single odd-weight row, the corresponding CSS code encodes one logical qubit. 

The $X$-type stabilizer generators are therefore
\begin{equation}
X(r_i)=\bigotimes_{j=1}^{15} X_j^{(r_i)_j},\qquad i=1,2,3,4.
\tag{A7}
\end{equation}
The $Z$-type stabilizers are generated by binary vectors $w\in G^\perp$, namely vectors orthogonal over $\mathbb{F}_2$ to every row of $G$, through
\begin{equation}
Z(w)=\bigotimes_{j=1}^{15} Z_j^{w_j}.
\tag{A8}
\end{equation}
This gives the standard CSS realization of the $[[15,1,3]]$ triorthogonal code. 

\subsection{Compatibility with transversal Pauli encryption}

We now verify that this code satisfies Theorem \ref{thm:stabilizer-compatibility}. In the notation of the main text, the encryption is implemented by the transversal physical operators $X^{\otimes 15}$ and $Z^{\otimes 15}$. Theorem \ref{thm:stabilizer-compatibility} states that the homomorphic encryption scheme is compatible with the stabilizer code if and only if
\begin{equation}
[X^{\otimes 15},g_i]=0,\qquad [Z^{\otimes 15},g_i]=0
\tag{A9}
\end{equation}
for every stabilizer generator $g_i$. 

\subsubsection{Commutation with $Z^{\otimes 15}$}

Let $v\in\mathbb{F}_2^{15}$ and define
\begin{equation}
X(v)=\bigotimes_{j=1}^{15} X_j^{v_j}.
\tag{A10}
\end{equation}
Then
\begin{equation}
Z^{\otimes 15}X(v)=(-1)^{|v|}X(v)Z^{\otimes 15},
\tag{A11}
\end{equation}
where $|v|$ denotes the Hamming weight of $v$.

Since each row $r_i$ in $G_0$ has even weight,
\begin{equation}
|r_i|=8,\qquad i=1,2,3,4,
\tag{A12}
\end{equation}
it follows immediately that
\begin{equation}
[Z^{\otimes 15},X(r_i)]=0,\qquad i=1,2,3,4.
\tag{A13}
\end{equation}
Moreover, $Z^{\otimes 15}$ trivially commutes with every $Z$-type stabilizer $Z(w)$, since both are tensor products of $Z$ operators and identities. Hence, $Z^{\otimes 15}$ commutes with all stabilizer generators.

\subsubsection{Commutation with $X^{\otimes 15}$}

Now let
\begin{equation}
Z(w)=\bigotimes_{j=1}^{15} Z_j^{w_j},
\tag{A14}
\end{equation}
with $w\in G^\perp$. Then
\begin{equation}
X^{\otimes 15}Z(w)=(-1)^{|w|}Z(w)X^{\otimes 15}.
\tag{A15}
\end{equation}
Therefore, $X^{\otimes 15}$ commutes with $Z(w)$ whenever $|w|$ is even.

Since $w\in G^\perp$, it is orthogonal to every row of $G$, and in particular to the all-ones row $r_0$. Thus,
\begin{equation}
0=w\cdot r_0=\sum_{j=1}^{15}w_j=|w| \pmod 2.
\tag{A16}
\end{equation}
Hence, every vector $w\in G^\perp$ has even Hamming weight, which implies
\begin{equation}
[X^{\otimes 15},Z(w)]=0
\tag{A17}
\end{equation}
for every $Z$-type stabilizer generator. Since $X^{\otimes 15}$ also commutes trivially with all $X$-type stabilizers, it follows that $X^{\otimes 15}$ commutes with the entire stabilizer group.

\vspace{2mm}
We have shown that both transversal Pauli encryption operators, $X^{\otimes 15}$ and $Z^{\otimes 15}$, commute with all stabilizer generators of the $[[15,1,3]]$ triorthogonal code. Therefore, by Theorem \ref{thm:stabilizer-compatibility}, this code is compatible with the homomorphic encryption scheme considered in the main text. 
In particular, the encryption maps encoded states into the code subspace, so syndrome extraction and error correction can be performed directly on encrypted data without prior decryption. This is precisely the sense in which the code is homomorphically compatible for encrypted cloud storage in the ideal stabilizer-code model. 

Furthermore, since this code is triorthogonal, it also belongs to the class of CSS codes admitting a transversal non-Clifford phase gate. With the convention used above, the codewords in the even row span of $G_0$ have Hamming weight $0$ modulo $8$, whereas the codewords in the logical coset $r_0+{\rm span}(G_0)$ have Hamming weight $7$ modulo $8$. Hence $T^{\otimes 15}$ acts on the encoded qubit as a logical phase gate equivalent to $\bar{T}^{\dagger}$ up to an overall phase. Equivalently, by composing with an appropriate logical Clifford correction, one obtains the desired logical $\bar{T}$ action, in agreement with the triorthogonal-code discussion in Sec.~\ref{sec4}. Thus, the $[[15,1,3]]$ tetrahedral 3D color code (Reed--Muller code) provides an explicit example satisfying both the triorthogonality conditions and the compatibility condition of Theorem \ref{thm:stabilizer-compatibility}.

\subsection{Extension to larger tetrahedral 3D color codes}
The $[[15,1,3]]$ triorthogonal code considered above is the smallest member of a broader
family of tetrahedral 3D color codes constructed from punctured $3$-colexes
\cite{BombinMartinDelgado2007PRL,BombinMartinDelgado2007PRB}. The compatibility
argument used above extends to the tetrahedral family considered here once one reformulates
the Pauli-encryption condition in geometric terms. This result should be viewed as
complementary to the known fault-tolerant feature of suitable 3D color codes, namely
that transversal physical non-Clifford rotations can implement logical non-Clifford gates.
Consider a tetrahedral 3D color code defined on a punctured $3$-colex with $n$ physical qubits
placed on vertices \cite{BombinMartinDelgado2007PRL,BombinMartinDelgado2007PRB}. In the
stabilizer convention used here, a generating set may be chosen with $Z$-type operators
supported on faces and $X$-type operators supported on 3-cells. Thus, for each relevant face
$F$ and each relevant 3-cell $C$, we write
\begin{equation}
S_Z(F)=\prod_{v\in F} Z_v,
\qquad
S_X(C)=\prod_{v\in C} X_v.
\end{equation}
We now verify that the transversal Pauli operators
\begin{equation}
X^{\otimes n}=\prod_{v\in V} X_v,
\qquad
Z^{\otimes n}=\prod_{v\in V} Z_v
\end{equation}
commute with all stabilizer generators. By Theorem~\ref{thm:stabilizer-compatibility}, this will imply homomorphic compatibility
with transversal Pauli encryption for such tetrahedral codes.
A basic property of punctured $3$-colexes is that their vertices admit a bipartition into starred
and unstarred sets \cite{BombinMartinDelgado2007PRB},
\begin{equation}
V=V_\bullet \sqcup V_\star,
\end{equation}
with the parity property needed below: the stabilizer supports used in this construction contain
an even number of vertices. In particular, every face $F$ supporting a $Z$-type generator and
every 3-cell $C$ supporting an $X$-type generator has even cardinality:
\begin{equation}
|F|\equiv 0 \pmod 2,
\qquad
|C|\equiv 0 \pmod 2.
\end{equation}
We first consider commutation with $Z$-type stabilizers. Since $X$ anticommutes with $Z$ on
the same qubit, for any face $F$ one has
\begin{equation}
X^{\otimes n} S_Z(F)=(-1)^{|F|} S_Z(F) X^{\otimes n}.
\end{equation}
Because $|F|$ is even, it follows that
\begin{equation}
[X^{\otimes n},S_Z(F)]=0
\end{equation}
for every face $F$. Moreover, $X^{\otimes n}$ trivially commutes with every $X$-type stabilizer
$S_X(C)$.
Similarly, for any 3-cell $C$,
\begin{equation}
Z^{\otimes n} S_X(C)=(-1)^{|C|} S_X(C) Z^{\otimes n}.
\end{equation}
Since $|C|$ is even, we obtain
\begin{equation}
[Z^{\otimes n},S_X(C)]=0
\end{equation}
for every 3-cell $C$. Also, $Z^{\otimes n}$ trivially commutes with every $Z$-type stabilizer
$S_Z(F)$.
Hence both transversal Pauli operators, $X^{\otimes n}$ and $Z^{\otimes n}$, commute with all
stabilizer generators of the tetrahedral 3D color code. Therefore, by
Theorem~\ref{thm:stabilizer-compatibility}, the tetrahedral 3D color codes satisfying the above
standard colex parity conditions are compatible with the homomorphic encryption scheme
generated by transversal Pauli encryption.
For the usual tetrahedral patches encoding a single logical qubit, the global Pauli operators
$X^{\otimes n}$ and $Z^{\otimes n}$ may be identified, up to stabilizers and convention-dependent
logical labeling, with encoded Pauli representatives. The commutation proof itself, however,
only establishes the normalizer property required for code-space preservation under Pauli
masking; it does not establish transversal implementation of non-Clifford gates.
This shows that the compatibility established explicitly above for the $[[15,1,3]]$ tetrahedral 3D
color code is not an isolated feature of the minimal code, but a geometric parity property of the
tetrahedral family in this Pauli-encryption sense. In particular, encrypted encoded states remain inside the code subspace, so
syndrome extraction and error correction can be carried out directly on encrypted data without
prior decryption.

\section{Homomorphic encryption scheme using logical gates with Shor code}
\label{A4}
In this Appendix, we consider a scenario where a client wants to apply a $T$ gate to a single logical qubit in the cloud. Let the encoded logical state be $\ket{\bar{\psi}}$. The target is to prepare $\bar{T}\ket{\bar{\psi}}$. To protect it from noise, the client encodes the state using the Shor code. The process of applying a logical $\bar{T}$ gate to the encoded state in the cloud using homomorphic encryption is described below in detail, starting from a discussion of the encoding.
The logical representatives constructed below are used to show that suitable unitary extensions exist on the full physical Hilbert space. They should not be interpreted as efficient or fault-tolerant circuit syntheses of these gates.
\subsection{Stabilizer formalism of the Shor code}
We first recall the stabilizer description of the Shor code.\\
\textit{Stabilizer generators.---}
    The following independent commuting stabilizers generate a stabilizer group:
\begin{eqnarray}
    g_1 &=& Z_1\otimes Z_2\text{, }
g_2 = Z_2 \otimes  Z_3\text{, }
g_3= Z_4\otimes  Z_5,\\
g_4 &=& Z_5\otimes  Z_6\text{, }
g_5 = Z_7\otimes  Z_8\text{, }
g_6 = Z_8\otimes  Z_9,\\
g_7 &=& X_1\otimes  X_2\otimes  X_3\otimes  X_4\otimes  X_5\otimes  X_6,\\
g_8 &=& X_4\otimes  X_5\otimes  X_6\otimes  X_7\otimes  X_8\otimes  X_9,
\end{eqnarray}
where $Z_i\otimes Z_j$ denotes application of the Pauli $Z$ gate on the $i$th and $j$th qubits and identity on the other qubits. Similarly, $g_7$ and $g_8$ involve application of $X$ gates on the qubits denoted in the subscript of $X$ and identity on the unmentioned qubits. The stabilizer group generated by $\{g_i\}_{i=1}^8$ does not contain $-I$. Moreover, all generators are elements of the 9-qubit Pauli group, $G_9$. Since there are eight independent generators, the dimension of $V_S$, the vector space stabilized by the stabilizer group, is $2^{9-8}=2$.\\ 
\textit{Logical $\bar{Z}$ operator.---} Another independent operator, $\bar{Z} \in G_9$, is defined as $\bar{Z}=X^{\otimes 9}$, which commutes with all the generators, $\{g_i\}$. The action of this physical representative on the code space is the logical phase-flip action.\\
\textit{Logical $\bar{X}$ operator.---} In addition, another operator can be defined as $\bar{X}=Z^{\otimes 9}$, which belongs to $G_9$ and satisfies the relations $\bar{X}\bar{Z}=-\bar{Z}\bar{X}$ and $\bar{X}g_i=g_i\bar{X}$ for all $i \in \{1,\dots,8\}$. This physical representative plays the role of the logical bit-flip operator on the code space.\\
\textit{Code space.---} The states, $\ket{\bar{0}}$ and $\ket{\bar{1}}$, stabilized by $g_1$ $\cdot\cdot\cdot$, $g_8$ and, respectively, by  $\bar{Z}$ and $-\bar{Z}$, are found to be 
\begin{eqnarray}
         |\bar{0}\rangle&=&(\ket{\boldsymbol{0}}+\ket{\boldsymbol{1}})^{\otimes 3}/\sqrt{8}\text{ and }\\
    |\bar{1}\rangle&=&(\ket{\boldsymbol{0}}-\ket{\boldsymbol{1}})^{\otimes 3}/\sqrt{8}.
    \end{eqnarray} 
    Here we have used the notation $\ket{\boldsymbol{0}}=\ket{000}$ and $\ket{\boldsymbol{1}}=\ket{111}$. The set of states $\{\ket{\bar{0}},\ket{\bar{1}}\}$ is a basis of the code space, and its span forms the entire code space.

This stabilizer group therefore defines a code with parameters $[[9,1,3]]$: it encodes $k=1$ logical qubit into $n=9$ physical qubits, and the resulting two-dimensional code space is spanned by $\ket{\bar{0}}$ and $\ket{\bar{1}}$. This is the Shor code.
 \subsection{The physical gates encoding action of the logical gates}
 Now we focus only on the two-dimensional code space spanned by $\{\ket{\bar{0}},\ket{\bar{1}}\}$. The physical representatives $\bar{X}=Z^{\otimes 9}$ and $\bar{Z}=X^{\otimes 9}$ reproduce the logical bit-flip and phase-flip actions on this code space:
\begin{eqnarray}
\bar{X}\ket{\bar{0}}=\ket{\bar{1}}\text{ and }\bar{X}\ket{\bar{1}}=\ket{\bar{0}},\\  \bar{Z}\ket{\bar{0}}=\ket{\bar{0}}\text{ and }\bar{Z}\ket{\bar{1}}=-\ket{\bar{1}}.
\end{eqnarray}
With the phase-repetition convention used for the Shor code, the physical all-$X$ operator represents logical $\bar Z$, while the physical all-$Z$ operator represents logical $\bar X$.

For the cloud-computation protocol, we also need physical representatives of the logical gates $\bar{S}$ and $\bar{T}$ such that, on the code space,
\begin{eqnarray}
\bar{S}\ket{\bar{0}}=\ket{\bar{0}}\text{, }\bar{S}\ket{\bar{1}}=i\ket{\bar{1}}\text{ and }\label{ap1}\\
\bar{T}\ket{\bar{0}}=\ket{\bar{0}}\text{, }\bar{T}\ket{\bar{1}}=e^{i\pi/4}\ket{\bar{1}}\label{ap2}.
\end{eqnarray}
These equations specify only the logical action. They do not by themselves define a physical gate on the full $2^9$-dimensional Hilbert space. A valid deterministic physical implementation must be unitary on the whole Hilbert space, or else be realized as a specified completely positive trace-preserving operation with measurements and ancillas.

Let
\begin{equation}
P_1=\ket{\bar{1}}\bra{\bar{1}}
\end{equation}
be the projector onto the one-dimensional logical state $\ket{\bar{1}}$. The simplest unitary extensions of the desired logical gates are obtained by leaving the entire orthogonal complement of $\ket{\bar{1}}$ invariant:
\begin{eqnarray}
\bar{S}&=&I+(i-1)P_1,\\
\bar{T}&=&I+\left(e^{i\pi/4}-1\right)P_1.
\end{eqnarray}
Equivalently, choose any orthonormal basis of the full $2^9$-dimensional physical Hilbert space whose first two vectors are $\ket{\bar{0}}$ and $\ket{\bar{1}}$, and whose remaining vectors $\{\ket{\chi_j}\}_{j=1}^{2^9-2}$ span the orthogonal complement of the Shor code space. In this ordered basis, one possible unitary extension is
\begin{eqnarray}
\bar{S}&=&\operatorname{diag}(1,i,I_{2^9-2}),\\
\bar{T}&=&\operatorname{diag}(1,e^{i\pi/4},I_{2^9-2}).
\end{eqnarray}
These operators are unitary on the full physical Hilbert space and reproduce the desired logical action on the Shor code space. Thus, the non-Clifford logical action can be specified directly as a unitary extension on the physical Hilbert space, rather than through magic-state injection. Since the logical encryption acts as $\bar X^a\bar Z^b$ on the encoded qubit, the encrypted logical state remains within the code space. The unitary extensions $\bar S$ and $\bar T$ restrict to the desired logical $S$ and $T$ gates on this subspace, and hence are compatible with the logical-encryption part of the protocol. The action of these extensions on error spaces, and their synthesis into an efficient and fault-tolerant physical circuit, remain separate implementation questions.
\subsection{Homomorphic encryption scheme}
Once the gates $\bar{X}$, $\bar{Z}$, $\bar{S}$, and $\bar{T}$ have been specified, the application of the logical $\bar{T}$ gate to a logical qubit can be performed following the conventional gate-teleportation steps. The steps of the process are described below in detail.\\
\textit{Setup.---} Since we want to apply only a single logical $\bar{T}$ gate to a single logical qubit, the client generates one encoded logical Bell state, $\ket{\bar{\Phi}}_{s_pc_p}=(\ket{\bar{0}}\ket{\bar{0}}+\ket{\bar{1}}\ket{\bar{1}})/\sqrt{2}$. This state involves a total of 18 physical qubits. The first and second blocks of 9 qubits are denoted by $s_p$ and $c_p$, respectively, where the subscript $p$ indicates physical encoded blocks. Let the initial logical state on which the client wants to apply the logical $\bar{T}$ gate be encoded using the Shor code in the physical state $\ket{\bar{\psi}}_{w_p}$, with 9 physical qubits. Here the system of the 9 physical qubits is jointly denoted by $w_p$.\\
    \textit{Encryption.---} To encrypt the data, the client generates a pair of binary numbers $(a,b)$ and applies the gates $\bar{X}^a$ and $\bar{Z}^b$ on the state, $\ket{\bar{\psi}}_{w_p}$. 
    The final encrypted state is $\ket{\bar{\psi}^{\text{enc}}}_{w_p}=\bar{X}^a\bar{Z}^b\ket{\bar{\psi}}_{w_p}$. At the end of the encryption process, the client sends this encrypted block and the Bell-pair block $s_p$ to the server, while keeping the block $c_p$.\\
  \textit{Evaluation.---} 
The server receives in total 18 physical qubits, 9 of which are prepared in the encrypted state $\ket{\bar{\psi}^{\text{enc}}}_{w_p}$ and the remaining qubits form the system $s_p$ and are part of the state $\ket{\bar{\Phi}}_{s_pc_p}$.
Assuming that a unitary physical implementation of the logical gate $\bar{T}$ is available, the server now applies it on the encrypted state, $\ket{\bar{\psi}^{\text{enc}}}_{w_p}$, and swaps this state of the system of 9 qubits, $w_p$, with the 9 physical qubits of $s_p$. The final joint state of the 27 physical qubits at the end of this process is
  \begin{equation}     \bar{T}\bar{X}^a\bar{Z}^b\ket{\bar{\psi}}_{s_p} \ket{\bar{\Phi}}_{w_pc_p}. 
  \end{equation}
The server sends the two blocks in its possession, $w_p$ and $s_p$, back to the client; the block $c_p$ has remained with the client throughout the protocol.\\
  \textit{Decryption.---}
  By construction, on the code space, the gates $\bar{X}$, $\bar{Z}$, $\bar{S}$, and $\bar{T}$ act as the usual Pauli $X$, Pauli $Z$, $S$, and $T$ gates. Therefore, the gates follow the same commutation relations in this code space as shown in Figure \ref{fig2}. Using these commutation properties, and ignoring irrelevant global phases, the joint state of $w_p$, $s_p$, and $c_p$ that currently belongs to the client can be written as
\begin{eqnarray}&&\bar{T}\bar{X}^a\bar{Z}^b\ket{\bar{\psi}}_{s_p} \ket{\bar{\Phi}}_{w_pc_p}\nonumber\\
&&\simeq(\bar{S}^\dagger)^a\bar{X}^a\bar{Z}^{a\oplus b}\bar{T}\ket{\bar{\psi}}_{s_p} \ket{\bar{\Phi}}_{w_pc_p}\nonumber\\
&&=\frac{1}{2}\sum_{r_a,r_b} \ket{\Phi(\bar{S}^{a})_{r_ar_b}}_{s_pc_p}[\bar{X}^{a\oplus r_a}\bar{Z}^{a\oplus b\oplus r_b} \bar{T}]\ket{\bar{\psi}}_{w_p},\nonumber
  \end{eqnarray}
 where $\simeq$ denotes equality up to a global phase, and $r_a$ and $r_b$ are binary numbers.
 At this point, to decrypt the information, the client applies a measurement in the logical rotated Bell basis $\{\ket{\Phi(\bar S^{a})_{r_ar_b}}_{s_pc_p}\}_{r_a,r_b}$ on the joint system $s_pc_p$. This is an orthonormal logical Bell basis on the two encoded blocks. If the output of the measurement is $r_a$ and $r_b$, then the corresponding output state is
 \begin{equation}
 \ket{\bar{\psi}_{out}^{\text{enc}}}_{w_p}=\bar{X}^{a\oplus r_a}\bar{Z}^{a\oplus b\oplus r_b}\bar{T}\ket{\bar{\psi}}_{w_p}.
 \end{equation}
 Finally, knowing all the binary numbers $a$, $b$, $r_a$, and $r_b$, the client decrypts the state by applying the physical representatives of $\bar{Z}^{a\oplus b\oplus r_b}\bar{X}^{a\oplus r_a}$ on $\ket{\bar{\psi}_{out}^{\text{enc}}}_{w_p}$ and obtains the state $\bar{T}\ket{\bar{\psi}}_{w_p}$, which is the targeted state.

\bibliography{HE_QEC}

\begin{thebibliography}{48}%
\makeatletter
\providecommand \@ifxundefined [1]{%
 \@ifx{#1\undefined}
}%
\providecommand \@ifnum [1]{%
 \ifnum #1\expandafter \@firstoftwo
 \else \expandafter \@secondoftwo
 \fi
}%
\providecommand \@ifx [1]{%
 \ifx #1\expandafter \@firstoftwo
 \else \expandafter \@secondoftwo
 \fi
}%
\providecommand \natexlab [1]{#1}%
\providecommand \enquote  [1]{``#1''}%
\providecommand \bibnamefont  [1]{#1}%
\providecommand \bibfnamefont [1]{#1}%
\providecommand \citenamefont [1]{#1}%
\providecommand \href@noop [0]{\@secondoftwo}%
\providecommand \href [0]{\begingroup \@sanitize@url \@href}%
\providecommand \@href[1]{\@@startlink{#1}\@@href}%
\providecommand \@@href[1]{\endgroup#1\@@endlink}%
\providecommand \@sanitize@url [0]{\catcode `\\12\catcode `\$12\catcode `\&12\catcode `\#12\catcode `\^12\catcode `\_12\catcode `\%12\relax}%
\providecommand \@@startlink[1]{}%
\providecommand \@@endlink[0]{}%
\providecommand \url  [0]{\begingroup\@sanitize@url \@url }%
\providecommand \@url [1]{\endgroup\@href {#1}{\urlprefix }}%
\providecommand \urlprefix  [0]{URL }%
\providecommand \Eprint [0]{\href }%
\providecommand \doibase [0]{http://dx.doi.org/}%
\providecommand \selectlanguage [0]{\@gobble}%
\providecommand \bibinfo  [0]{\@secondoftwo}%
\providecommand \bibfield  [0]{\@secondoftwo}%
\providecommand \translation [1]{[#1]}%
\providecommand \BibitemOpen [0]{}%
\providecommand \bibitemStop [0]{}%
\providecommand \bibitemNoStop [0]{.\EOS\space}%
\providecommand \EOS [0]{\spacefactor3000\relax}%
\providecommand \BibitemShut  [1]{\csname bibitem#1\endcsname}%
\let\auto@bib@innerbib\@empty
\bibitem [{\citenamefont {Nielsen}\ and\ \citenamefont {Chuang}(2000)}]{NielsenChuang2000}%
  \BibitemOpen
  \bibfield  {author} {\bibinfo {author} {\bibfnamefont {M.~A.}\ \bibnamefont {Nielsen}}\ and\ \bibinfo {author} {\bibfnamefont {I.~L.}\ \bibnamefont {Chuang}},\ }\href@noop {} {\emph {\bibinfo {title} {Quantum Computation and Quantum Information}}}\ (\bibinfo  {publisher} {Cambridge University Press},\ \bibinfo {address} {Cambridge},\ \bibinfo {year} {2000})\BibitemShut {NoStop}%
\bibitem [{\citenamefont {Galindo}\ and\ \citenamefont {Martin-Delgado}(2002)}]{GalindoMartinDelgado2002RMP}%
  \BibitemOpen
  \bibfield  {author} {\bibinfo {author} {\bibfnamefont {A.}~\bibnamefont {Galindo}}\ and\ \bibinfo {author} {\bibfnamefont {M.~A.}\ \bibnamefont {Martin-Delgado}},\ }\bibfield  {title} {\enquote {\bibinfo {title} {Information and computation: Classical and quantum aspects},}\ }\href {\doibase 10.1103/RevModPhys.74.347} {\bibfield  {journal} {\bibinfo  {journal} {Rev. Mod. Phys.}\ }\textbf {\bibinfo {volume} {74}},\ \bibinfo {pages} {347--423} (\bibinfo {year} {2002})}\BibitemShut {NoStop}%
\bibitem [{\citenamefont {DiVincenzo}(2000)}]{DiVincenzo_2000}%
  \BibitemOpen
  \bibfield  {author} {\bibinfo {author} {\bibfnamefont {D.~P.}\ \bibnamefont {DiVincenzo}},\ }\bibfield  {title} {\enquote {\bibinfo {title} {The physical implementation of quantum computation},}\ }\href {\doibase 10.1002/1521-3978(200009)48:9/11<771::aid-prop771>3.0.co;2-e} {\bibfield  {journal} {\bibinfo  {journal} {Fortschritte der Physik}\ }\textbf {\bibinfo {volume} {48}},\ \bibinfo {pages} {771} (\bibinfo {year} {2000})}\BibitemShut {NoStop}%
\bibitem [{\citenamefont {Devoret}\ \emph {et~al.}()\citenamefont {Devoret}, \citenamefont {Wallraff},\ and\ \citenamefont {Martinis}}]{superqubits}%
  \BibitemOpen
  \bibfield  {author} {\bibinfo {author} {\bibfnamefont {M.~H.}\ \bibnamefont {Devoret}}, \bibinfo {author} {\bibfnamefont {A.}~\bibnamefont {Wallraff}}, \ and\ \bibinfo {author} {\bibfnamefont {J.~M.}\ \bibnamefont {Martinis}},\ }\bibfield  {title} {\enquote {\bibinfo {title} {Superconducting qubits: A short review},}\ }\href {https://arxiv.org/abs/cond-mat/0411174} {\ }\Eprint {http://arxiv.org/abs/cond-mat/0411174} {arXiv:cond-mat/0411174} \BibitemShut {NoStop}%
\bibitem [{\citenamefont {Garc\'{\i}a-Ripoll}\ \emph {et~al.}(2008)\citenamefont {Garc\'{\i}a-Ripoll}, \citenamefont {Solano},\ and\ \citenamefont {Martin-Delgado}}]{PhysRevB77024522}%
  \BibitemOpen
  \bibfield  {author} {\bibinfo {author} {\bibfnamefont {J.~J.}\ \bibnamefont {Garc\'{\i}a-Ripoll}}, \bibinfo {author} {\bibfnamefont {E.}~\bibnamefont {Solano}}, \ and\ \bibinfo {author} {\bibfnamefont {M.~A.}\ \bibnamefont {Martin-Delgado}},\ }\bibfield  {title} {\enquote {\bibinfo {title} {Quantum simulation of {A}nderson and {K}ondo lattices with superconducting qubits},}\ }\href {\doibase 10.1103/PhysRevB.77.024522} {\bibfield  {journal} {\bibinfo  {journal} {Phys. Rev. B}\ }\textbf {\bibinfo {volume} {77}},\ \bibinfo {pages} {024522} (\bibinfo {year} {2008})}\BibitemShut {NoStop}%
\bibitem [{\citenamefont {Devoret}\ and\ \citenamefont {Schoelkopf}(2013)}]{qc}%
  \BibitemOpen
  \bibfield  {author} {\bibinfo {author} {\bibfnamefont {M.~H.}\ \bibnamefont {Devoret}}\ and\ \bibinfo {author} {\bibfnamefont {R.~J.}\ \bibnamefont {Schoelkopf}},\ }\bibfield  {title} {\enquote {\bibinfo {title} {Superconducting circuits for quantum information: An outlook},}\ }\href {\doibase 10.1126/science.1231930} {\bibfield  {journal} {\bibinfo  {journal} {Science}\ }\textbf {\bibinfo {volume} {339}},\ \bibinfo {pages} {1169} (\bibinfo {year} {2013})}\BibitemShut {NoStop}%
\bibitem [{\citenamefont {Wendin}(2017)}]{Wendin_2017}%
  \BibitemOpen
  \bibfield  {author} {\bibinfo {author} {\bibfnamefont {G.}~\bibnamefont {Wendin}},\ }\bibfield  {title} {\enquote {\bibinfo {title} {Quantum information processing with superconducting circuits: a review},}\ }\href {\doibase 10.1088/1361-6633/aa7e1a} {\bibfield  {journal} {\bibinfo  {journal} {Reports on Progress in Physics}\ }\textbf {\bibinfo {volume} {80}},\ \bibinfo {pages} {106001} (\bibinfo {year} {2017})}\BibitemShut {NoStop}%
\bibitem [{\citenamefont {Gambetta}\ \emph {et~al.}(2017)\citenamefont {Gambetta}, \citenamefont {Chow},\ and\ \citenamefont {Steffen}}]{Gambetta_2017}%
  \BibitemOpen
  \bibfield  {author} {\bibinfo {author} {\bibfnamefont {J.~M.}\ \bibnamefont {Gambetta}}, \bibinfo {author} {\bibfnamefont {J.~M.}\ \bibnamefont {Chow}}, \ and\ \bibinfo {author} {\bibfnamefont {M.}~\bibnamefont {Steffen}},\ }\bibfield  {title} {\enquote {\bibinfo {title} {Building logical qubits in a superconducting quantum computing system},}\ }\href {http://dx.doi.org/10.1038/s41534-016-0004-0} {\bibfield  {journal} {\bibinfo  {journal} {npj Quantum Information}\ }\textbf {\bibinfo {volume} {3}},\ \bibinfo {pages} {2} (\bibinfo {year} {2017})}\BibitemShut {NoStop}%
\bibitem [{\citenamefont {Lacroix}\ \emph {et~al.}(2025)\citenamefont {Lacroix}, \citenamefont {Bourassa}, \citenamefont {Heras}, \citenamefont {Zhang}, \citenamefont {Bausch}, \citenamefont {Senior}, \citenamefont {Satzinger} \emph {et~al.}}]{GoogleColorCode2025}%
  \BibitemOpen
  \bibfield  {author} {\bibinfo {author} {\bibfnamefont {N.}~\bibnamefont {Lacroix}}, \bibinfo {author} {\bibfnamefont {A.}~\bibnamefont {Bourassa}}, \bibinfo {author} {\bibfnamefont {F.~J.~H.}\ \bibnamefont {Heras}}, \bibinfo {author} {\bibfnamefont {L.~M.}\ \bibnamefont {Zhang}}, \bibinfo {author} {\bibfnamefont {J.}~\bibnamefont {Bausch}}, \bibinfo {author} {\bibfnamefont {A.~W.}\ \bibnamefont {Senior}}, \bibinfo {author} {\bibfnamefont {K.~J.}\ \bibnamefont {Satzinger}},  \emph {et~al.},\ }\bibfield  {title} {\enquote {\bibinfo {title} {Scaling and logic in the colour code on a superconducting quantum processor},}\ }\href {\doibase 10.1038/s41586-025-09061-4} {\bibfield  {journal} {\bibinfo  {journal} {Nature}\ }\textbf {\bibinfo {volume} {645}},\ \bibinfo {pages} {614--619} (\bibinfo {year} {2025})}\BibitemShut {NoStop}%
\bibitem [{\citenamefont {Cirac}\ and\ \citenamefont {Zoller}(1995)}]{PhysRevLett.74.4091}%
  \BibitemOpen
  \bibfield  {author} {\bibinfo {author} {\bibfnamefont {J.~I.}\ \bibnamefont {Cirac}}\ and\ \bibinfo {author} {\bibfnamefont {P.}~\bibnamefont {Zoller}},\ }\bibfield  {title} {\enquote {\bibinfo {title} {Quantum computations with cold trapped ions},}\ }\href {\doibase 10.1103/PhysRevLett.74.4091} {\bibfield  {journal} {\bibinfo  {journal} {Phys. Rev. Lett.}\ }\textbf {\bibinfo {volume} {74}},\ \bibinfo {pages} {4091} (\bibinfo {year} {1995})}\BibitemShut {NoStop}%
\bibitem [{\citenamefont {Blinov}\ \emph {et~al.}(2004)\citenamefont {Blinov}, \citenamefont {Leibfried}, \citenamefont {Monroe},\ and\ \citenamefont {Wineland}}]{trapped_ion}%
  \BibitemOpen
  \bibfield  {author} {\bibinfo {author} {\bibfnamefont {B.~B.}\ \bibnamefont {Blinov}}, \bibinfo {author} {\bibfnamefont {D.}~\bibnamefont {Leibfried}}, \bibinfo {author} {\bibfnamefont {C.}~\bibnamefont {Monroe}}, \ and\ \bibinfo {author} {\bibfnamefont {D.~J.}\ \bibnamefont {Wineland}},\ }\bibfield  {title} {\enquote {\bibinfo {title} {Quantum computing with trapped ion hyperfine qubits},}\ }\href {https://link.springer.com/article/10.1007/s11128-004-9417-3} {\bibfield  {journal} {\bibinfo  {journal} {Quantum Information Processing}\ }\textbf {\bibinfo {volume} {3}},\ \bibinfo {pages} {45} (\bibinfo {year} {2004})}\BibitemShut {NoStop}%
\bibitem [{\citenamefont {et~al.}(2023{\natexlab{a}})}]{PhysRevX.13.041052}%
  \BibitemOpen
  \bibfield  {author} {\bibinfo {author} {\bibfnamefont {S.~A.~Moses}\ \bibnamefont {et~al.}},\ }\bibfield  {title} {\enquote {\bibinfo {title} {A race-track trapped-ion quantum processor},}\ }\href {\doibase 10.1103/PhysRevX.13.041052} {\bibfield  {journal} {\bibinfo  {journal} {Phys. Rev. X}\ }\textbf {\bibinfo {volume} {13}},\ \bibinfo {pages} {041052} (\bibinfo {year} {2023}{\natexlab{a}})}\BibitemShut {NoStop}%
\bibitem [{\citenamefont {Nigg}\ \emph {et~al.}(2014)\citenamefont {Nigg}, \citenamefont {M{\"u}ller}, \citenamefont {Martinez}, \citenamefont {Schindler}, \citenamefont {Hennrich}, \citenamefont {Monz}, \citenamefont {Martin-Delgado},\ and\ \citenamefont {Blatt}}]{Nigg2014TopologicalQubit}%
  \BibitemOpen
  \bibfield  {author} {\bibinfo {author} {\bibfnamefont {D.}~\bibnamefont {Nigg}}, \bibinfo {author} {\bibfnamefont {M.}~\bibnamefont {M{\"u}ller}}, \bibinfo {author} {\bibfnamefont {E.~A.}\ \bibnamefont {Martinez}}, \bibinfo {author} {\bibfnamefont {P.}~\bibnamefont {Schindler}}, \bibinfo {author} {\bibfnamefont {M.}~\bibnamefont {Hennrich}}, \bibinfo {author} {\bibfnamefont {T.}~\bibnamefont {Monz}}, \bibinfo {author} {\bibfnamefont {M.~A.}\ \bibnamefont {Martin-Delgado}}, \ and\ \bibinfo {author} {\bibfnamefont {R.}~\bibnamefont {Blatt}},\ }\bibfield  {title} {\enquote {\bibinfo {title} {Quantum computations on a topologically encoded qubit},}\ }\href {\doibase 10.1126/science.1253742} {\bibfield  {journal} {\bibinfo  {journal} {Science}\ }\textbf {\bibinfo {volume} {345}},\ \bibinfo {pages} {302--305} (\bibinfo {year} {2014})}\BibitemShut {NoStop}%
\bibitem [{\citenamefont {Bloch}\ \emph {et~al.}(2012)\citenamefont {Bloch}, \citenamefont {Dalibard},\ and\ \citenamefont {Nascimbène}}]{cold_atom1}%
  \BibitemOpen
  \bibfield  {author} {\bibinfo {author} {\bibfnamefont {I.}~\bibnamefont {Bloch}}, \bibinfo {author} {\bibfnamefont {J.}~\bibnamefont {Dalibard}}, \ and\ \bibinfo {author} {\bibfnamefont {S.}~\bibnamefont {Nascimbène}},\ }\bibfield  {title} {\enquote {\bibinfo {title} {Quantum simulations with ultracold quantum gases},}\ }\href {https://www.nature.com/articles/nphys2259} {\bibfield  {journal} {\bibinfo  {journal} {Nature Physics}\ }\textbf {\bibinfo {volume} {8}},\ \bibinfo {pages} {267} (\bibinfo {year} {2012})}\BibitemShut {NoStop}%
\bibitem [{\citenamefont {et~al.}(2023{\natexlab{b}})}]{Bluvstein_2023}%
  \BibitemOpen
  \bibfield  {author} {\bibinfo {author} {\bibfnamefont {D.~Bluvstein}\ \bibnamefont {et~al.}},\ }\bibfield  {title} {\enquote {\bibinfo {title} {Logical quantum processor based on reconfigurable atom arrays},}\ }\href {\doibase 10.1038/s41586-023-06927-3} {\bibfield  {journal} {\bibinfo  {journal} {Nature}\ }\textbf {\bibinfo {volume} {626}},\ \bibinfo {pages} {58} (\bibinfo {year} {2023}{\natexlab{b}})}\BibitemShut {NoStop}%
\bibitem [{\citenamefont {Evered}\ \emph {et~al.}(2025)\citenamefont {Evered} \emph {et~al.}}]{Evered2025NeutralAtomArchitecture}%
  \BibitemOpen
  \bibfield  {author} {\bibinfo {author} {\bibfnamefont {S.~J.}\ \bibnamefont {Evered}} \emph {et~al.},\ }\bibfield  {title} {\enquote {\bibinfo {title} {A fault-tolerant neutral-atom architecture for universal quantum computation},}\ }\href {\doibase 10.1038/s41586-025-09848-5} {\bibfield  {journal} {\bibinfo  {journal} {Nature}\ } (\bibinfo {year} {2025}),\ 10.1038/s41586-025-09848-5}\BibitemShut {NoStop}%
\bibitem [{\citenamefont {Ouyang}\ and\ \citenamefont {Rohde}(2022)}]{OuyangRohde2022}%
  \BibitemOpen
  \bibfield  {author} {\bibinfo {author} {\bibfnamefont {Y.}~\bibnamefont {Ouyang}}\ and\ \bibinfo {author} {\bibfnamefont {P.~P.}\ \bibnamefont {Rohde}},\ }\bibfield  {title} {\enquote {\bibinfo {title} {Permutational-key quantum homomorphic encryption with homomorphic quantum error-correction},}\ }\href {\doibase 10.48550/arXiv.2204.10471} {\bibfield  {journal} {\bibinfo  {journal} {arXiv preprint arXiv:2204.10471}\ } (\bibinfo {year} {2022}),\ 10.48550/arXiv.2204.10471}\BibitemShut {NoStop}%
\bibitem [{\citenamefont {Sohn}\ \emph {et~al.}(2024)\citenamefont {Sohn}, \citenamefont {Kim}, \citenamefont {Bae},\ and\ \citenamefont {Lee}}]{Sohn2024ErrorCorrectable}%
  \BibitemOpen
  \bibfield  {author} {\bibinfo {author} {\bibfnamefont {I.}~\bibnamefont {Sohn}}, \bibinfo {author} {\bibfnamefont {B.}~\bibnamefont {Kim}}, \bibinfo {author} {\bibfnamefont {K.}~\bibnamefont {Bae}}, \ and\ \bibinfo {author} {\bibfnamefont {W.}~\bibnamefont {Lee}},\ }\bibfield  {title} {\enquote {\bibinfo {title} {Error correctable efficient quantum homomorphic encryption},}\ }\href {\doibase 10.48550/arXiv.2401.08059} {\bibfield  {journal} {\bibinfo  {journal} {arXiv preprint arXiv:2401.08059}\ } (\bibinfo {year} {2024}),\ 10.48550/arXiv.2401.08059}\BibitemShut {NoStop}%
\bibitem [{\citenamefont {Gottesman}()}]{QEC1}%
  \BibitemOpen
  \bibfield  {author} {\bibinfo {author} {\bibfnamefont {D.}~\bibnamefont {Gottesman}},\ }\bibfield  {title} {\enquote {\bibinfo {title} {An introduction to quantum error correction and fault-tolerant quantum computation},}\ }\href {https://arxiv.org/abs/0904.2557} {\ }\Eprint {http://arxiv.org/abs/0904.2557} {arXiv:0904.2557} \BibitemShut {NoStop}%
\bibitem [{\citenamefont {Roffe}(2019)}]{Roffe_2019}%
  \BibitemOpen
  \bibfield  {author} {\bibinfo {author} {\bibfnamefont {J.}~\bibnamefont {Roffe}},\ }\bibfield  {title} {\enquote {\bibinfo {title} {Quantum error correction: an introductory guide},}\ }\href {http://dx.doi.org/10.1080/00107514.2019.1667078} {\bibfield  {journal} {\bibinfo  {journal} {Contemporary Physics}\ }\textbf {\bibinfo {volume} {60}},\ \bibinfo {pages} {226} (\bibinfo {year} {2019})}\BibitemShut {NoStop}%
\bibitem [{\citenamefont {Shor}(1995)}]{Shor}%
  \BibitemOpen
  \bibfield  {author} {\bibinfo {author} {\bibfnamefont {P.~W.}\ \bibnamefont {Shor}},\ }\bibfield  {title} {\enquote {\bibinfo {title} {Scheme for reducing decoherence in quantum computer memory},}\ }\href {https://link.aps.org/doi/10.1103/PhysRevA.52.R2493} {\bibfield  {journal} {\bibinfo  {journal} {Phys. Rev. A}\ }\textbf {\bibinfo {volume} {52}},\ \bibinfo {pages} {R2493} (\bibinfo {year} {1995})}\BibitemShut {NoStop}%
\bibitem [{\citenamefont {Laflamme}\ \emph {et~al.}(1996)\citenamefont {Laflamme}, \citenamefont {Miquel}, \citenamefont {Paz},\ and\ \citenamefont {Zurek}}]{fivequbit}%
  \BibitemOpen
  \bibfield  {author} {\bibinfo {author} {\bibfnamefont {R.}~\bibnamefont {Laflamme}}, \bibinfo {author} {\bibfnamefont {C.}~\bibnamefont {Miquel}}, \bibinfo {author} {\bibfnamefont {J.~P.}\ \bibnamefont {Paz}}, \ and\ \bibinfo {author} {\bibfnamefont {W.~H.}\ \bibnamefont {Zurek}},\ }\bibfield  {title} {\enquote {\bibinfo {title} {Perfect quantum error correcting code},}\ }\href {https://link.aps.org/doi/10.1103/PhysRevLett.77.198} {\bibfield  {journal} {\bibinfo  {journal} {Phys. Rev. Lett.}\ }\textbf {\bibinfo {volume} {77}},\ \bibinfo {pages} {198} (\bibinfo {year} {1996})}\BibitemShut {NoStop}%
\bibitem [{\citenamefont {Gottesman}(1996)}]{stabilizer}%
  \BibitemOpen
  \bibfield  {author} {\bibinfo {author} {\bibfnamefont {D.}~\bibnamefont {Gottesman}},\ }\bibfield  {title} {\enquote {\bibinfo {title} {Class of quantum error-correcting codes saturating the quantum hamming bound},}\ }\href {https://link.aps.org/doi/10.1103/PhysRevA.54.1862} {\bibfield  {journal} {\bibinfo  {journal} {Phys. Rev. A}\ }\textbf {\bibinfo {volume} {54}},\ \bibinfo {pages} {1862} (\bibinfo {year} {1996})}\BibitemShut {NoStop}%
\bibitem [{\citenamefont {Bravyi}\ and\ \citenamefont {Kitaev}()}]{toriccode1}%
  \BibitemOpen
  \bibfield  {author} {\bibinfo {author} {\bibfnamefont {S.~B.}\ \bibnamefont {Bravyi}}\ and\ \bibinfo {author} {\bibfnamefont {A.~Y.}\ \bibnamefont {Kitaev}},\ }\bibfield  {title} {\enquote {\bibinfo {title} {Quantum codes on a lattice with boundary},}\ }\href {https://arxiv.org/abs/quant-ph/9811052} {\ }\Eprint {http://arxiv.org/abs/quant-ph/9811052} {arXiv:quant-ph/9811052} \BibitemShut {NoStop}%
\bibitem [{\citenamefont {Kitaev}(2003)}]{toriccode2}%
  \BibitemOpen
  \bibfield  {author} {\bibinfo {author} {\bibfnamefont {A.~Y.}\ \bibnamefont {Kitaev}},\ }\bibfield  {title} {\enquote {\bibinfo {title} {Fault-tolerant quantum computation by anyons},}\ }\href {https://www.sciencedirect.com/science/article/pii/S0003491602000180} {\bibfield  {journal} {\bibinfo  {journal} {Annals of Physics}\ }\textbf {\bibinfo {volume} {303}},\ \bibinfo {pages} {2} (\bibinfo {year} {2003})}\BibitemShut {NoStop}%
\bibitem [{\citenamefont {Xu}\ \emph {et~al.}()\citenamefont {Xu}, \citenamefont {Zhong}, \citenamefont {Martin-Delgado}, \citenamefont {Song},\ and\ \citenamefont {Liu}}]{toriccode3}%
  \BibitemOpen
  \bibfield  {author} {\bibinfo {author} {\bibfnamefont {J.-Z.}\ \bibnamefont {Xu}}, \bibinfo {author} {\bibfnamefont {Y.}~\bibnamefont {Zhong}}, \bibinfo {author} {\bibfnamefont {M.~A.}\ \bibnamefont {Martin-Delgado}}, \bibinfo {author} {\bibfnamefont {H.}~\bibnamefont {Song}}, \ and\ \bibinfo {author} {\bibfnamefont {K.}~\bibnamefont {Liu}},\ }\bibfield  {title} {\enquote {\bibinfo {title} {Phenomenological noise models and optimal thresholds of the 3{D} toric code},}\ }\href {https://arxiv.org/abs/2510.20489} {\ }\Eprint {http://arxiv.org/abs/2510.20489} {arXiv:2510.20489} \BibitemShut {NoStop}%
\bibitem [{\citenamefont {Bombin}\ and\ \citenamefont {Martin-Delgado}(2006)}]{colorcode}%
  \BibitemOpen
  \bibfield  {author} {\bibinfo {author} {\bibfnamefont {H.}~\bibnamefont {Bombin}}\ and\ \bibinfo {author} {\bibfnamefont {M.~A.}\ \bibnamefont {Martin-Delgado}},\ }\bibfield  {title} {\enquote {\bibinfo {title} {Topological quantum distillation},}\ }\href {https://link.aps.org/doi/10.1103/PhysRevLett.97.180501} {\bibfield  {journal} {\bibinfo  {journal} {Phys. Rev. Lett.}\ }\textbf {\bibinfo {volume} {97}},\ \bibinfo {pages} {180501} (\bibinfo {year} {2006})}\BibitemShut {NoStop}%
\bibitem [{\citenamefont {Steane}(1996)}]{CSS1}%
  \BibitemOpen
  \bibfield  {author} {\bibinfo {author} {\bibfnamefont {A.}~\bibnamefont {Steane}},\ }\bibfield  {title} {\enquote {\bibinfo {title} {Multiple-particle interference and quantum error correction},}\ }\href {http://dx.doi.org/10.1098/rspa.1996.0136} {\bibfield  {journal} {\bibinfo  {journal} {Proceedings of the Royal Society of London. Series A: Mathematical, Physical and Engineering Sciences}\ }\textbf {\bibinfo {volume} {452}},\ \bibinfo {pages} {2551} (\bibinfo {year} {1996})}\BibitemShut {NoStop}%
\bibitem [{\citenamefont {Calderbank}\ and\ \citenamefont {Shor}(1996)}]{CSS2}%
  \BibitemOpen
  \bibfield  {author} {\bibinfo {author} {\bibfnamefont {A.~R.}\ \bibnamefont {Calderbank}}\ and\ \bibinfo {author} {\bibfnamefont {P.~W.}\ \bibnamefont {Shor}},\ }\bibfield  {title} {\enquote {\bibinfo {title} {Good quantum error-correcting codes exist},}\ }\href {https://link.aps.org/doi/10.1103/PhysRevA.54.1098} {\bibfield  {journal} {\bibinfo  {journal} {Phys. Rev. A}\ }\textbf {\bibinfo {volume} {54}},\ \bibinfo {pages} {1098} (\bibinfo {year} {1996})}\BibitemShut {NoStop}%
\bibitem [{\citenamefont {Rohde}\ \emph {et~al.}(2012)\citenamefont {Rohde}, \citenamefont {Fitzsimons},\ and\ \citenamefont {Gilchrist}}]{PhysRevLett109150501}%
  \BibitemOpen
  \bibfield  {author} {\bibinfo {author} {\bibfnamefont {P.~P.}\ \bibnamefont {Rohde}}, \bibinfo {author} {\bibfnamefont {J.~F.}\ \bibnamefont {Fitzsimons}}, \ and\ \bibinfo {author} {\bibfnamefont {A.}~\bibnamefont {Gilchrist}},\ }\bibfield  {title} {\enquote {\bibinfo {title} {Quantum walks with encrypted data},}\ }\href {\doibase 10.1103/PhysRevLett.109.150501} {\bibfield  {journal} {\bibinfo  {journal} {Phys. Rev. Lett.}\ }\textbf {\bibinfo {volume} {109}},\ \bibinfo {pages} {150501} (\bibinfo {year} {2012})}\BibitemShut {NoStop}%
\bibitem [{\citenamefont {Liang}(2013)}]{Liang_2013}%
  \BibitemOpen
  \bibfield  {author} {\bibinfo {author} {\bibfnamefont {M.}~\bibnamefont {Liang}},\ }\bibfield  {title} {\enquote {\bibinfo {title} {Symmetric quantum fully homomorphic encryption with perfect security},}\ }\href {\doibase 10.1007/s11128-013-0626-5} {\bibfield  {journal} {\bibinfo  {journal} {Quantum Information Processing}\ }\textbf {\bibinfo {volume} {12}},\ \bibinfo {pages} {3675} (\bibinfo {year} {2013})}\BibitemShut {NoStop}%
\bibitem [{\citenamefont {Tan}\ \emph {et~al.}(2016)\citenamefont {Tan}, \citenamefont {Kettlewell}, \citenamefont {Ouyang}, \citenamefont {Chen},\ and\ \citenamefont {Fitzsimons}}]{Tan_2016}%
  \BibitemOpen
  \bibfield  {author} {\bibinfo {author} {\bibfnamefont {S.-H.}\ \bibnamefont {Tan}}, \bibinfo {author} {\bibfnamefont {J.~A.}\ \bibnamefont {Kettlewell}}, \bibinfo {author} {\bibfnamefont {Y.}~\bibnamefont {Ouyang}}, \bibinfo {author} {\bibfnamefont {L.}~\bibnamefont {Chen}}, \ and\ \bibinfo {author} {\bibfnamefont {J.~F.}\ \bibnamefont {Fitzsimons}},\ }\bibfield  {title} {\enquote {\bibinfo {title} {A quantum approach to homomorphic encryption},}\ }\href {http://dx.doi.org/10.1038/srep33467} {\bibfield  {journal} {\bibinfo  {journal} {Scientific Reports}\ }\textbf {\bibinfo {volume} {6}},\ \bibinfo {pages} {33467} (\bibinfo {year} {2016})}\BibitemShut {NoStop}%
\bibitem [{\citenamefont {Liang}(2020)}]{Liang30}%
  \BibitemOpen
  \bibfield  {author} {\bibinfo {author} {\bibfnamefont {M.}~\bibnamefont {Liang}},\ }\bibfield  {title} {\enquote {\bibinfo {title} {Teleportation-based quantum homomorphic encryption scheme with quasi-compactness and perfect security},}\ }\href {https://link.springer.com/article/10.1007/s11128-019-2529-6} {\bibfield  {journal} {\bibinfo  {journal} {Quantum Information Processing}\ }\textbf {\bibinfo {volume} {19}},\ \bibinfo {pages} {28} (\bibinfo {year} {2020})}\BibitemShut {NoStop}%
\bibitem [{\citenamefont {Fern\'andez}\ and\ \citenamefont {Martin-Delgado}(2024)}]{Grover}%
  \BibitemOpen
  \bibfield  {author} {\bibinfo {author} {\bibfnamefont {P.}~\bibnamefont {Fern\'andez}}\ and\ \bibinfo {author} {\bibfnamefont {M.~A.}\ \bibnamefont {Martin-Delgado}},\ }\bibfield  {title} {\enquote {\bibinfo {title} {Implementing the {G}rover algorithm in homomorphic encryption schemes},}\ }\href {\doibase 10.1103/PhysRevResearch.6.043109} {\bibfield  {journal} {\bibinfo  {journal} {Phys. Rev. Res.}\ }\textbf {\bibinfo {volume} {6}},\ \bibinfo {pages} {043109} (\bibinfo {year} {2024})}\BibitemShut {NoStop}%
\bibitem [{\citenamefont {Fernández}\ and\ \citenamefont {Martin-Delgado}()}]{BV}%
  \BibitemOpen
  \bibfield  {author} {\bibinfo {author} {\bibfnamefont {P.}~\bibnamefont {Fernández}}\ and\ \bibinfo {author} {\bibfnamefont {M.~A.}\ \bibnamefont {Martin-Delgado}},\ }\bibfield  {title} {\enquote {\bibinfo {title} {Homomorphic encryption of the k=2 {B}ernstein-{V}azirani algorithm},}\ }\href {https://arxiv.org/abs/2303.17426} {\ }\Eprint {http://arxiv.org/abs/2303.17426} {arXiv:2303.17426} \BibitemShut {NoStop}%
\bibitem [{\citenamefont {Ortega}\ \emph {et~al.}(2025)\citenamefont {Ortega}, \citenamefont {Fernández},\ and\ \citenamefont {Martin-Delgado}}]{SQW}%
  \BibitemOpen
  \bibfield  {author} {\bibinfo {author} {\bibfnamefont {S.~A.}\ \bibnamefont {Ortega}}, \bibinfo {author} {\bibfnamefont {P.}~\bibnamefont {Fernández}}, \ and\ \bibinfo {author} {\bibfnamefont {M.~A.}\ \bibnamefont {Martin-Delgado}},\ }\bibfield  {title} {\enquote {\bibinfo {title} {Implementing semiclassical {S}zegedy walks in classical-quantum circuits for homomorphic encryption},}\ }\href {\doibase 10.1088/2632-072x/add3aa} {\bibfield  {journal} {\bibinfo  {journal} {Journal of Physics: Complexity}\ }\textbf {\bibinfo {volume} {6}},\ \bibinfo {pages} {025010} (\bibinfo {year} {2025})}\BibitemShut {NoStop}%
\bibitem [{\citenamefont {Gong}\ \emph {et~al.}(2020)\citenamefont {Gong}, \citenamefont {Du}, \citenamefont {Dong}, \citenamefont {Guo}, \citenamefont {Gani}, \citenamefont {Zhao},\ and\ \citenamefont {Qi}}]{Grover2}%
  \BibitemOpen
  \bibfield  {author} {\bibinfo {author} {\bibfnamefont {C.}~\bibnamefont {Gong}}, \bibinfo {author} {\bibfnamefont {J.}~\bibnamefont {Du}}, \bibinfo {author} {\bibfnamefont {Z.}~\bibnamefont {Dong}}, \bibinfo {author} {\bibfnamefont {Z.}~\bibnamefont {Guo}}, \bibinfo {author} {\bibfnamefont {A.}~\bibnamefont {Gani}}, \bibinfo {author} {\bibfnamefont {L.}~\bibnamefont {Zhao}}, \ and\ \bibinfo {author} {\bibfnamefont {H.}~\bibnamefont {Qi}},\ }\bibfield  {title} {\enquote {\bibinfo {title} {Grover algorithm-based quantum homomorphic encryption ciphertext retrieval scheme in quantum cloud computing},}\ }\href {https://link.springer.com/article/10.1007/s11128-020-2603-0#citeas} {\bibfield  {journal} {\bibinfo  {journal} {Quantum Inf Process}\ }\textbf {\bibinfo {volume} {19}},\ \bibinfo {pages} {105} (\bibinfo {year} {2020})}\BibitemShut {NoStop}%
\bibitem [{\citenamefont {Zeuner}\ \emph {et~al.}()\citenamefont {Zeuner}, \citenamefont {Pitsios}, \citenamefont {Tan}, \citenamefont {Sharma}, \citenamefont {Fitzsimons}, \citenamefont {Osellame},\ and\ \citenamefont {Walther}}]{exp1}%
  \BibitemOpen
  \bibfield  {author} {\bibinfo {author} {\bibfnamefont {J.}~\bibnamefont {Zeuner}}, \bibinfo {author} {\bibfnamefont {I.}~\bibnamefont {Pitsios}}, \bibinfo {author} {\bibfnamefont {S.-H.}\ \bibnamefont {Tan}}, \bibinfo {author} {\bibfnamefont {A.~N.}\ \bibnamefont {Sharma}}, \bibinfo {author} {\bibfnamefont {J.~F.}\ \bibnamefont {Fitzsimons}}, \bibinfo {author} {\bibfnamefont {R.}~\bibnamefont {Osellame}}, \ and\ \bibinfo {author} {\bibfnamefont {P.}~\bibnamefont {Walther}},\ }\bibfield  {title} {\enquote {\bibinfo {title} {Experimental quantum homomorphic encryption},}\ }\href {https://arxiv.org/abs/1803.10246} {\ }\Eprint {http://arxiv.org/abs/1803.10246} {arXiv:1803.10246} \BibitemShut {NoStop}%
\bibitem [{\citenamefont {Tham}\ \emph {et~al.}(2020)\citenamefont {Tham}, \citenamefont {Ferretti}, \citenamefont {Bonsma-Fisher}, \citenamefont {Brodutch}, \citenamefont {Sanders}, \citenamefont {Steinberg},\ and\ \citenamefont {Jeffery}}]{exp2}%
  \BibitemOpen
  \bibfield  {author} {\bibinfo {author} {\bibfnamefont {W.~K.}\ \bibnamefont {Tham}}, \bibinfo {author} {\bibfnamefont {H.}~\bibnamefont {Ferretti}}, \bibinfo {author} {\bibfnamefont {K.}~\bibnamefont {Bonsma-Fisher}}, \bibinfo {author} {\bibfnamefont {A.}~\bibnamefont {Brodutch}}, \bibinfo {author} {\bibfnamefont {B.~C.}\ \bibnamefont {Sanders}}, \bibinfo {author} {\bibfnamefont {A.~M.}\ \bibnamefont {Steinberg}}, \ and\ \bibinfo {author} {\bibfnamefont {S.}~\bibnamefont {Jeffery}},\ }\bibfield  {title} {\enquote {\bibinfo {title} {Experimental demonstration of quantum fully homomorphic encryption with application in a two-party secure protocol},}\ }\href {\doibase 10.1103/PhysRevX.10.011038} {\bibfield  {journal} {\bibinfo  {journal} {Phys. Rev. X}\ }\textbf {\bibinfo {volume} {10}},\ \bibinfo {pages} {011038} (\bibinfo {year} {2020})}\BibitemShut {NoStop}%
\bibitem [{\citenamefont {Bennett}\ \emph {et~al.}(1993)\citenamefont {Bennett}, \citenamefont {Brassard}, \citenamefont {Cr\'epeau}, \citenamefont {Jozsa}, \citenamefont {Peres},\ and\ \citenamefont {Wootters}}]{PhysRevLett.70.1895}%
  \BibitemOpen
  \bibfield  {author} {\bibinfo {author} {\bibfnamefont {C.~H.}\ \bibnamefont {Bennett}}, \bibinfo {author} {\bibfnamefont {G.}~\bibnamefont {Brassard}}, \bibinfo {author} {\bibfnamefont {C.}~\bibnamefont {Cr\'epeau}}, \bibinfo {author} {\bibfnamefont {R.}~\bibnamefont {Jozsa}}, \bibinfo {author} {\bibfnamefont {A.}~\bibnamefont {Peres}}, \ and\ \bibinfo {author} {\bibfnamefont {W.~K.}\ \bibnamefont {Wootters}},\ }\bibfield  {title} {\enquote {\bibinfo {title} {Teleporting an unknown quantum state via dual classical and {E}instein-{P}odolsky-{R}osen channels},}\ }\href {\doibase 10.1103/PhysRevLett.70.1895} {\bibfield  {journal} {\bibinfo  {journal} {Phys. Rev. Lett.}\ }\textbf {\bibinfo {volume} {70}},\ \bibinfo {pages} {1895} (\bibinfo {year} {1993})}\BibitemShut {NoStop}%
\bibitem [{\citenamefont {Nielsen}\ and\ \citenamefont {Chuang}(2010)}]{Nielsen_Chuang_2010}%
  \BibitemOpen
  \bibfield  {author} {\bibinfo {author} {\bibfnamefont {M.~A.}\ \bibnamefont {Nielsen}}\ and\ \bibinfo {author} {\bibfnamefont {I.~L.}\ \bibnamefont {Chuang}},\ }\href@noop {} {\emph {\bibinfo {title} {Quantum Computation and Quantum Information: 10th Anniversary Edition}}}\ (\bibinfo  {publisher} {Cambridge University Press},\ \bibinfo {year} {2010})\BibitemShut {NoStop}%
\bibitem [{\citenamefont {Bravyi}\ and\ \citenamefont {Haah}(2012)}]{PhysRevA.86.052329}%
  \BibitemOpen
  \bibfield  {author} {\bibinfo {author} {\bibfnamefont {S.}~\bibnamefont {Bravyi}}\ and\ \bibinfo {author} {\bibfnamefont {J.}~\bibnamefont {Haah}},\ }\bibfield  {title} {\enquote {\bibinfo {title} {Magic-state distillation with low overhead},}\ }\href {\doibase 10.1103/PhysRevA.86.052329} {\bibfield  {journal} {\bibinfo  {journal} {Phys. Rev. A}\ }\textbf {\bibinfo {volume} {86}},\ \bibinfo {pages} {052329} (\bibinfo {year} {2012})}\BibitemShut {NoStop}%
\bibitem [{\citenamefont {Camps-Moreno}\ \emph {et~al.}(2024)\citenamefont {Camps-Moreno}, \citenamefont {López}, \citenamefont {Matthews}, \citenamefont {Ruano}, \citenamefont {San–José},\ and\ \citenamefont {Soprunov}}]{10735332}%
  \BibitemOpen
  \bibfield  {author} {\bibinfo {author} {\bibfnamefont {E.}~\bibnamefont {Camps-Moreno}}, \bibinfo {author} {\bibfnamefont {H.~H.}\ \bibnamefont {López}}, \bibinfo {author} {\bibfnamefont {G.~L.}\ \bibnamefont {Matthews}}, \bibinfo {author} {\bibfnamefont {D.}~\bibnamefont {Ruano}}, \bibinfo {author} {\bibfnamefont {R.}~\bibnamefont {San–José}}, \ and\ \bibinfo {author} {\bibfnamefont {I.}~\bibnamefont {Soprunov}},\ }\bibfield  {title} {\enquote {\bibinfo {title} {Binary triorthogonal and {CSS-T} codes for quantum error correction},}\ }\href {\doibase 10.1109/Allerton63246.2024.10735332} {\bibfield  {journal} {\bibinfo  {journal} {2024 60th Annual Allerton Conference on Communication, Control, and Computing}\ ,\ \bibinfo {pages} {01}} (\bibinfo {year} {2024})}\BibitemShut {NoStop}%
\bibitem [{\citenamefont {Shi}\ \emph {et~al.}()\citenamefont {Shi}, \citenamefont {Lu}, \citenamefont {Kim},\ and\ \citenamefont {Sole}}]{shi2024triorthogonalcodesselfdualcodes}%
  \BibitemOpen
  \bibfield  {author} {\bibinfo {author} {\bibfnamefont {M.}~\bibnamefont {Shi}}, \bibinfo {author} {\bibfnamefont {H.}~\bibnamefont {Lu}}, \bibinfo {author} {\bibfnamefont {J.-L.}\ \bibnamefont {Kim}}, \ and\ \bibinfo {author} {\bibfnamefont {P.}~\bibnamefont {Sole}},\ }\bibfield  {title} {\enquote {\bibinfo {title} {Triorthogonal codes and self-dual codes},}\ }\href {https://arxiv.org/abs/2408.09685} {\bibinfo  {journal} {arXiv:2408.09685}\ }\BibitemShut {NoStop}%
\bibitem [{\citenamefont {Steane}()}]{steane1996quantumreedmullercodes}%
  \BibitemOpen
\bibfield  {journal} {  }\bibfield  {author} {\bibinfo {author} {\bibfnamefont {A.}~\bibnamefont {Steane}},\ }\bibfield  {title} {\enquote {\bibinfo {title} {Quantum {R}eed-{M}uller codes},}\ }\href {https://arxiv.org/abs/quant-ph/9608026} {\ }\Eprint {http://arxiv.org/abs/quant-ph/9608026} {arXiv:quant-ph/9608026} \BibitemShut {NoStop}%
\bibitem [{\citenamefont {Bombin}\ and\ \citenamefont {Martin-Delgado}(2007{\natexlab{a}})}]{BombinMartinDelgado2007PRL}%
  \BibitemOpen
  \bibfield  {author} {\bibinfo {author} {\bibfnamefont {H.}~\bibnamefont {Bombin}}\ and\ \bibinfo {author} {\bibfnamefont {M.~A.}\ \bibnamefont {Martin-Delgado}},\ }\bibfield  {title} {\enquote {\bibinfo {title} {Topological computation without braiding},}\ }\href {\doibase 10.1103/PhysRevLett.98.160502} {\bibfield  {journal} {\bibinfo  {journal} {Phys. Rev. Lett.}\ }\textbf {\bibinfo {volume} {98}},\ \bibinfo {pages} {160502} (\bibinfo {year} {2007}{\natexlab{a}})}\BibitemShut {NoStop}%
\bibitem [{\citenamefont {Bombin}\ and\ \citenamefont {Martin-Delgado}(2007{\natexlab{b}})}]{BombinMartinDelgado2007PRB}%
  \BibitemOpen
  \bibfield  {author} {\bibinfo {author} {\bibfnamefont {H.}~\bibnamefont {Bombin}}\ and\ \bibinfo {author} {\bibfnamefont {M.~A.}\ \bibnamefont {Martin-Delgado}},\ }\bibfield  {title} {\enquote {\bibinfo {title} {Exact topological quantum order in ${D}=3$ and beyond: Branyons and brane-net condensates},}\ }\href {\doibase 10.1103/PhysRevB.75.075103} {\bibfield  {journal} {\bibinfo  {journal} {Phys. Rev. B}\ }\textbf {\bibinfo {volume} {75}},\ \bibinfo {pages} {075103} (\bibinfo {year} {2007}{\natexlab{b}})}\BibitemShut {NoStop}%
\bibitem [{\citenamefont {Kubica}\ and\ \citenamefont {Beverland}(2015)}]{KubicaBeverland2015}%
  \BibitemOpen
  \bibfield  {author} {\bibinfo {author} {\bibfnamefont {Aleksander}\ \bibnamefont {Kubica}}\ and\ \bibinfo {author} {\bibfnamefont {Michael~E.}\ \bibnamefont {Beverland}},\ }\bibfield  {title} {\enquote {\bibinfo {title} {Universal transversal gates with color codes: A simplified approach},}\ }\href {\doibase 10.1103/PhysRevA.91.032330} {\bibfield  {journal} {\bibinfo  {journal} {Phys. Rev. A}\ }\textbf {\bibinfo {volume} {91}},\ \bibinfo {pages} {032330} (\bibinfo {year} {2015})}\BibitemShut {NoStop}%
\end{thebibliography}%
\end{document}